\newcommand{\partitle}[1]{ \noindent \textbf{#1.}}
\newcommand{\eqdef}{\mathrel{\coloneqq}}
\newcommand{\projectname}{{\tt ERASER}}
\begin{document}

\newcolumntype{C}[1]{>{\centering\let\newline\\\arraybackslash\hspace{0pt}}m{#1}}
\newcolumntype{K}[1]{>{\centering\arraybackslash}p{#1}}


\makeatletter
\newsavebox\myboxA
\newsavebox\myboxB
\newlength\mylenA

\newcommand*\xbar[2][0.75]{%
    \sbox{\myboxA}{$\m@th#2$}%
    \setbox\myboxB\null
    \ht\myboxB=\ht\myboxA%
    \dp\myboxB=\dp\myboxA%
    \wd\myboxB=#1\wd\myboxA
    \sbox\myboxB{$\m@th\overline{\copy\myboxB}$}
    \setlength\mylenA{\the\wd\myboxA}
    \addtolength\mylenA{-\the\wd\myboxB}%
    \ifdim\wd\myboxB<\wd\myboxA%
       \rlap{\hskip 0.5\mylenA\usebox\myboxB}{\usebox\myboxA}%
    \else
        \hskip -0.5\mylenA\rlap{\usebox\myboxA}{\hskip 0.5\mylenA\usebox\myboxB}%
    \fi}
\makeatother

\renewcommand{\algorithmicrequire}{\textbf{Input:}}
\renewcommand{\algorithmicensure}{\textbf{Output:}}

\newcommand{\tens}[1]{\boldsymbol{\mathscr{#1}}}
\newcommand{\vect}[1]{\ensuremath{\mathbf{#1}}}
\newcommand{\mat}[1]{\ensuremath{\mathbf{#1}}}
\newcommand{\dd}{\mathrm{d}}
\newcommand{\grad}{\nabla}
\newcommand{\hess}{\nabla^2}
\newcommand{\pgrad}[1]{ \frac{\partial}{\partial{#1} }}
\newcommand{\argmin}{\mathop{\rm argmin}}
\newcommand{\argmax}{\mathop{\rm argmax}}
\newcommand{\argsup}{\mathop{\rm argsup}}

\newcommand{\fft}{ \mbox{\tt fft} }
\newcommand{\ifft}{ \mbox{\tt ifft} }
\newcommand{\svd}{ \mbox{\tt svd} }

\newcommand{\Unif}{\textrm{Unif}}
\newcommand{\abs}[1]{\left|{#1}\right|}
\newcommand{\norm}[1]{\left\|{#1}\right\|}
\newcommand{\fnorm}[1]{\|{#1}\|_{\text{F}}}
\newcommand{\infnorm}[1]{\|{#1}\|_\infty}
\newcommand{\spnorm}[2]{\left\| {#1} \right\|_{\text{S}({#2})}}
\newcommand{\sigmin}{\sigma_{\min}}
\newcommand{\tr}{\text{tr}}
\newcommand{\defeq}{\stackrel{\textrm{def}}{=}}
\renewcommand{\det}{\text{det}}
\newcommand{\rank}{\text{rank}}
\newcommand{\logdet}{\text{logdet}}
\newcommand{\trans}{^{\top}}
\newcommand{\poly}{\text{poly}}
\newcommand{\polylog}{\text{polylog}}
\newcommand{\st}{\text{s.t.~}}
\newcommand{\proj}{\mathcal{P}}

\newcommand{\Z}{\mathbb{Z}}
\newcommand{\N}{\mathbb{N}}
\newcommand{\R}{\mathbb{R}}
\newcommand{\Rd}{\mathbb{R}^d}
\newcommand{\Rdd}{\mathbb{R}^{d\times d}}
\newcommand{\Var}{\text{Var}}

\newcommand{\expect}{\mathbb{E}}
\newcommand{\prob}{\mathbb{P}}

\newcommand{\tA}{\tens{A}}
\newcommand{\tT}{\tens{T}}
\newcommand{\tB}{\tens{B}}
\newcommand{\tC}{\tens{C}}
\newcommand{\tD}{\tens{D}}
\newcommand{\tX}{\tens{X}}
\newcommand{\tL}{\tens{L}}
\newcommand{\tE}{\tens{E}}
\newcommand{\tU}{\tens{U}}
\newcommand{\tV}{\tens{V}}
\newcommand{\tS}{\tens{S}}
\newcommand{\tVH}{\tens{V}^{\mathbf{H}}}
\newcommand{\tM}{\tens{M}}
\newcommand{\tJ}{\tens{J}}
\newcommand{\tG}{\tens{G}}
\newcommand{\tI}{\tens{I}}
\newcommand{\tQ}{\tens{Q}}

\newcommand{\tO}{\tens{O}}

\newcommand{\tY}{\tens{Y}}

\newcommand{\td}{{\tt D}_{\w\w}}

\newcommand{\fracpar}[2]{\frac{\partial #1}{\partial  #2}}

\newcommand{\la}{\langle}
\newcommand{\ra}{\rangle}

\newcommand{\X}{\mat{X}}
\renewcommand{\S}{\mat{S}}

\renewcommand{\O}{\mat{O}}
\newcommand{\Q}{\mat{Q}}
\newcommand{\E}{\mat{E}}
\renewcommand{\H}{\mat{H}}






\newcommand{\vectsigma}{\bm{\sigma}}

\newcommand{\mLambda}{\mat{\Lambda}}
\newcommand{\e}{\vect{e}}
\renewcommand{\u}{\vect{u}}
\renewcommand{\v}{\bm{\nu}}
\newcommand{\vt}{\bm{\nu}_{t}}
\newcommand{\vtpp}{\bm{\nu}_{t+1}}
\newcommand{\p}{\vect{p}}
\newcommand{\g}{\vect{g}}
\renewcommand{\a}{\vect{a}}
\newcommand{\w}{\bm{\omega}}
\newcommand{\wu}{\bm{\omega}^u}
\newcommand{\vdelta}{\bm{\delta}}
\newcommand{\wt}{\bm{\omega}_{t}}
\newcommand{\wtpp}{\bm{\omega}_{t+1}}
\newcommand{\wk}{\bm{\omega}^{k-1}}
\newcommand{\wkpp}{\bm{\omega}^{k}}
\newcommand{\wtk}{\bm{\omega}_t^{k-1}}
\newcommand{\x}{\vect{x}}
\newcommand{\y}{\vect{y}}
\newcommand{\z}{\vect{z}}
\newcommand{\fE}{\mathfrak{E}}
\newcommand{\fF}{\mathfrak{F}}

\newcommand{\ur}{\tt{u}}
\newcommand{\UR}{\tt{U}}

\newcommand{\qr}{\tt{q}}
\newcommand{\QR}{\tt{Q}}

\newcommand{\FS}{F_{\tens{S}}}
\newcommand{\FE}{F_{\mathbb{E}}}

\newcommand{\RA}{\bm{\mathcal{A}}}

\newcommand{\cn}{\kappa}
\newcommand{\nn}{\nonumber}
\newcommand{\order}[1]{O(#1)}
\title{{\tt ERASER}: Machine Unlearning in MLaaS via an~Inference~Serving-Aware~Approach}



\author{Yuke Hu}
\orcid{0000-0001-5780-6898}
\authornote{Co-first authors}
\affiliation{%
  \institution{Zhejiang University}
  \department{The State Key Laboratory of Blockchain and Data Security}
  \city{Hangzhou}
  \country{China}
}
\email{yukehu@zju.edu.cn}

\author{Jian Lou}
\orcid{0000-0002-4110-2068}
\authornotemark[1]
\authornotemark[2]
\affiliation{%
  \institution{Zhejiang University}
  \department{The State Key Laboratory of Blockchain and Data Security}
  \city{Hangzhou}
  \country{China}
}
\email{jian.lou@zju.edu.cn}

\author{Jiaqi Liu}
\orcid{0009-0000-7927-1871}
\affiliation{%
  \institution{Zhejiang University}
  \department{The State Key Laboratory of Blockchain and Data Security}
  \city{Hangzhou}
  \country{China}
}
\email{jiaqi.liu@zju.edu.cn}

\author{Wangze Ni}
\orcid{0000-0003-1438-1345}
\affiliation{%
  \institution{Zhejiang University}
  \department{The State Key Laboratory of Blockchain and Data Security}
  \city{Hangzhou}
  \country{China}
}
\affiliation{%
  \institution{Hong Kong University of Science and Technology}
  \city{Hong Kong}
  \country{China}
}
\email{wniab@connect.ust.hk}

\author{Feng Lin}
\orcid{0000-0001-5240-5200}
\affiliation{%
  \institution{Zhejiang University}
  \department{The State Key Laboratory of Blockchain and Data Security}
  \city{Hangzhou}
  \country{China}
}
\email{flin@zju.edu.cn}

\author{Zhan Qin}
\orcid{0000-0001-7872-6969}
\authornote{Corresponding authors}
\affiliation{%
  \institution{Zhejiang University}
  \department{The State Key Laboratory of Blockchain and Data Security}
  \city{Hangzhou}
  \country{China}
}
\email{qinzhan@zju.edu.cn}

\author{Kui Ren}
\orcid{0000-0003-3441-6277}
\affiliation{%
  \institution{Zhejiang University}
  \department{The State Key Laboratory of Blockchain and Data Security}
  \city{Hangzhou}
  \country{China}
}
\email{kuiren@zju.edu.cn}

\renewcommand{\shortauthors}{Hu et al.}
\begin{abstract}
    Over the past years, Machine Learning-as-a-Service (MLaaS) has received a surging demand for supporting Machine Learning-driven services to offer revolutionized user experience across diverse application areas. MLaaS provides inference service with low inference latency based on an ML model trained using a dataset collected from numerous individual data owners. Recently, for the sake of data owners' privacy and to comply with the ``right to be forgotten (RTBF)'' as enacted by data protection legislation, many machine unlearning methods have been proposed to remove data owners' data from trained models upon their unlearning requests. However, despite their promising efficiency, almost all existing machine unlearning methods handle unlearning requests independently from inference requests, which unfortunately introduces a new security issue of inference service obsolescence and a privacy vulnerability of undesirable exposure for machine unlearning in MLaaS. 

    In this paper, we propose the \projectname ~framework for machin\underline{E} unlea\underline{R}ning in MLa\underline{AS} via an inferenc\underline{E} se\underline{R}ving-aware approach. 
    \projectname ~strategically chooses appropriate unlearning execution timing to address the inference service obsolescence issue. A novel inference consistency certification mechanism is proposed to avoid the violation of RTBF principle caused by postponed unlearning executions, thereby mitigating the undesirable exposure vulnerability.
    \projectname ~offers three groups of design choices to allow for tailor-made variants that best suit the specific environments and preferences of various MLaaS systems. 
    Extensive empirical evaluations across various settings confirm \projectname's effectiveness, e.g., it can effectively save up to 99\% of inference latency and 31\% of computation overhead over the inference-oblivion baseline.

\end{abstract}


\begin{CCSXML}
<ccs2012>
   <concept>
       <concept_id>10002978</concept_id>
       <concept_desc>Security and privacy</concept_desc>
       <concept_significance>500</concept_significance>
       </concept>
   <concept>
       <concept_id>10010147.10010257</concept_id>
       <concept_desc>Computing methodologies~Machine learning</concept_desc>
       <concept_significance>300</concept_significance>
       </concept>
 </ccs2012>
\end{CCSXML}

\ccsdesc[500]{Security and privacy}
\ccsdesc[300]{Computing methodologies~Machine learning}

\keywords{Machine Unlearning; Machine Learning as a Service}


\maketitle

\section{Introduction}
\label{sec.introduction}
\begin{figure}[t]
    \centering
    \includegraphics[width=0.43\textwidth]{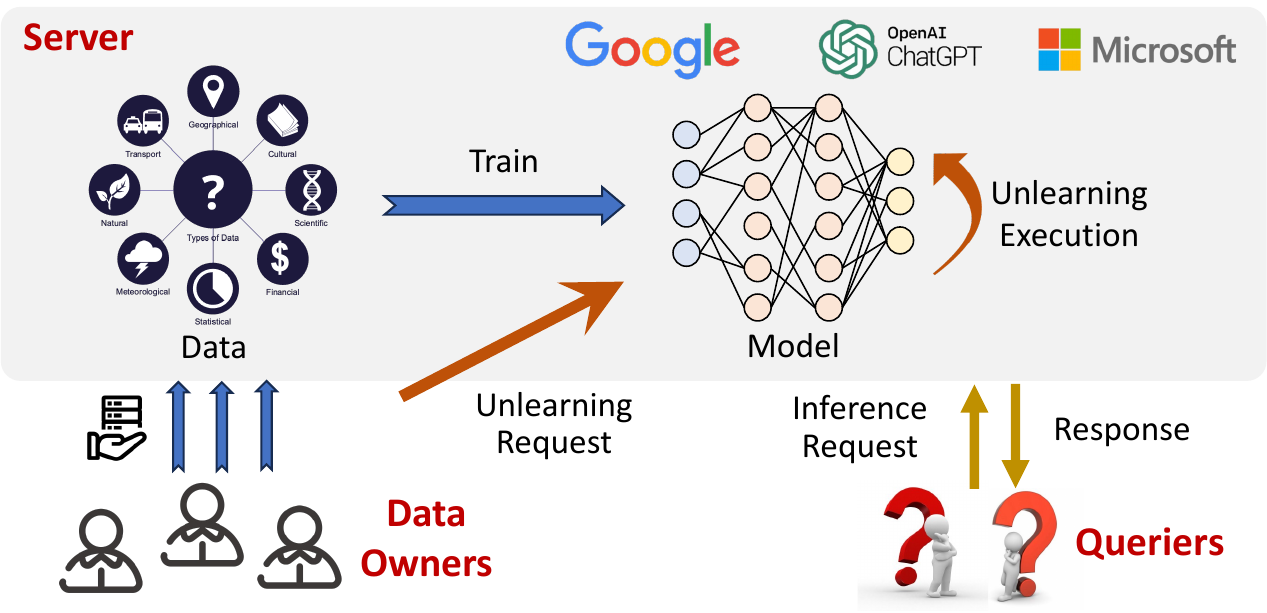}
    \caption{Scheme of MLaaS}
    \Description{The figure shows a server collecting various types of data from data owners.  This data is used to train machine learning models, represented by a neural network diagram. Companies like Google, OpenAI, and Microsoft utilize these models to process inference requests from queriers and provide responses. There is also a process for handling unlearning requests to update the models as needed for privacy and other concerns.}
    \label{fig: mlaas}
 \vspace{-1.5em}
\end{figure}

\begin{figure*}[t]
	\centering
	\includegraphics[width=0.96\textwidth]{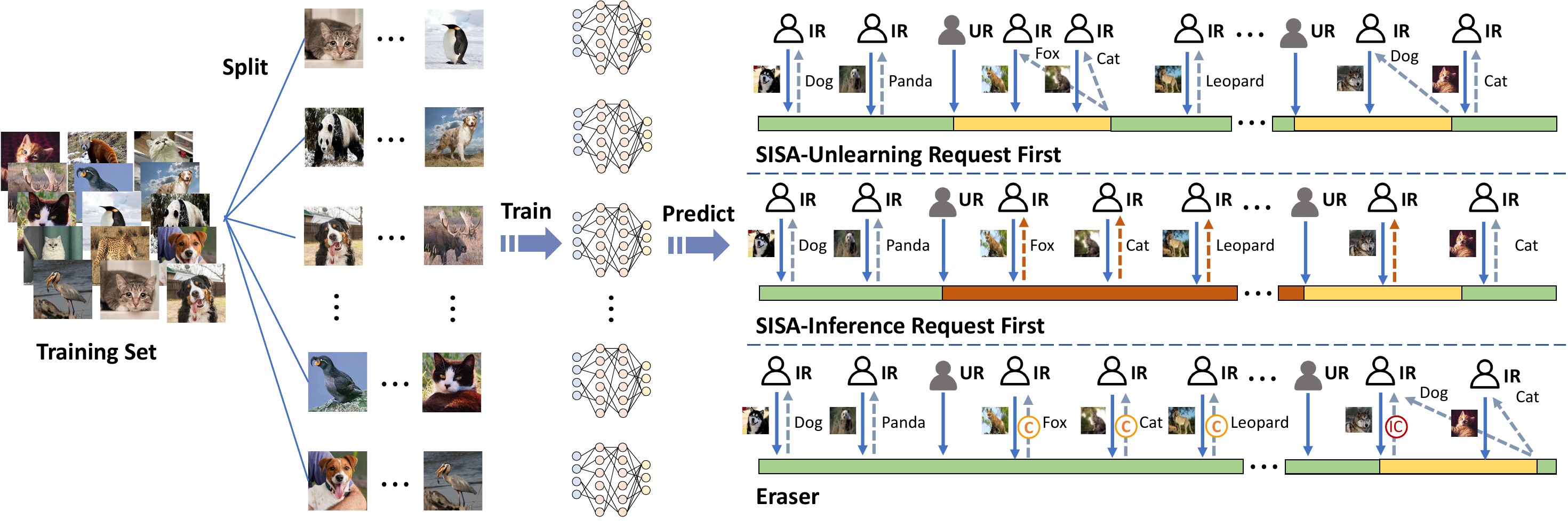}
	\caption{Comparison between \projectname ~and naive strategies. Green represents inference without privacy risk, red indicates inference with privacy risk, and yellow signifies unlearning execution. \textbf{IR}: \textbf{I}nference \textbf{R}equest; \textbf{UR}: \textbf{U}nlearning \textbf{R}equest.}
        \Description{This figure compares the ERASER strategy with naive strategies in the context of machine learning models and their handling of inference and unlearning requests. On the left side, the training set consists of various images. This training set is split into multiple subsets, each used to train separate neural network models. The process involves training the models with these subsets and then using them for predictions. The right side of the figure illustrates the handling of inference and unlearning requests. The comparison is divided into three parts: 1.	SISA-Unlearning Request First: This section shows how inference and unlearning requests are handled by the SISA strategy, prioritizing unlearning requests (UR) before inference requests (IR). Green bars indicate inference without privacy risk, red bars indicate inference with privacy risk, and yellow bars signify unlearning execution. 2.	SISA-Inference Request First: This section depicts the handling of requests when inference requests are prioritized before unlearning requests. Similar to the previous section, green, red, and yellow bars indicate inference without privacy risk, inference with privacy risk, and unlearning execution, respectively.3.	Eraser: This final section represents a naive strategy where inference requests are processed without considering privacy risks or unlearning executions.The comparison highlights the effectiveness of the SISA strategy in managing privacy risks associated with inference requests, as indicated by the varying lengths and colors of the bars in each section.}
	\label{fig: scheme}
	\vspace{-1em}
\end{figure*}

Machine Learning (ML) has witnessed a meteoric rise in its applications across diverse areas \cite{hazelwood2018applied,shaya2010intelligent,bartlett2005recognizing,hauswald2015sirius}, and the ML-powered applications are oftentimes supported by ML-as-a-Service (MLaaS) \cite{SageMaker2018,Tensorflow2018,Nvidia2020} on the \emph{server} side (e.g., cloud platforms \cite{aws2018,azure2018,cloud2018}), whereby the ML model is built on a large training dataset collected from numerous individuals (i.e., \emph{data owners}) and then deployed to serve inference requests from application users (i.e., \emph{queriers}) as depicted in Figure \ref{fig: mlaas}. 
The server's primary objective is to provide queriers with timely responses to their inference requests, keeping inference latency within the bounds of service-level objectives (SLOs) to ensure a satisfactory user experience. \cite{DBLP:conf/nsdi/CrankshawWZFGS17,DBLP:conf/usenix/ZhangYWY19,DBLP:conf/osdi/GujaratiKAHKVM20,gupta2020deeprecsys,DBLP:conf/usenix/Romero0YK21,gunasekaran2022cocktail}. Failure to maintain an acceptable latency could result in clients turning to competitors or incurring penalties. 

Equally crucial, yet previously less explored, is that data owners can submit requirements to revoke their consent for the use of their data due to privacy concerns, demanding the server to delete their data and erase its influence on the trained model. This could be driven by fears that their data is susceptible to privacy attacks via model inferences, such as black-box membership inference attacks (MIA) \cite{li2021membership,liu2022ml, DBLP:conf/iclr/WatsonGCS22, Zhang_2024_WACV, chen2022amplifying, long2018understanding, long2020pragmatic, liu2022membership, liu2022ml, li2021membership}, which deduce the presence of training samples based on inference results. This right is upheld by data protection regulations worldwide, including the European Union's General Data Protection Regulation (GDPR) \cite{mantelero2013eu}, Luxembourg's Commission Nationale pour la Protection des Données (CNPD)\cite{CNPD}, and the California Consumer Privacy Act (CCPA) \cite{California2023}.

Machine unlearning is a nascent research field dedicated to the removal of data and its associated influence from trained ML models in response to unlearning requests, which has gained increasing research attention in recent years \cite{cao2015towards, guo2019certified, sekhari2021remember, graves2021amnesiac, bourtoule2021machine,DBLP:conf/ndss/WarneckePWR23, chen2021machine}. Most existing researches on machine unlearning focus on improving the efficiency of the machine unlearning mechanism and have achieved promising acceleration over the baseline approach of retraining the ML model on the remaining dataset from scratch \cite{cao2015towards, ginart2019making, brophy2021machine, bourtoule2021machine, ullah2021machine, schelter2021hedgecut, chen2022graph, chen2022recommendation,golatkar2020eternal, wu2020deltagrad, golatkar2020forgetting, nguyen2020variational, neel2021descent, peste2021ssse, golatkar2021mixed, warnecke2021machine, izzo2021approximate, mahadevan2021certifiable}. 
Among the many existing approaches, SISA (Sharded, Isolated, Sliced, and Aggregated) \cite{bourtoule2021machine} represents an exact unlearning method that achieves a forgetting effect fully equivalent to retraining from scratch. 

However, existing machine unlearning approaches mostly treat unlearning requests independent from inference requests. This inference-oblivious characteristic precludes straightforward implementation in the MLaaS scenarios. As will be detailed in Sec.\ref{sec.definition.and.threat.model}, we observe that if servers adopt naive strategies for scheduling unlearning and inference requests, it can lead to the security issue of inference service obsolescence and the privacy issue of undesirable exposure.
For illustration, one such naive approach is the ``unlearning-request-first'' strategy, depicted in the top row of Figure \ref{fig: scheme}. Upon receiving unlearning requests, the server temporarily suspends processing inference requests to prioritize unlearning. While this method rigorously complies with the RTBF regulatory mandates, it could result in intolerable response latency, leading to service obsolescence. Conversely, the ``inference-request-first'' strategy, illustrated in the middle row of Figure \ref{fig: scheme}, involves suspending all unlearning requests to prioritize inference requests. Unlearning is deferred until certain conditions are met. 
Another variant of the ``inference-request-first'' strategy is to process the unlearning requests in the backend rather than suspending them while offering inference services by the original model.
Both variants of the second strategy lead to the privacy issue of undesirable exposure. They persist in using the yet-to-be-unlearned model to provide inference results, which still possesses the data that should have been deleted after receiving unlearning requests. This practice inherently breaches the RTBF principle, which mandates ``\textbf{immediately erasure}'' \cite{GDPRRightToBeForgotten, CNPDDroitOubli}\footnote{GDPR mandates that ``personal data must be erased \textbf{immediately} where $\cdots$ withdrawn his consent" \cite{GDPRRightToBeForgotten} and CNPD mandates that ``this right to erasure also allows you to request the \textbf{immediate} removal of personal data" \cite{CNPDDroitOubli}. }.  Additionally, this ``inference-request-first'' strategy keeps to-be-deleted data under potential privacy threats like MIA, which can be otherwise avoided if unlearning is executed timely after being requested.

To address the issues mentioned above, we propose \projectname: machin\underline{\textbf{E}} unlea\underline{\textbf{R}}ning in MLa\underline{\textbf{AS}} via an inferenc\underline{\textbf{E}} se\underline{\textbf{R}}ving-aware approach. \projectname ~can mitigate the issue of inference service obsolescence by strategically postponing unlearning execution through a certified inference consistency mechanism.
The underlying cause of the privacy risk associated with postponing unlearning execution stems from the fact that the presence or absence of certain training data may lead to different responses for the same inference request. This discrepancy can be exploited by the attackers. If the model's response to a current inference request remains consistent before and after the erasure of these data, it becomes feasible to postpone the unlearning execution. However, the prerequisite for verifying consistency is to execute unlearning to obtain the current response, which paradoxically contradicts our goal of deferring unlearning. This scenario is akin to needing a key to unlock a door, but the key itself is locked behind that very door.
To overcome this conundrum, we leverage the inherent structural characteristics of SISA to devise a certified inference consistency mechanism. This mechanism enables us to predict whether the response would change after unlearning execution without executing the unlearning. If deemed consistent, the unlearning execution can be postponed, allowing for an immediate response to the inference request. All deferred unlearning requests can then be batch-processed, significantly reducing processing time and computational overhead without introducing additional privacy risks.

Armed with the proposed new mechanism, we comprehensively investigate its adoption in MLaaS by identifying critical factors arising from the diversity in server specifications, the available computational resources, and the disparities in service objectives. Subsequently, we devise three core design options, including the ability to concurrently manage both training and inference environments, the timing of unlearning execution, and the strategy for handling inference requests that fail to achieve consistency. Based on these options, we have developed seven distinct instantiations of \projectname, allowing the server to opt for a suitable variant to achieve their desired balance between inference latency, resource consumption, and privacy protection. 

Our contributions are summarized below.
\begin{itemize}[leftmargin=*]
  \item To the best of our knowledge, we are the first to identify the security and privacy vulnerabilities of machine unlearning in MLaaS through the novel lens of the interplay between inference and unlearning requests, which are overlooked by existing machine unlearning literature taking the inference-oblivion approach.
  \item We propose \projectname, a new machine unlearning framework for MLaaS developed in an inference serving-aware manner, which can reduce inference latency while avoiding security and privacy vulnerabilities.
  \item We further propose seven instantiations under \projectname ~based on three key design options, each being tailored to specific environments and preferences of different MLaaS systems.
  \item We perform a comprehensive evaluation on four machine learning models using four real-world datasets to assess the performance of \projectname. 
  The results demonstrate that \projectname ~and its variants significantly reduce inference latency and computational overhead compared to the inference-oblivious baseline in different settings, achieving up to 99\% saving in inference latency and 31\% saving in computation overhead. 
\end{itemize}
Due to page limits, please refer to \url{https://arxiv.org/abs/2311.16136} for further discussions, detailed proofs and additional experiments.
\section{Background and Preliminary}
\label{sec.background}
\subsection{MLaaS}
Most existing MLaaS have two phases: the model training phase and the inference serving phase.  Since the two phases have different design objectives and resource preferences,  the training and inference contexts are usually optimized and maintained separately. 
In this paper, we focus on the general classification model in a fully supervised learning setting, which is fundamental and widely adopted in various applications. 

\partitle{ML model and training} ML model $F_{\bm{\theta}}:\mathcal{X} \to \mathbb{Y}$ with model parameters $\bm{\theta}$ maps from the feature space $\mathcal{X}$ to the discrete label set $\mathbb{Y}$ based on the training dataset $\tD$.  The training dataset $\tD = \{(\x_1,y_1),\dots,(\x_N,y_N)\}$ is collected from $N$ data owners, where for each $i\in \{1,\dots,N\}$, $\x_i \in \mathcal{X}$ and $y_i \in \mathbb{Y}$ are data owner $i$'s feature and label. $F_{\bm{\theta}}$ takes the empirical risk minimization form $F_{\bm{\theta}} = \frac{1}{N}\sum _{i=1}^N l_{\bm{\theta}}(\x_i,y_i)$, where $l_{\bm{\theta}}(\x_i,y_i)$ is the loss function on each individual $(\x_i,y_i)$. The objective of the ML training phase is to find the optimal $\bm{\theta}^*$ to minimize the overall loss on the training dataset, i.e., $\bm{\theta}^* := \arg\min_{\bm{\theta}} \{F_{\bm{\theta}} = \frac{1}{N}\sum _{i=1}^N l_{\bm{\theta}}(\x_i,y_i)\}$. The training phase is an offline process that can require significant computation, sometimes taking hours or even days to train a model from scratch.

\partitle{ML inference and inference serving}
Given the trained ML model, the ML inference takes an inference sample $\z \in\mathcal{X}$ and provides the inference result by $F(\z) = y\in \mathbb{Y}$ (for notational convenience, we will omit $\bm{\theta}$ from $F_{\bm{\theta}}(\z)$ in the sequel). The inference serving of MLaaS has the primary objective of delivering timely inference results to application users. The inference phase is an online/interactive process focused on achieving low inference latency. It is a challenging problem to balance response time with cost and resource effectiveness under dynamic workloads and fluctuating resource availability.
Many approaches have been proposed that can achieve promising frontiers among inference latency and cost/resources effectiveness \cite{DBLP:conf/nsdi/CrankshawWZFGS17,DBLP:conf/usenix/ZhangYWY19,DBLP:conf/usenix/Romero0YK21,gunasekaran2022cocktail}. 


\subsection{Machine Unlearning}

Machine unlearning is the task of removing a specific training sample from the trained ML model. The most basic yet inefficient approach is ``retraining-from-scratch'', which retrains the ML model on the remaining training set (i.e., the original training set excluding the sample to be unlearned) from scratch. To avoid the prohibitive computational cost of ``retraining-from-scratch'', many efficient machine unlearning mechanisms have been proposed, which can be roughly categorized into exact unlearning mechanisms \cite{cao2015towards, ginart2019making, brophy2021machine, bourtoule2021machine, ullah2021machine, schelter2021hedgecut, chen2022graph, chen2022recommendation,koch2023no,wang2023inductive,tao2024communication} and approximate unlearning mechanisms \cite{golatkar2020eternal, wu2020deltagrad, golatkar2020forgetting, nguyen2020variational, neel2021descent, peste2021ssse, golatkar2021mixed, warnecke2021machine, izzo2021approximate, mahadevan2021certifiable,liu2023muter,liu2024certified,lin2023erm}.  Exact machine unlearning mechanisms produce models identical to the one produced by ``retraining-from-scratch'', therefore capable of completely removing the requested data. Approximate unlearning mechanisms, on the other hand, trade exactness in data removal for improved computational and memory efficiency.

\partitle{SISA} 
In this paper, we focus on a notable exact unlearning mechanism called SISA \cite{bourtoule2021machine}, which is known for its ability to completely remove data and its generality to be applicable to various common ML models. During training, SISA divides the training dataset into multiple shards and builds a constituent model on each shard. Inference results are obtained by aggregating answers from all constituent models. During unlearning, only the shard containing the data to be unlearned is retrained to update the corresponding constituent model.
Unlike other exact unlearning algorithms, which are typically confined to convex learning scenarios \cite{cao2015towards, izzo2021approximate, mahadevan2021certifiable, nguyen2020variational, schelter2021hedgecut, warnecke2021machine} or specific learning algorithms such as random forests \cite{brophy2021machine} and K-means \cite{ginart2019making}, SISA can be applied across diverse machine learning contexts, including non-convex deep models \cite {chen2022graph, chen2022recommendation}.


\section{Problem Definition and Motivating Threats}
\label{sec.definition.and.threat.model}

In this section, we first introduce the threat model of machine unlearning in MLaaS in Sec.\ref{subsec.threat.model}. Then, we present two toy attacks that pose new security and privacy threats to machine unlearning in MLaaS in Sec.\ref{subsec.toy.threats}. These attacks motivate our design goals, which are discussed in Sec.\ref{subsec.design.goals}.

\subsection{Threat Model}
\label{subsec.threat.model}

There are three parties in machine unlearning in MLaaS: data owners, queriers, and the server. 

\partitle{Data owners and unlearning requests}
Data owners contribute their personal data to the server for the training of ML models. As granted by the ``right to be forgotten'' legislation, data owners have the right to request the removal of their data from the server to protect their privacy. After the ML model is trained and deployed in MLaaS, data owners can submit unlearning requests at any time. In response to an unlearning request, the server executes the machine unlearning mechanism to unlearn the relevant data from the trained model.
Some data owners can have malicious purposes such as attempting to increase the response latency for inference requests, rather than genuinely seeking to protect their privacy. 

\partitle{Queriers and inference requests}
Queriers are users of the application supported by MLaaS. After the ML model is trained and deployed, queriers can submit inference requests to the server at any time and receive the inference result from the server. 
The time it takes from submitting the request to receiving the results is known as the inference latency or response time.
Ideally, queriers want the inference latency to be as short as possible to minimize their wait time. 
If the inference latency exceeds an acceptable limit, queriers may abandon the service provider due to poor user experience and the delayed responses may cease to be useful \cite{hazelwood2018applied,gupta2020deeprecsys,gunasekaran2022cocktail}. Some queriers can have malicious purposes for submitting inference requests. Among the various widely recognized malicious inference threats, a prominent example is the privacy attack\cite{shokri2017membership,DBLP:conf/ndss/Salem0HBF019,li2021membership,liu2022ml,huang2022privacy}, in which queriers attempt to steal sensitive information of targeted data owners based on the inference results.

\partitle{Server}
In this paper, the server represents a service provider offering MLaaS. The server operates in two phases in existing MLaaS literature. During the training phase, the server builds the model using the training dataset that contains personal and potentially sensitive information from data owners. During the inference serving phase, the server processes queriers' inference requests by feeding the inference sample to the models and returning the result to the querier. To adhere to the RTBF regulations, the server further takes into account unlearning requests from data owners during the inference serving phase. The server also needs to erase newly identified poisoning data or unauthorized copyrighted data. 


\subsection{Two Motivating Threats on Machine Unlearning in MLaaS}
\label{subsec.toy.threats}
We demonstrate that existing inference-oblivion unlearning is susceptible to security and privacy vulnerabilities in the MLaaS scenario.
Specifically, we present two toy attacks that target two simplistic strategies for scheduling unlearning and inference requests. 




\partitle{Security threat: Service Obsolescence}
In the first strategy for processing unlearning requests, the server immediately executes the machine unlearning mechanism for every incoming unlearning request, a.k.a., ``unlearning-request-first'' strategy. Meanwhile, all inference requests submitted during the execution of unlearning have to be delayed until the server completes the update and deploys the new model for inference service once again. 

The toy attack targeting this strategy can infinitely block the processing of inference requests, ultimately preventing the server from providing any normal inference services.
The attack is carried out by adversarial data owners. Once the model is deployed, they periodically submit unlearning requests to the server to demand one of their data be unlearned. In an extreme and simplified scenario, they submit successive unlearning requests at intervals equal to the time it takes for the server to complete one machine unlearning process. As a result, all inference requests submitted at any time are blocked by these malicious unlearning requests. 
In fact, it suffices for the attack to cause the response latency to exceed an acceptable limit rather than infinitely blocking the inference requests.

\partitle{Privacy threat: Undesirable Exposure}
In the second strategy for processing unlearning requests, the server does not process the unlearning requests until a certain number of the unlearning requests have accumulated or a predefined waiting time has been reached. Meanwhile, the server continues processing incoming inference requests immediately based on the old model, a.k.a., ``inference-request-first'' strategy. It has the smallest response latency for inference requests submitted before the next unlearning execution for pending unlearning requests. 


The toy attack targeting the ``inference-request-first'' strategy could incur extra privacy risks for data owners who have submitted normal unlearning requests. 
The attack is carried out by adversarial queriers who attempt to infer privacy information of a targeted unlearning requester's data by submitting malicious inference requests to the server as in MIAs. 
Because the target's data is not timely unlearned from the model even after the deletion had been requested until the next unlearning execution (which already violates the RTBF principle), 
malicious inference requests processed during the suspention give the attacker an extra advantage in stealing the target data’s privacy.
If an unlearning request is intended to remove poisoning data, then the harm caused by the poisoning data will persist, and the potential ``landmines'' it poses can be triggered at any time during the suspension. If the unlearning request aims to delete copyrighted data, the losses from copyright infringement will continue to accumulate during the suspension. Every second of delay has the potential to incur additional losses.


\subsection{Desiderata for Unlearning in MLaaS}
\label{subsec.design.goals}
The above two motivating threats underscore the importance of developing an inference serving-aware machine unlearning framework in the MLaaS setting.
The server is responsible for cautiously scheduling the processing of inference and unlearning requests to minimize inference latency and extra risks. In detail, we have the following two design goals for the server.\\
  \underline{\emph{Low inference latency:}} For MLaaS admitting unlearning requests, the inference latency is increased by the unlearning process for two reasons: 1) Executing the machine unlearning mechanism takes much longer than inference; 2) The subsequent inference requests have to wait until the completion of unlearning update. Therefore, the server’s first design goal is to minimize the extra inference latency caused by unlearning update for as many inference requests as possible. 
  One way to reduce the inference latency is by strategically postponing the process of certain unlearning requests. This has two benefits: 1) subsequent inference requests can be processed without waiting for unlearning execution; 2) multiple unlearning requests can be batch-processed by executing a single machine unlearning update, decreasing the time spent on unlearning. \\
  \underline{\emph{Avoid extra risk:}} Achieving the first design goal may increase the risk for the unlearning requesters. 
  If an unlearning request is postponed, the requested data remains in the model and may be utilized by the attacker through inferences made on the model.
  To prevent this, the server’s second design goal is to ensure that the postponed unlearning execution does not increase the risk for the requesters.

\partitle{What we do not achieve in this paper}
There are also other opportunities for improvement in both the security and efficiency of the machine unlearning scheme. However, these areas are orthogonal to the research presented in this paper.\\
\underline{\emph{Regarding extra privacy risk:}} We focus on ensuring that data owners who have submitted unlearning requests do not suffer from extra privacy risk due to the postponed unlearning execution. However, we do not address other sources of privacy threats, such as general privacy attacks considered in MIA literature where any data owners can be the victim \cite{shokri2017membership,DBLP:conf/ndss/Salem0HBF019,li2021membership,liu2022ml}.

\underline{\emph{Regarding extra inference latency:}} We focus on reducing the extra inference latency caused by unlearning requests. However, we do not address other sources of inference latency in the original inference serving pipeline that can be raised from various external factors (e.g., network issues, busty workload) or internal factors (availability of hardware resources changes, concurrency from other processes). In addition, we do not seek to accelerate the machine unlearning mechanism itself, as this has been studied in many existing literature and is not the focus of this paper.
  
\section{\projectname: Machine Unlearning in MLaaS}
\label{sec.framework}

Motivated by the security and privacy vulnerabilities and design goals, we propose \projectname, a new framework that takes an inference serving-aware approach to machine unlearning in MLaaS. 

\subsection{Intuition and Overview}
\label{subsec.intuition}
\partitle{Intuition}
As mentioned in the first design goal, we can reduce inference latency by strategically postponing the execution of certain unlearning requests and processing subsequent inference requests first. 
Despite the benefit of reducing latency, we need to ensure that attackers do not gain extra advantage from the postponement. At first glance, the key to the server's design is selecting the right unlearning requests to postpone. Rather, we can take a complementary (but conceptually equivalent) view by selecting from subsequent inference requests that can be answered before executing the pending unlearning requests. The basic principle is to select inferences having consistent inference results with or without unlearning execution of the pending unlearning requests. This way, a black-box privacy attacker cannot gain any extra advantage based on the inference results. Otherwise, if the inference result is inconsistent, which means the result can be significantly influenced by the yet-to-be-unlearned sensitive data, it cannot be sent to the querier without first processing the unlearning request.

With this selection principle, the challenge is how to certify that the inference result is consistent on models with and without unlearning execution of the pending unlearning requests. Apparently, it is infeasible to compare the inference results by actually executing unlearning, as this would defeat our goal of avoiding its execution in the first place. To certify inference consistency without actual machine unlearning execution, we resort to the concept of model robustness with respect to small changes in the training dataset \cite{DBLP:conf/iclr/0001F21,wang2022improved,jia2021intrinsic}. Roughly speaking, this refers to the property that two ML models trained on two datasets which differ only on a few data have the same inference results. 
This naturally evokes the idea of utilizing Differential Privacy (DP) \cite{dwork2006calibrating} techniques for model training. However, the primary challenge with the DP-based approach is that the robustness offered by DP allows only for approximate unlearning, rather than exact unlearning. As the number of unlearning request keeps increasing during the MLaaS service, the privacy bound of DP may be breached with the growth of to-be-erased data. Additionally, the noise introduced by DP can significantly decrease the model's utility.
The ideal robustness entails two favoring consequences: 1) more subsequent inference requests can be responded to before processing the pending unlearning requests, reducing response latency; 2) more unlearning requests can be accumulated and executed all at once by a single unlearning execution, reducing unlearning execution time. 

Finally, we arrive at the two concrete questions to be answered: 1) How to convert any ML models to a more robust counterpart in a generally applicable way that is also compatible with one of the existing machine unlearning mechanisms; 2) How to connect the robustness concept with the inference consistency to certify that certain inference requests can be responded without actual unlearning execution for the pending unlearning requests. 

\begin{algorithm}[htbp]
    \caption{\projectname: training and inference with certified inference consistency check}
    \label{alg: framework}
    {\small
        {
            \begin{algorithmic}[1]
                \Require {Training dataset $\tD = \{\x_1,\dots,\x_N\}$}
                \renewcommand{\algorithmicrequire}{ \textbf{Training:}}
                \Require 
                    \State {Randomly divides $\tD$ into $K$ shards $\tS_1,\dots,\tS_K$}
                        \State{Train constituent model $f_k$ on $\tS_k$, for all $k \in [1,2, \dots, K]$}
                    \State {Launch inference service with $F=\{f_1,\dots,f_K\}$ }
                \renewcommand{\algorithmicrequire}{ \textbf{Inference and Unlearning Serving:}}
                \Require
                    \For {$t\in [1, 2, \dots]$}
                            \If {Receive an Unlearning Request $\ur = (i,'Unlearn')$}
                                \State {Record it as a pending unlearning request}
                            \ElsIf {Receive an Inference Request $\ur = (\z,'Inference')$}
                                \State Inference with constituent models: {${f_{1}^{\tt O}(\z), \dots,f_{K}^{\tt O}(\z)}$}
                                \State {Count: ${\tt Count}_{y}^{\tt O}(\z) := |\{k\in[K]\Big{|}f_k^{\tt O}(\z) = y \}|.$}
                                \If {satisfy certified inference consistency condition }
                                    \State{Return $F(\z) := \arg\max _{y\in \mathbb{Y}} {\tt Count}_{y}^{\tt O}(\z)$.}
                                \Else
                                    \State{Process in accordance with specific design options.}
                                \EndIf
                            \EndIf
                    \EndFor
                \Ensure {Responses for inference and unlearning requests.}
            \end{algorithmic}
        }
    }
\end{algorithm}

\partitle{Overview of \projectname}
Taking on the above intuition, we propose the \projectname ~framework with two phases, i.e., the training phase and the inference\&unlearning serving phase, where the latter supports both inference and unlearning requests. 

During the training phase, \projectname ~is given the training dataset, the model architecture and the loss designated by the application. Then, \projectname ~employs the same training strategy as SISA, dividing the training dataset into $K$ shards and building each constituent model independently on its shard. As will be shown in Sec.\ref{sec.framework}, this shard-aggregate strategy makes the model more robust to small changes in the training dataset (i.e., the small amount of unlearned data in our context), in addition to its known unlearning efficiency. 
In this sense, \projectname ~uncovers a hidden benefit of SISA when applied to machine unlearning in MLaaS. 

During the serving phase, \projectname ~aims to avoid frequent interruptions of normal inference service for processing unlearning requests. 
For an incoming unlearning request, \projectname ~records it and the shard it belongs to but does not necessarily execute the machine unlearning immediately (i.e., treats it as a pending unlearning request). \projectname ~will keep the data and shard index for all pending unlearning requests until they are unlearned. 
For an incoming inference query, \projectname ~derives a certified inference consistency bound to determine if the inference result will be consistent with or without processing all pending unlearning requests.
If the inference has consistent results, \projectname ~immediately returns the inference result to the querier. The computation of the certified inference consistency mechanism essentially involves processing the inference request itself plus some simple counting and comparing operations, with no actual machine unlearning execution. 
\projectname ~thus incurs minimal additional inference latency caused by unlearning. Otherwise, if the certified inference consistency condition is not met, \projectname ~may need to halt the inference service and execute the unlearning first. However, due to the shard-aggregate strategy and increased robustness, most of the inference requests have certified inference consistency and can be immediately responded to. 

Algorithm \ref{alg: framework} summarizes the training and inference with inference consistency certification in \projectname. The unlearning execution is omitted as it depends on design options.

In Sec.\ref{subsec.formalization}, we provide a formalization of \projectname. In Sec.\ref{subsec.certified.inference.consistency}, we derive the certified inference consistency mechanism. In Sec.\ref{subsec.analysis}, we analyze \projectname ~in terms of inference latency to show its advantage compared to the inference-oblivion baseline approach. 

\begin{figure*}[htbp]
	\centering
	\includegraphics[width=0.96\textwidth]{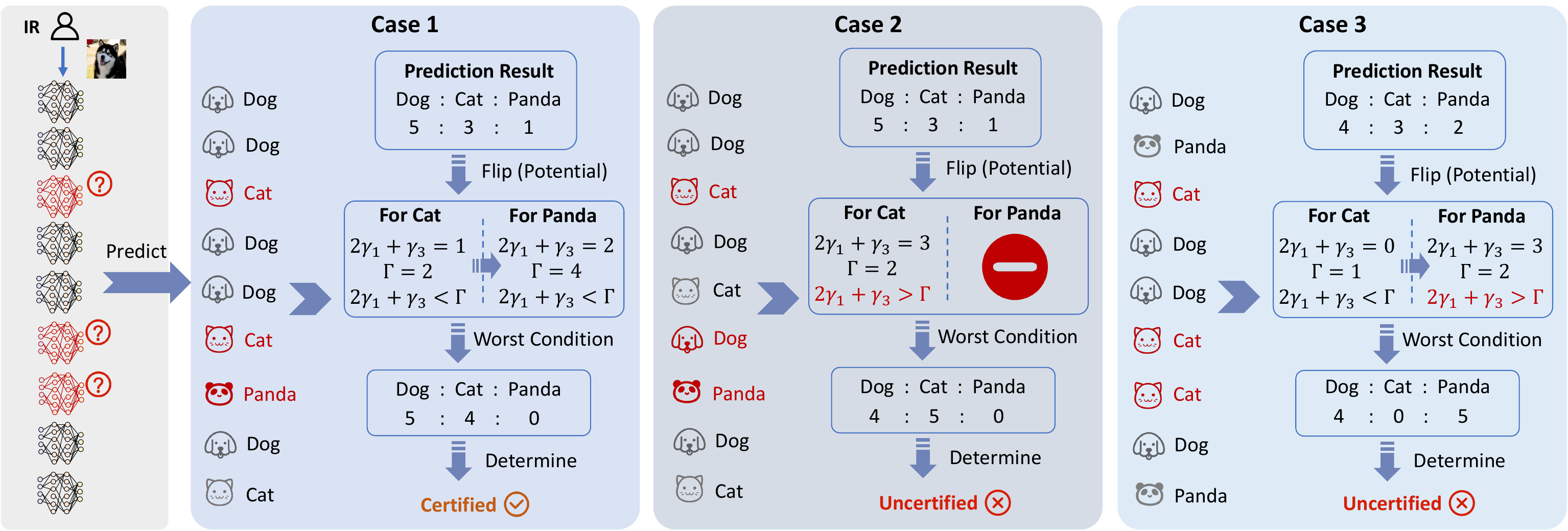}
	\caption{Illustration of three representative cases for the certified inference consistency mechanism.}
        \Description{This figure illustrates three representative cases for the certified inference consistency mechanism in a machine learning model. The left part of the figure shows an inference request (IR) for an image of a husky, which is processed by multiple models, each represented by neural network icons. Some models are marked with red question marks, indicating uncertainty in their predictions. Case 1: The prediction result is Dog: 5, Cat: 3, Panda: 1. A potential flip is considered, and for the Cat and Panda classes, the inequalities 2y_1 + y_3 < \Gamma are checked with different values of \Gamma. The worst condition results in Dog: 5, Cat: 4, Panda: 0. The case is determined to be certified. Case 2: The prediction result is Dog: 5, Cat: 3, Panda: 1. A potential flip is considered, and the inequality 2y_1 + y_3 > \Gamma is checked for the Cat class, resulting in Dog: 4, Cat: 5, Panda: 0 under the worst condition. This case is determined to be uncertified. Case 3: The prediction result is Dog: 4, Cat: 3, Panda: 2. A potential flip is considered, and the inequalities for the Cat and Panda classes are checked, resulting in Dog: 4, Cat: 0, Panda: 5 under the worst condition. This case is also determined to be uncertified. Each case involves checking conditions to determine if the prediction remains consistent under potential flips, with results categorized as certified or uncertified based on the consistency check outcomes.}
	\label{fig: certification}
\end{figure*}

\subsection{Formalization of \projectname}
\label{subsec.formalization}
Based on the overview, we formalize the key functions (i.e., training, inference, and unlearning) in the two phases of \projectname ~below.

\partitle{Training}
To facilitate unlearning, \projectname ~builds the model $F$ by shard-and-aggregate strategy following SISA. First, it randomly divides the training dataset $\tD$ into $K$ shards $\tS_1,\dots,\tS_K$. On each shard $\tS_k$, \projectname ~trains a constituent model $f_k$, which will result in $K$ constituent models $\{f_1,\dots,f_K\}$.

\partitle{Inference}
Denote an inference request by $\qr = (\z,'Inference')$. Each constituent model provides prediction class $y_k = f_{k}(\z)$ on the inference request sample $\z$. To provide the final inference result $F(\z)$, \projectname ~aggregates the prediction results from all $K$ constituent models by majority voting. That is, \projectname ~counts the number of constituent models that has inference result $y$ for each candidate label $y\in \mathbb{Y}$, given by
\begin{equation}
  {\tt Count}_{y}(\z) \eqdef |\{k\in[K]\Big{|}f_k(\z) = y \}|.
\end{equation}
The final inference result $F(\z)$ is the label that has the largest count,
\begin{equation}
  F(\z) \eqdef \arg\max _{y\in \mathbb{Y}} {\tt Count}_{y}(\z),\label{eq. voting}
\end{equation}
where ties are broken by returning the label with a smaller index. 

\partitle{Unlearning}
Denote an unlearning request $\ur = (i,'Unlearn')$ is sent by the ${\tt i}$-th data owner to request for unlearning his/her data $(\x_i,y_i)$ from the trained ML model. The server keeps track of all pending unlearning requests. Let $t^{\tt O}$ be the timestamp of the most recent training/unlearning execution. For any shard $k\in [K]$, $\tS_k^{\tt O}$ be the remaining data partition of the $k$-th shard at $t^{\tt O}$, $f^{\tt O}_k$ be the model obtained by unlearning execution on $\tS_k^{\tt O}$. 
In other words, $\tS_k^{\tt O}$ has no pending unlearning requests at $t^{\tt O}$. 
Then, from $t^{\tt O}$ to the current timestamp $t$, we denote the set of pending unlearning requests (if any) for the $k$-th shard during this period by $\UR_k^t\eqdef \{\ur_k^{r_1},\dots, \ur_k^{r_m}\}$, and thus $\UR^t \eqdef (\UR_1^t,\dots,\UR_K^t)$. Let $\tS_k^{t} \eqdef \tS_k^{\tt O} \circ \ur_k^{r_1} \dots \circ \ur_k^{r_m}$ be the $k$-th shard with all pending unlearning requests $ \UR_k^t$ removed from $\tS_k^{\tt O}$. In addition, we define $f_k^{t}$ to be the constituent model after unlearning at $t$ on $\tS_k^{t}$ to process $\UR_k^t$. 
Collecting all $K$ shards, we have $\tD^{\tt O} = \tS_1^{\tt O}  \cup \tS_2^{\tt O}  \cup \dots \cup \tS_K^{\tt O}$, $\tD^{t} = \tS_1^{t} \cup \tS_2^{t} \cup \dots \cup \tS_K^{t}$.
Correspondingly, we introduce the following notations,
\begin{gather}
    {\tt Count}_{y}^{\tt O}(\z) \eqdef |\{k\in[K]\Big{|}f_k^{\tt O}(\z) = y \}|, \\
    {\tt Count}_{y}^t(\z) \eqdef |\{k\in[K]\Big{|}f_k^t(\z) = y \}|;\\
    F^{\tt O}(\z) \eqdef \arg\max _{y\in \mathbb{Y}} {\tt Count}_{y}^{\tt O}(\z), \label{eq.without.unlearning}\\
    F^t(\z) \eqdef \arg\max _{y\in \mathbb{Y}} {\tt Count}_{y}^t(\z)\label{eq.with.unlearning}.
\end{gather}
From $t^{\tt O}$ to $t$, \projectname ~process inference requests based on the old model $F^{\tt O}$, which calls for certified inference consistency to avoid extra privacy risks, as presented below.










\subsection{Certified Inference Consistency}
\label{subsec.certified.inference.consistency}
\partitle{Definition of inference consistency}
According to the threat model, the adversary can only access the final inference results (i.e., $F^{\tt O}(\z)$). Thus, the postponed unlearning will not cause extra privacy risks if $F^{\tt O}(\z) = F^t(\z)$, i.e., the inference results on $\z$ are consistent between the model with and without timely unlearning. Under this inference consistency condition, the inference request on $\z$ can be immediately responded to the queirer without the actual unlearning to obtain $F^t$ at timestamp $t$. This is formalized below.

\begin{definition}[\textbf{Inference Consistency}]
\label{def.inference.consistency}
For the inference request on sample $\z$, the MLaaS has inference consistency between the ML models with and without processing the pending unlearning requests, if they have the same prediction results on $\z$. That is, it has $F^{\tt O}(\z) = F^t(\z)$, where $F^{\tt O}$ is the model without $\UR^t$ unlearned and $F^t$ is the model with actual unlearning, as specified in eq.(\ref{eq.without.unlearning}) and eq.(\ref{eq.with.unlearning}), respectively.
\end{definition}
\partitle{Certified inference consistency}
It is infeasible in practice to exactly check the condition in Definition \ref{def.inference.consistency}, because one has to execute the machine unlearning mechanism to obtain $f_{1}^t,\dots,f_{K}^t$ in the first place. Rather, we need to certify the inference consistency based on the prediction results made by $f_{1}^{\tt O},\dots,f_{K}^{\tt O}$ and the records of pending unlearning requests. 
This could be accomplished by assessing whether the outcome of the majority voting (eq. \ref{eq. voting}) would change if the prediction results from the constituent models that are potentially impacted by $\UR^t$ (denoted red in Figure \ref{fig: certification}) were to flip to a label different from $F^{\tt O}(\z)$ after unlearning execution. Inference consistency is maintained if the result remains unchanged, even under the worst-case scenario.
We derive Theorem \ref{thm.certified.inference.consistency} to provide a 
formalized certification condition.
Consequently, as long as the condition in Theorem \ref{thm.certified.inference.consistency} holds, we can ensure the certified consistency without the actual machine unlearning execution.



\begin{theorem}[\textbf{Certified Inference Consistency}] 
\label{thm.certified.inference.consistency}
Suppose that the most recent unlearning-updated models are trained on $\tD ^{\tt O}$, the pending unlearning requests are $\UR^t$, and the shards with pending unlearning requests are $\{k\in[K] \big{|} \tS_k^t \neq \tS_k^{\tt O}\}$. Given inference sample $\z$, let $y_a$ be the labels having maximum counts from $f_{1}^{\tt O}(\z),\dots,f_{K}^{\tt O}(\z)$ and $y_b$ is a label different from $y_a$, i.e., $\forall y_b\in\mathbb{Y}, y_b \neq y_a$. Let the counts $\gamma _1, \gamma_2, \gamma_3$ be defined as,
\begin{align}
    \gamma_1\eqdef &|k\in[K]\big{|} \tS_k^t \neq \tS_k^{\tt O} \wedge f_k^0(\z) = y_a|;\\
    \gamma_2\eqdef &|k\in[K]\big{|} \tS_k^t \neq \tS_k^{\tt O} \wedge f_k^0(\z) = y_b|;\\
    \gamma_3\eqdef &|k\in[K]\big{|} \tS_k^t \neq \tS_k^{\tt O} \wedge f_k^0(\z) \neq \{y_a, y_b\}|.
\end{align}
Then, \projectname ~has certified inference consistency for the inference request on $\z$, if the following condition holds: $\forall y_b\in\mathbb{Y}, y_b\neq y_a$,
\begin{equation}
    2\gamma_1 + \gamma _3 \leq \Gamma_b,
\end{equation}
where $\Gamma_b = {\tt Count}_{y_a}^{\tt O}(\z) - {\tt Count}_{y_b}^{\tt O}(\z) - \mathbb{I}(y_b<y_a)$ with $\mathbb{I}(\cdot)$ being the indicator function. In other words, the inference request on $\z$ can be responded to immediately without incurring extra privacy risk while avoiding waiting for the unlearning update to obtain $f_{1}^{t}(\z),\dots,f_{K}^{t}(\z)$. 
\end{theorem}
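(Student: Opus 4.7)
The plan is a worst-case adversarial analysis over the votes produced by the shards that contain pending unlearning requests. Fix the inference sample $\z$ and the putative winner $y_a = F^{\tt O}(\z)$. Partition the shard indices as $[K] = \mathcal{U} \cup \mathcal{U}^c$, where $\mathcal{U} = \{k \in [K] : \tS_k^t \neq \tS_k^{\tt O}\}$ is the set of affected shards. The first observation is that for every $k \notin \mathcal{U}$ the constituent model is literally unchanged, so $f_k^t(\z) = f_k^{\tt O}(\z)$, meaning these votes contribute identically to ${\tt Count}_y^{\tt O}(\z)$ and ${\tt Count}_y^t(\z)$ for every label $y$. Only the votes from $\mathcal{U}$ can move, and because we refuse to actually execute the unlearning, we must treat each such post-unlearning vote $f_k^t(\z)$ as an adversarially chosen label.

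Next, I would fix an arbitrary competitor $y_b \neq y_a$ and compute the worst-case counts in favor of $y_b$ after unlearning. The shards in $\mathcal{U}$ split into three groups according to their \emph{pre-unlearning} vote on $\z$: those that voted $y_a$ (size $\gamma_1$), those that voted $y_b$ (size $\gamma_2$), and those that voted something else (size $\gamma_3$). To maximize ${\tt Count}_{y_b}^t(\z) - {\tt Count}_{y_a}^t(\z)$, the adversary (i) flips every $\gamma_1$ vote away from $y_a$ (ideally to $y_b$), (ii) keeps every $\gamma_2$ vote pinned at $y_b$, and (iii) redirects every $\gamma_3$ vote to $y_b$. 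Carrying this out gives the two key inequalities
\begin{align}
{\tt Count}_{y_a}^t(\z) &\geq {\tt Count}_{y_a}^{\tt O}(\z) - \gamma_1,\\
{\tt Count}_{y_b}^t(\z) &\leq {\tt Count}_{y_b}^{\tt O}(\z) + \gamma_1 + \gamma_3,
\end{align}
which together yield
\begin{equation*}
{\tt Count}_{y_a}^t(\z) - {\tt Count}_{y_b}^t(\z) \geq {\tt Count}_{y_a}^{\tt O}(\z) - {\tt Count}_{y_b}^{\tt O}(\z) - 2\gamma_1 - \gamma_3.
\end{equation*}

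Finally, I would translate the $\arg\max$ definition of $F^t$ together with the stated tie-breaking rule (smaller label index wins ties) into the single inequality $F^t(\z) = y_a$ iff ${\tt Count}_{y_a}^t(\z) - {\tt Count}_{y_b}^t(\z) \geq \mathds{1}(y_b < y_a)$ for every $y_b \neq y_a$. Combining this with the worst-case bound above and rearranging gives the stated certification condition $2\gamma_1 + \gamma_3 \leq {\tt Count}_{y_a}^{\tt O}(\z) - {\tt Count}_{y_b}^{\tt O}(\z) - \mathds{1}(y_b < y_a)$, and taking this over all $y_b \neq y_a$ recovers the bound involving $\Gamma$ in the theorem. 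Hence whenever the hypothesis holds, $F^t(\z) = y_a = F^{\tt O}(\z)$, so the query may be answered from $F^{\tt O}$ without executing any pending unlearning.

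The main obstacle is subtle rather than technical: making the worst-case accounting tight. In particular one must be careful that the $\gamma_2$ shards are not double-counted (they already contribute to ${\tt Count}_{y_b}^{\tt O}(\z)$, so the adversary's ``best'' action is simply to leave them alone, which is why $\gamma_2$ does not appear in the final bound), and that the strict-versus-weak inequality from the tie-breaking rule is captured exactly by the indicator $\mathds{1}(y_b < y_a)$. Everything else is a direct counting argument on the $K$ votes.
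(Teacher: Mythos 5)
Your proposal is correct and follows essentially the same route as the paper's own proof: split shards into unaffected versus affected, bound the worst-case vote shifts by $\gamma_1,\gamma_2,\gamma_3$, and combine ${\tt Count}_{y_a}^t(\z) \geq {\tt Count}_{y_a}^{\tt O}(\z) - \gamma_1$ with ${\tt Count}_{y_b}^t(\z) \leq {\tt Count}_{y_b}^{\tt O}(\z) + \gamma_1 + \gamma_3$ under the tie-breaking indicator. You even state the two count bounds with the correct inequality directions, where the paper's appendix has the symbols flipped despite describing them as lower/upper bounds.
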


\label{appendix. consistency}
\begin{algorithm}[htbp]
    \caption{Inference Consistency Certification Mechanism}
    \label{alg: consistency}
    {\small
        {
            \begin{algorithmic}[1]
                \Require {Inference request $\qr = (\z,'Inference')$, models after last unlearning update $F^{\tt O}\eqdef\{f_1^{\tt O}, \cdots, f_k^{\tt O}\}$, dataset after last unlearning update $\tD^{\tt O} = \tS_1^{\tt O}  \cup \tS_2^{\tt O}  \cup \dots \cup \tS_K^{\tt O}$, unlearning requests submitted since last unlearning update: $\UR^t \eqdef (\UR_1^t,\dots,\UR_K^t)$}.
                    \State{$certification_{\qr}=True$}
                    \For {$k \in [1,2, \dots, K]$}
                        \State{Inference with constituent model: $f_k^{\tt O}(\z)$}
                        \If {$\tS_k^{\tt O}\cap \UR_k^t \neq \varnothing$ }
                            \State {Mark as $\tS_k^t \neq \tS_k^{\tt O}$}
                        \EndIf
                    \EndFor
                    \State {$y_a=\arg\max _{y\in \mathbb{Y}} |\{k\in[K]\Big{|}f_k^{\tt O}(\z) = y \}|$}
                    \For {$\forall y_b\in\mathbb{Y}, y_b \neq y_a$}
                        \State{$\gamma_1\eqdef |k\in[K]\big{|} \tS_k^t \neq \tS_k^{\tt O} \wedge f_k^0(\z) = y_a|$}
                        \State{$\gamma_2\eqdef |k\in[K]\big{|} \tS_k^t \neq \tS_k^{\tt O} \wedge f_k^0(\z) = y_b|$}
                        \State{$\gamma_3\eqdef |k\in[K]\big{|} \tS_k^t \neq \tS_k^{\tt O} \wedge f_k^0(\z) \neq \{y_a, y_b\}|$}
                        \State{$\Gamma_b = {\tt Count}_{y_a}^{\tt O}(\z) - {\tt Count}_{y_b}^{\tt O}(\z) - \mathbb{I}(y_b<y_a)$}
                        \If{$2\gamma_1 + \gamma _3 > \Gamma_b$}
                            \State{$certification_{\qr}=False$}
                            \State{Break}
                        \EndIf
                    \EndFor
                    \If{$certification_{\qr}$}
                        \State {$\qr$ has certified consistency and can be responded to immediately.}
                    \Else
                        \State{The consistency of $\qr$ can not be certified}
                    \EndIf
                \Ensure {$certification_{\qr}$}
            \end{algorithmic}
        }
    }
\end{algorithm}

All the proofs in this paper is relegated to Appendices in the arXiv version. Figure \ref{fig: certification} provides an illustration of different cases within the mechanism. Algorithm \ref{alg: consistency} summarises the inference consistency certification mechanism.

\subsection{Extension to Different Unlearning Methods}
Although the design of \projectname~ draws inspiration from SISA, which retrains the constituent models from scratch, \projectname~ can also be extended beyond SISA. Any exact-unlearning algorithm applicable to the target model can be utilized within the ERASER scheme as a substitute for “retraining from scratch” to update constituent models. For $(\epsilon, \delta)$-approximate-unlearning algorithms, the accumulation of multiple unlearning requests might continuously amplify $\epsilon$ and $\delta$. However, we believe that leveraging composition techniques from differential privacy could help address this issue, which will be further explored in our future work.

\section{Design Options and Variants of \projectname}
\label{sec.variants}

The substantial variations in service objectives and available resources across different MLaaS pose significant challenges for MLaaS providers to choose a suitable strategy to achieve their desired balance between inference latency, resource consumption, and privacy protection. Consequently, we have investigated existing MLaaS solutions and identified three key factors that significantly impact the performance of \projectname ~as design options for system deployment.

\subsection{Design Options}
\label{subsec.options}

\partitle{Design option I: maintain single or double contexts}
We consider whether to have both inference and unlearning contexts or not. In existing MLaaS literature, it is widely recognized that the training and inference contexts have distinct characteristics. For instance, the training context involves loading the training dataset and is typically an offline computation, while inference is interactive and requires low latency. Thus, these two contexts are often studied and optimized separately. However, the emerging need for machine unlearning blurs the boundary between training and inference. Machine unlearning mechanisms share characteristics with both training and inference: on one hand, they require (partial) training data and more computation than inference, resembling training; on the other hand, they are interactive and require careful scheduling to avoid additional inference latency.
\begin{itemize}[leftmargin=*]
  \item \underline{\emph{Design Option I-A:}} \projectname ~opts to maintain both the inference context at the front end and the unlearning context at the back end, with dedicated resources at both ends that can operate in parallel. The former utilizes all older copies of the constituent models to provide inference service with certified inference consistency. 
  The latter executes the unlearning mechanism to update the corresponding constituent models that have pending unlearning requests to process.
  \item \underline{\emph{Design Option I-B:}} \projectname ~opts to maintain only one context at a time and switches between contexts when processing different types of requests. When switching to the unlearning context, all inference requests are suspended until switching back.
\end{itemize}

\partitle{Design Option II: Timing of unlearning}
We explore when to execute the pending unlearning requests and provide three options below depending on whether the unlearning requests are processed immediately or wait until there are inference requests that cannot satisfy the prediction consistency condition.
\begin{itemize}[leftmargin=*]
  \item \underline{\emph{Design Option II-A:}} \projectname ~opts to accumulate unlearning requests until an inference request fails to meet the inference consistency condition. At that point, all pending unlearning requests are batch-unlearned. Subsequent inference requests submitted during unlearning can only be responded to if they meet the inference consistency condition based on available model copies. This option is referred to as \emph{uncertification-triggered unlearning}.
  \item \underline{\emph{Design Option II-B:}} \projectname ~opts to immediately execute unlearning at the back end, while serving the inference requests with the old copies of the models in the front end. The inferences without consistency need to wait for the completion of ongoing unlearning execution. This option is named \emph{immediate unlearning}.
  \item \underline{\emph{Design Option II-C:}} When combined with \emph{Option III-C} introduced below, \projectname ~opts to execute unlearning until the number or ratio of inconsistent inference requests reaches a threshold. This option is referred to as \emph{threshold-triggered unlearning}.
\end{itemize}

These options all have their advantages and disadvantages. Option II-A\&C has a better chance of accumulating more pending unlearning requests and processing them together. However, since the unlearning update is triggered by inconsistent inference request, it's more likely to occur during periods of heavy inference request workload, potentially impacting more users' experience. Option II-B can significantly reduce the inference latency caused by waiting for unlearning execution. However, it may result in high computational overhead. In future work, we will explore more sophisticated design options for unlearning timing that take multiple factors, such as workload, resource utilization, user satisfaction, and service cost, into holistic consideration.

\partitle{Design Option III: Handling of uncertified inference requests}
We consider how to handle uncertified inference requests under the setting of \emph{Design Option II-C}. In conventional MLaaS that only deals with inference requests, the system design and computational resources are typically capable of handling the majority of inference requests and meeting the service-level objective (SLO) of response time. However, a small portion of them, known as tail inference requests \cite{DBLP:conf/osdi/GujaratiKAHKVM20,DBLP:conf/sc/Cui0CZLZSMYLG21,DBLP:conf/usenix/Cui00WLZ0G22}, can be difficult to meet the SLO due to various factors raised in the practical serving environment. There are several options to handle tail inference requests. One is to consider scaling up with more computational resources at the expense of higher costs. Another is to discard a small portion of them (based on a predetermined discarding ratio)  when the server finds that either the inference latency will exceed an acceptable limit or the cost for additional resources is too high.


As observed in our empirical results in Sec.\ref{sec.experiment}, a new factor in \projectname ~will cause the phenomenon of tail inference requests, i.e., failure to certify the inference consistency. This means that while most inference requests can certify inference consistency and be immediately responded to, a small portion cannot and may experience much longer inference latency as they may need to wait for the machine unlearning execution.
To handle such uncertified inference requests, we propose the third design option below.
\begin{itemize}[leftmargin=*]
  \item \underline{\emph{Design Option III-A:}} \projectname ~opts to immediately provide the uncertified inference results to the queriers as long as such responses are within a predetermined small ratio (uncertification ratio), introduced in the same spirit as the discarding ratio. 
  \item \underline{\emph{Design Option III-B:}} \projectname ~opts to store the uncertified inference requests and reprocess them with the new model after the next machine unlearning execution. 
\end{itemize}
For Option III-A, the advantage is that the inference latency for these inference requests can be significantly reduced. However, these uncertified inference results may introduce additional privacy risks. 
As such, we suggest adopting this option with more caution, e.g., by setting a very small uncertification ratio to limit the risk, or discarding such requests and not returning the results.


\subsection{Variants of \projectname}
\label{subsec.variants}

SISA with unlearning-request-first strategy, as described in Sec.\ref{subsec.toy.threats}, can be viewed as a variant with Options B-B-B. We propose seven more practical variants with different combinations of choices for three options. The details are shown in Table \ref{table: variants}, and a comparison among four double-context variants is presented in Figure \ref{fig: algorithms}.

\partitle{Option A-B-B: Double Contexts for Immediate Unlearning with Postponed Certified Inference (DIMP)}
DIMP maintains both the inference context and the unlearning context, choosing to execute unlearning requests immediately at the backend. Inference requests with certified results will receive immediate responses, while uncertified requests will be postponed until the ongoing unlearning update is completed. This variant offers low inference latency at the cost of high computation overhead since each unlearning execution can process only one unlearning request.

\partitle{Option B-A-B: Single Context for Uncertification-Triggered Unlearning with Postponed Certified Inference (SUTP)}
SUTP maintains a single context at a time and accumulates unlearning requests until an inference request fails to satisfy the certified inference consistency condition. Instead of updating immediately, constituent models corresponding to unlearning requests are placed on an unlearning waiting list. Once uncertification occurs, the server switches to the unlearning context and executes the unlearning update. Uncertified prediction requests will be postponed until the server completes the unlearning update and switches back to the inference context, as well as the inference requests that arrive during the unlearning. This variant saves computation overhead compared to DIMP, but at the cost of increased inference latency.

\partitle{Option A-A-B: Double Contexts for Uncertification-Triggered Unlearning with Postponed Certified Inference(DUTP)}
DUTP differs from SUTP by maintaining both the inference and unlearning contexts. In this variant, inference requests that arrive during unlearning execution receive immediate responses if the inference result is certified. Only uncertified inference requests are postponed. DUTP reduces inference latency compared to SUTP while consuming more resources by maintaining double contexts simultaneously.

\partitle{Option B-C-A: Single Context for Threshold-Triggered Unlearning with Uncertified Inference (STTU)}
STTU maintains a single context at a time and accumulates unlearning requests until the ratio of inference requests that fail to satisfy the consistency condition exceeds a threshold. Uncertified inference results within the predetermined threshold are provided immediately. Once the ratio exceeds the threshold, the server switches to the unlearning context and executes the unlearning update. The request that triggers the update and newly arrived inference requests during the update must wait for the context switch. This variant reduces the frequency of unlearning updates, decreasing computation overhead. However, a predetermined small ratio of inference requests will receive uncertified prediction results with potential privacy risks.

\partitle{Option A-C-A: Double Context for Threshold-Triggered Unlearning with Uncertified Inference (DTTU)}
DTTU differs from STTU by maintaining both the inference and unlearning contexts, reducing inference latency by responding to certified inference requests during unlearning updates.

\partitle{Option B-C-B: Single Context for Threshold-Triggered Unlearning with Postponed Certified Inference (STTP)}
STTP shares the same choices as STTU for Option I and Option II, but suspends uncertified inference requests in an inference waiting list rather than providing uncertified predictions immediately. These requests are postponed and re-processed after the next unlearning update. STTP guarantees that every inference will receive a certified response. The inference waiting list is then cleared and the uncertification ratio is reset to zero, making it take longer to trigger the next unlearning update. STTP saves more computation overhead than STTU at the cost of higher inference latency since uncertified inference requests need waiting for reprocessing.

\partitle{Option A-C-B: Double Context for Threshold-Triggered Unlearning with Postponed Certified Inference (DTTP)}
DTTP differs from STTP by maintaining double contexts, and reduces latency by serving inference requests during unlearning updates.

\begin{table}[htbp]
\centering
\footnotesize
    \caption{Summary of Variants under Different Design Options}
    \label{table: variants}
     \begin{center}
    \begin{tabular}{ c|ccc}
       \toprule
     \diagbox{\textbf{Variants}}{\textbf{Options}} &  \makebox[0.06\textwidth][c]{\textbf{I}} & \makebox[0.06\textwidth][c]{\textbf{II}} &\makebox[0.06\textwidth][c]{\textbf{III}} \\

       \midrule
      DIMP & A & B & B  \\
      SUTP & B & A & B  \\
      DUTP & A & A & B  \\
      STTU & B & C & A \\
      DTTU & A & C & A \\
      STTP & B & C & B  \\
      DTTP & A & C & B  \\
      \bottomrule
    \end{tabular}
       \end{center}
\end{table}


\begin{figure}[htbp]
	\centering
	\includegraphics[width=0.47\textwidth]{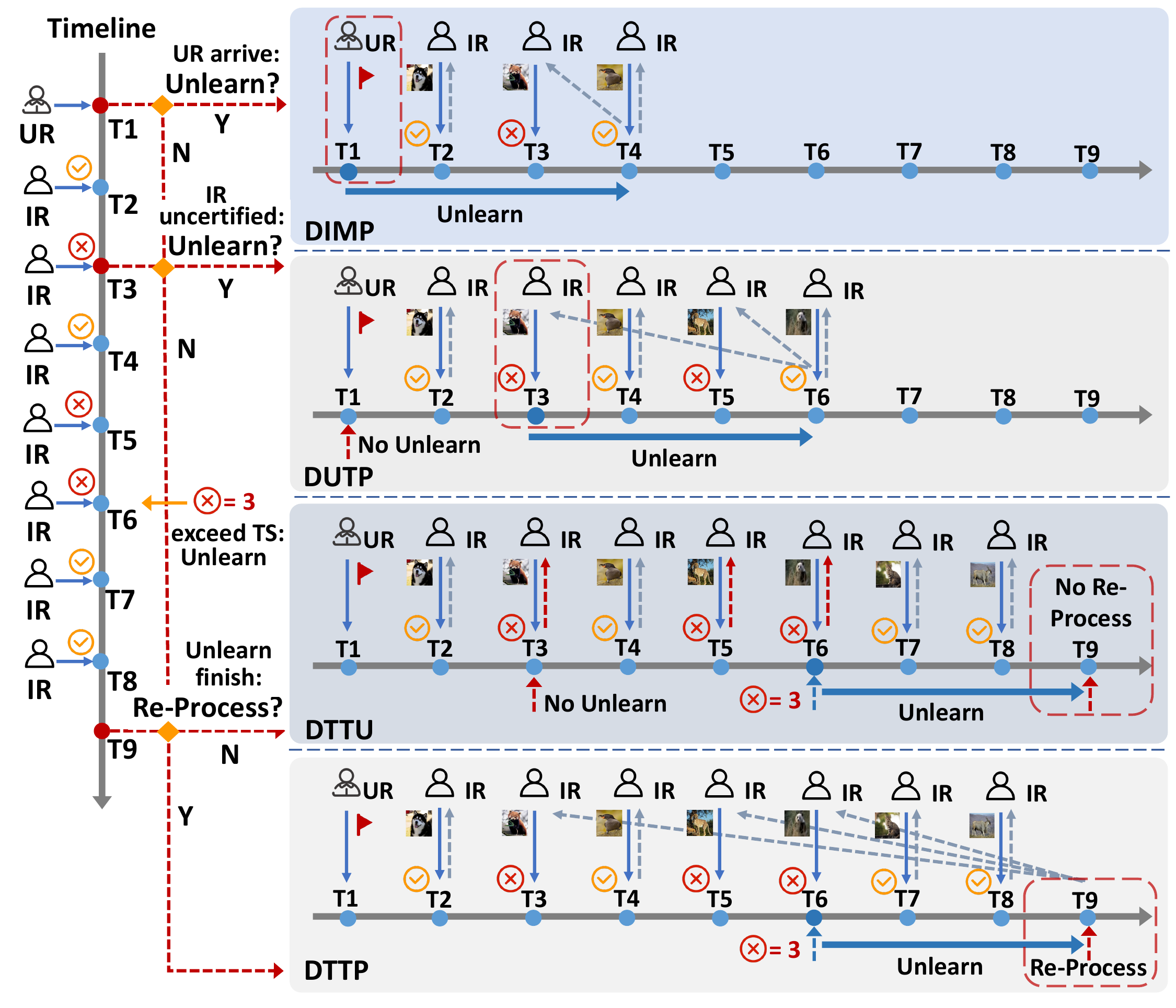}
	\caption{Illustration of variants with double contexts of \projectname.}
	\label{fig: algorithms}
        \Description{This figure illustrates four variants of the ERASER framework with double contexts for handling inference requests (IR) and unlearning requests (UR) over a timeline from T1 to T9. DIMP : At T1, a UR arrives and triggers the unlearning process. IRs continue to arrive and are processed with potential delays due to the unlearning process. Unlearning completes by T5. DUTP: At T1, a UR arrives, but unlearning is delayed until T5. Inference requests are processed without unlearning until T5. Unlearning begins at T5 and finishes by T7. DTTU: Inference requests are processed, and unlearning is triggered only when the number of uncertified IRs exceeds a threshold (set to 3) at T6. Unlearning starts at T6 and completes by T8. No re-processing of previous IRs occurs. DTTP: Similar to DTTU, but once unlearning finishes at T8, there is a re-processing phase for previous IRs that were processed before unlearning started. This re-processing occurs at T9. The timeline on the left shows the decision points for unlearning based on the arrival of URs and the certification status of IRs. The process flow within each variant depicts how IRs and URs are managed to maintain model consistency and privacy.}
\end{figure}

These different variants achieve a tradeoff between privacy, resource consumption, and inference latency. All double-context variants consume more resources to reduce inference latency. DIMP has the lowest inference latency among all variants, but its computation overhead is the highest. The computation overhead of DTTP is the lowest, but its inference latency is the highest. The inference latency and computation overhead of DUTP are between the first two variants. The inference latency of DTTU is close to that of DUTP, but the computation overhead is lower, at the cost of introducing privacy risk. Ultimately, the choice of variant depends on the specific requirements and constraints of a given application.
A server can also dynamically adjust among different variants based on the current computational resources and response time to pursue higher resource utilization.


\subsection{Analysis}
\label{subsec.analysis}
We take DIMP (Options A-B-B) and SISA with unlearning-request-first strategy (Options B-B-B)
as examples to theoretically analyze the acceleration capability of \projectname.

Assume the time required to retrain a constituent model is $r$, the number of unlearning requests is $n_u$, the number of inference requests is $n_i$, and the arrival time interval for all requests is $[0, T]$.
For simplicity, we assume that all unlearning requests arrive at fixed time intervals, with an unlearning request arriving every $\frac{T}{n_u}$ starting from $t=0$.
Since the time taken for a single inference is almost negligible compared to unlearning execution, we do not consider the time taken for inference in calculating. 

\begin{theorem} [Waiting Time of SISA]
    \label{theorem:sisa}
    The expected waiting time of each inference request in SISA is
    \begin{equation}
        \expect(w_{sisa})=
        \begin{cases}
            \frac{n_ur^2}{2T}, &\quad r\le \frac{T}{n_u}\\
            r-\frac{T}{2n_u}, &\quad r>\frac{T}{n_u}.
        \end{cases}
    \end{equation}
\end{theorem}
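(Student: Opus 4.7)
My plan is to model the interplay between inference and unlearning under the ``unlearning-request-first'' scheduling as a single-server deterministic queue: unlearning requests arrive uniformly in $[0,T]$ at effective rate $\lambda = n_u/T$ and each requires service time $r$, while inference requests, though arriving uniformly at rate $n_i/T$, are processed instantly once the server is free of unlearning work. The waiting time $w_{sisa}$ of an inference arriving at time $t$ is then the residual time until the server is able to serve it. The critical value $r = T/n_u$, i.e.\ offered load $\rho = n_u r / T = 1$, cleanly separates the light-traffic and heavy-traffic regimes of the theorem.

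For Case 1 ($r \le T/n_u$), the plan is to argue that the busy intervals $[U_k, U_k + r]$ induced by the $n_u$ unlearning arrivals are essentially non-overlapping in expectation, so each unlearning is processed before the next arrives. Since inference arrivals are independent of and uniform with respect to unlearning arrivals, the probability that an inference falls in some processing interval equals the total busy fraction $\rho = n_u r / T$; conditional on this event, the inference's position within the interval is uniform on $[0, r]$, giving residual service time $r/2$ in expectation. Multiplying yields $\expect[w_{sisa}] = \rho \cdot (r/2) = n_u r^2 / (2T)$, matching the Case-1 formula.

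For Case 2 ($r > T/n_u$), unlearning arrivals outpace service, and after the first arrival around time $T/n_u$ the server remains continuously busy until all $n_u$ requests finish. My plan is to approximate unlearning arrivals as evenly spaced at intervals $a = T/n_u$, so back-to-back processing intervals become $[a + (j-1)r, a + jr]$ for $j = 1,\ldots,n_u$. An inference arriving at a uniform $t \in [0, T]$ almost surely lies in some such processing interval, and its waiting time equals the residual until the end of that interval. I would then compute $\expect[w_{sisa}]$ by integrating this residual against the uniform distribution of $t$, carefully accounting for the initial idle segment $[0, a]$ and the boundary effect near $T$ where the last processing intervals extend past the horizon, aiming to obtain $r - T/(2n_u)$ and to verify the boundary consistency that at $r = T/n_u$ both formulas collapse to $T/(2n_u)$.

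The main obstacle I anticipate is Case 2: a naive residual-only computation gives roughly $\rho_{busy} \cdot r/2$ rather than the stated $r - T/(2n_u)$, so the analysis likely has to also credit the inference for the queued work ahead of it under the strict unlearning-first policy. Concretely, I would examine the workload $W(t) = kr + a - t$ on $t \in [ka, (k+1)a)$ that accumulates once $\rho > 1$, identify precisely which portion of it an arriving inference inherits (current service plus the amortized share of pending arrivals, minus the fraction of $[0, T]$ during which the server has not yet started), and reconcile the end-of-horizon truncation so that the integral collapses to the clean closed form $r - T/(2n_u)$. Pinning down this ``effective'' waiting contribution is the technical crux of the heavy-traffic case.
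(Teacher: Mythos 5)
Your Case 1 is fine and is essentially the paper's computation in disguise: with $r\le T/n_u$ the processing intervals do not overlap, the probability that a uniform inference arrival lands in one is $n_u r/T$, and the conditional residual is $r/2$, giving $n_u r^2/(2T)$ exactly as the paper obtains by integrating $(i-1)\tfrac{T}{n_u}+r-t_i$ over one inter-arrival period.

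Case 2, however, rests on the wrong service model, and that is a genuine gap. You treat unlearning as a single-server sequential queue with offered load $\rho=n_ur/T>1$, so the workload $W(t)\approx kr+\tfrac{T}{n_u}-t$ builds up linearly and an inference inheriting the queued work would wait on the order of $\tfrac{n_u}{2}\bigl(r-\tfrac{T}{n_u}\bigr)$, which grows with $n_u$ and cannot collapse to the stated $r-\tfrac{T}{2n_u}$ (a quantity smaller than $r$). The paper's model is different: unlearning requests arrive evenly spaced every $T/n_u$, and the retraining triggered by each request \emph{starts at its own arrival time} and runs for duration $r$ (different shards are retrained concurrently; cf.\ the parallel-capacity assumption and the count $k_r$ of simultaneously retraining constituent models in the DIMP analysis). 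Hence an inference arriving at $t_i\in[(i-1)\tfrac{T}{n_u},\,i\tfrac{T}{n_u})$ waits only until the most recently started retraining finishes, i.e.\ $w_{sisa}=(i-1)\tfrac{T}{n_u}+r-t_i$, and averaging the uniform $t_i$ over the interval immediately gives $r-\tfrac{T}{2n_u}$. Note also that your proposed repair moves in the wrong direction: you correctly observed that a residual-only argument gives about $r/2<r-\tfrac{T}{2n_u}$, but ``crediting the inference for the queued work ahead of it'' makes the wait larger still (and unbounded in $n_u$), so no accounting of end-of-horizon truncation will recover the closed form. The missing idea is simply that in this model no unlearning work queues behind other unlearning work; only the currently running retrainings delay the inference.
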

Theorem \ref{theorem:sisa} shows that $\expect(w_{sisa})$ is positively correlated with the frequency of unlearning requests $\frac{n_u}{T}$ and the retraining time $r$. 
\begin{theorem} [Waiting Time of DIMP]
    \label{theorem:DIMP}
    Assuming that the probability of the prediction result being uncertified at each judgment is $p_{uc}$, the upper bound of the expected inference request waiting time in DIMP is:
    \begin{equation}
        \expect(w_{dimp})\le
        \begin{cases}
            p_{uc}\cdot \frac{n_ur^2}{2T}, &\quad r\le \frac{T}{n_u}\\
            p_{uc}\cdot (r-\frac{T}{2n_u}), &\quad r>\frac{T}{n_u}
        \end{cases}
    \end{equation}
\end{theorem}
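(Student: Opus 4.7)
The plan is to reduce the claim to the previous theorem by conditioning on whether each inference request is certified by the mechanism of Theorem~\ref{thm.certified.inference.consistency}. Because DIMP uses immediate unlearning (Option II-B), the only reason an inference request ever has to wait is that (i) it arrives while some unlearning execution is still running in the back-end context and (ii) it fails the certified inference consistency check, so it cannot be responded to from the old model copies. By Option III-B, such a request is stored and reprocessed after the current unlearning completes, so its waiting time equals the residual unlearning time, whereas every certified request incurs zero waiting.

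First I would apply the law of total expectation with respect to the certification event $C$ (the inference is certified) and its complement $\bar C$ (uncertified), obtaining
\begin{equation}
\expect(w_{dimp}) = \prob(C)\,\expect(w_{dimp}\mid C) + \prob(\bar C)\,\expect(w_{dimp}\mid \bar C).
\end{equation}
By the definition of the algorithm, $\expect(w_{dimp}\mid C)=0$, and by hypothesis $\prob(\bar C)=p_{uc}$, so the sum collapses to $p_{uc}\cdot \expect(w_{dimp}\mid \bar C)$. This is the core reduction that explains why the DIMP bound is exactly a $p_{uc}$-fraction of the SISA bound.

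Next I would argue that $\expect(w_{dimp}\mid \bar C)\le \expect(w_{sisa})$. The key observation is that DIMP triggers unlearning execution on a shard at exactly the same moment as the ``unlearning-request-first'' SISA strategy (immediately upon the unlearning request), so the distribution of residual unlearning times sampled by the arrival of an inference request is identical in the two systems. Conditioned on being uncertified, a DIMP inference request therefore faces a residual-waiting distribution that is stochastically dominated by the unconditional SISA residual-waiting distribution analyzed in Theorem~\ref{theorem:sisa}. Combining this with the previous display yields the stated piecewise upper bound by direct substitution of the two cases $r\le T/n_u$ and $r>T/n_u$.

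The main obstacle I anticipate is making the dominance argument in the previous paragraph rigorous: certification is not statistically independent of whether an unlearning is currently in progress, because the pending unlearning set $\UR^t$ is precisely what drives the certification condition $2\gamma_1+\gamma_3\le \Gamma$. I would handle this by noting that the assumption isolates this dependence into the single parameter $p_{uc}$, and by observing that for the purposes of an upper bound it suffices to pessimistically treat every uncertified request as if it arrived in a SISA-identical queueing state; this gives the $\le$ (rather than $=$) in the theorem statement and explains why the result is phrased as an upper bound rather than an equality.
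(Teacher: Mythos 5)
Your proposal is correct and takes essentially the same route as the paper: condition on the certification judgment at arrival (zero wait with probability $1-p_{uc}$; with probability $p_{uc}$, pessimistically wait for all ongoing retrainings to finish, which coincides with the SISA waiting time of Theorem~\ref{theorem:sisa}, the independence from arrival time being supplied by the flat per-judgment $p_{uc}$ assumption). The only difference is cosmetic: the paper first writes an exact expression accounting for repeated certification judgments at each retraining completion and then relaxes it to this bound, whereas you pass directly to the bound.
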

Given the complexity of the actual expression for $\expect(w_{dimp})$, which makes direct comparison with $\expect(w_{sisa})$ challenging, we present its theoretical upper bound in a form analogous to $\expect(w_{sisa})$, which can also be expressed as $\expect(w_{dimp}) \le p_{uc} \cdot \expect(w_{sisa})$.

This expression shows that the processing speed of DIMP is at least $\frac{1}{p_{uc}}$ times that of SISA.
Our experimental results indicate that the $p_{uc}$ value is typically less than 0.01, meaning that the waiting time for DIMP does not exceed 1\% of SISA. 
We can also qualitatively analyze $p_{uc}$ based on the characteristics of the model:\\
1. The higher the model accuracy, the lower $p_{uc}$. This is because the lower the model accuracy, the more random the prediction results, and each constituent model may give different prediction results; while the higher the model accuracy, the more likely it is that the prediction results of all constituent models are the correct label. According to the definition of fine-grained certification, the higher the consistency among constituent models, the more likely it is to obtain a certified prediction result.\\
2. The longer the single retraining duration $r$, the higher $p_{uc}$. In the case of the same density of unlearnt requests, a larger $r$ usually results in more constituent models being retrained at the same time, which obviously leads to a higher probability of uncertified prediction results.




\section{Experiments}
\label{sec.experiment}
In this section, we conduct experiments to answer the following research questions:
{\bf RQ1}: Can \projectname ~reduce the inference latency of MLaaS and computational overhead while introducing no extra risk?
{\bf RQ2}: How do various groups of parameters affect the performance of \projectname, including 1) the hyper-parameter settings in \projectname: the number of shards and threshold of uncertification ratio; 2) frequency of requests: the density of requests per time slot and the ratio between the two types of requests; 3) server computing capability: the parallel capacity specifying the number of constituent models that can be unlearning-updated in parallel?
{\bf RQ3}: How does the temporal distribution of inference requests and unlearning requests affect the performance of \projectname?

\begin{table}[htbp]
    \footnotesize
    \caption{Details of Datasets and Models}
    \label{table: datasets&models}
    \centering
    \setlength{\tabcolsep}{4pt}
    \begin{tabular}{ccccc}
      \toprule
      \textbf{Dataset} & \textbf{Dimension} & \textbf{Size} & \textbf{Classes} & \textbf{Architecture} \\
      \midrule
      MNIST\cite{DBLP:journals/pieee/LeCunBBH98} & $28 \times 28$ & $60000$ & $10$ &{2 Conv. + 2 FC}\\
      Purchase\cite{DBLP:journals/nca/SakarPKK19} & $600$ & $250000$ & $2$ & 2 FC layers\\
      SVHN\cite{Netzer11Reading} & $32 \times 32 \times 3$ & $604833$ & $10$ & Wide ResNet-1-1\\
      Imagenette\cite{fastai_imagenette} & $500 \times 500 \times 3$ & $13394$ & $10$ & ResNet-18\\
      \bottomrule
    \end{tabular}
\end{table}


\subsection{Experiment Setup}
\partitle{Datasets \& Models}
We evaluate our \projectname ~on four open benchmark datasets and four common model architectures, aligning with mainstream settings, as summarized in Table \ref{table: datasets&models}. 
The datasets exhibit diversities in input dimensions, data volume, and the number of classes, while the models have different sizes and structures. 



\partitle{Baseline Method}
We take SISA with the unlearning-request-first strategy as the baseline. This is modified from the original SISA to avoid extra privacy risk under MLaaS, as described in Sec.\ref{subsec.toy.threats}.

\partitle{Implementation Details}
1) Training: 
in the experiment, we first follow the SISA \cite{bourtoule2021machine} method to divide the training set into multiple shards and then train constituent models separately using the data of each shard. 
The same models are served for all methods to ensure fairness.
2) Inference: 
the models are deployed to respond to requests.
We then randomly generate 500 unlearning requests and a certain number (determined by a hyper-parameter) of inference requests.
These requests are randomly submitted to the server within a time interval, the length of which is dictated by a predefined density parameter.
The same requests are fed to all methods for fairness.
3) Unlearning mechanism details:
once an unlearning update is triggered, the server retrains the constituent models with updated data shards.

\partitle{Evaluation Metrics}
The server processes these requests within the \projectname ~framework, calculating the inference latency and computation overhead.
\begin{itemize}[leftmargin=*]
    \item Average Waiting Time (AWT): The average time interval, measured in \textbf{seconds}, between the submission of an inference request and the receipt of the prediction result, which serves as an indicator of the inference latency.
    To prevent some lines from being crowded together, we applied a \textbf{logarithmic} transformation to the time in the figures, so negative results may be seen when the waiting time length is shorter than 1s.
    \item Number of Retraining (NoR): The total number of times each constituent model has been trained is counted to represent the computation overhead.

\end{itemize}

\begin{table*}[t]
  \caption{Overview of Results on Four Datasets.}
  \scriptsize
  \begin{center}
   \setlength{\tabcolsep}{1.7mm}{
    \begin{tabular*}{\textwidth}{ @{\extracolsep{\fill}} cccccccccc}
     \toprule
     \multirow{1.1}{*}{\textbf{Dataset}} &\multirow{1.1}{*}{\textbf{Metric}}  
     & SISA & DIMP & SUTP & DUTP  & STTU & DTTU & STTP & DTTP \\
     \midrule
     \multirow{2}{*}{PURCHASE}
     &AWT & 14.73 & 0.01($\times$1290.73) &0.57($\times$26.02) & 0.14($\times$105.64) & 0.46($\times$32.25) &0.15($\times$99.41) & 3.31($\times$4.45) & 3.14($\times$4.70)
  \\
     &NoR & 500	& 500($\times$1.000) & 412($\times$0.824) & 412($\times$0.824) & 399($\times$0.798) & 399($\times$0.798) & 350($\times$0.700) &350($\times$0.700)
 \\
     \midrule
     \multirow{2}{*}{SVHN}
    &AWT & 107.55 & 0.17($\times$647.44) &3.40($\times$31.63) & 0.74($\times$145.22) & 2.78($\times$38.70) &0.67($\times$159.82) &16.40($\times$6.56) &16.01($\times$6.72)
  \\
     &NoR & 500 & 500($\times$1.000) &429($\times$0.858) &429($\times$0.858) &421($\times$0.842) &421($\times$0.842) & 345($\times$0.690) &345($\times$0.690)
 \\
      \midrule
     \multirow{2}{*}{MNIST}
    &AWT & 21.34& 0.03($\times$801.63) &0.63($\times$34.09) &0.13($\times$159.86) &0.54($\times$39.76) &0.12($\times$181.95) &3.04($\times$7.03) &2.85($\times$7.49)
  \\
     &NoR & 500	& 500($\times$1.000)& 447($\times$0.894) &447($\times$0.894) &434($\times$0.868) & 434($\times$0.868) & 377($\times$0.754) &377($\times$0.754)
\\
      \midrule
     \multirow{2}{*}{IMAGENETTE}
    &AWT & 1387.59 & 14.76($\times$94.01)&175.64($\times$7.90) &45.24($\times$30.67) &147.91($\times$9.38) &39.95($\times$34.73) &536.35($\times$2.59) &521.77($\times$2.66)
 \\
     &NoR & 500& 500($\times$1.000) &476($\times$0.952) &476($\times$0.952) &474($\times$0.948) &474($\times$0.948) & 405($\times$0.810) &405($\times$0.810)
\\
     \bottomrule
   \end{tabular*}}
   \end{center}
   \label{tab: overview}
\end{table*}

\begin{figure}[!t]
    \centering
    \includegraphics[width=0.47\textwidth]{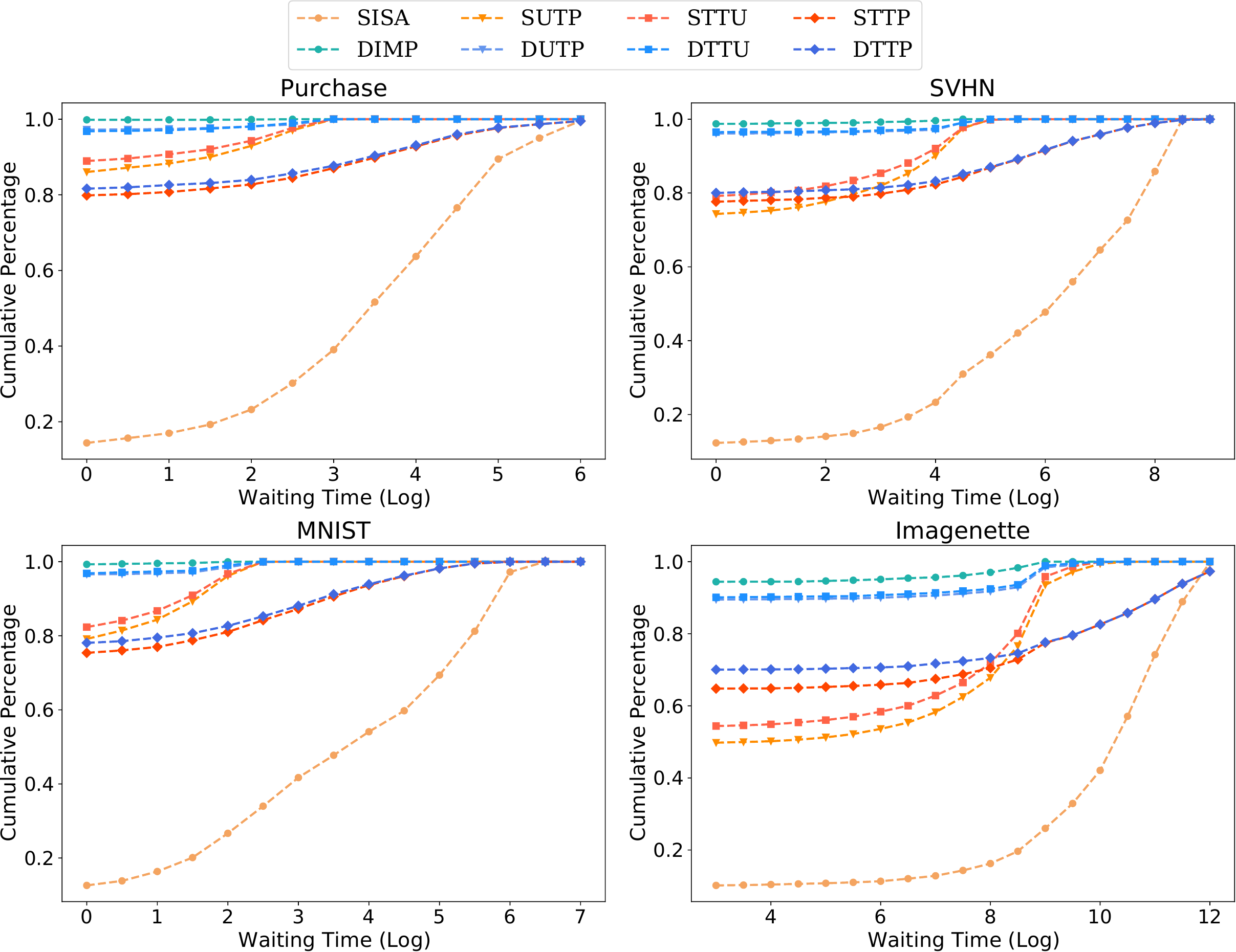}
    \caption{Cumulative Percentage of Waiting Time.}
    \label{fig: cumulative_percentage}
    \Description{This figure presents the cumulative percentage of waiting time for different strategies across four datasets: Purchase, SVHN, MNIST, and Imagenette. Each plot shows the cumulative percentage of waiting time (on a logarithmic scale) on the x-axis, against the percentage of requests processed on the y-axis. The strategies compared include SISA, DIMP, SUTP, STTU, DUTP, DTTU, STTP, and DTTP, each represented by different colored lines and markers. Purchase Dataset: The cumulative waiting time for requests shows a significant variation among the strategies, with SISA (orange dashed line) generally exhibiting longer waiting times compared to other strategies, which converge faster. SVHN Dataset: The patterns are similar, where SISA again has the longest waiting times, while other strategies like DUTP (blue dashed line) and DTTU (blue line) show quicker convergence. MNIST Dataset: The waiting times are more uniform across different strategies, with most lines clustering towards the top, indicating faster processing for most requests. Imagenette Dataset: There is a noticeable spread in waiting times, with SISA showing significantly longer times. Strategies like DUTP and DTTU again show better performance in terms of reduced waiting times. The figure highlights the effectiveness of different strategies in managing and reducing the waiting time for processing requests in machine learning systems, with SISA generally showing poorer performance compared to other methods.}
\end{figure}

\subsection{Overview}\label{sec:rq1}
To answer {\bf RQ1}, we report the overall performance of eight different methods on four real-world datasets in Table \ref{tab: overview}. The numbers in parentheses indicate improvements over SISA (due to space limitations, AWT in the table retains only two decimal places, while the numbers in parentheses are calculated based on AWT before rounding). We also illustrate the distribution of waiting time for all inference requests in the form of cumulative percentage graph in Figure \ref{fig: cumulative_percentage}. In the experiment, we fix the number of shards to 20 and the total number of requests to 5,000, of which 10\% are unlearning requests. The submission time of all requests follows a uniform distribution within a time interval, and the length of the time interval is set to the number of unlearning requests multiplied by the average time required to retrain a single constituent model, which is 6.1s, 22.7s, 4.1s, and 513.4s on 4 datasets with 20 shards, respectively. For methods with a threshold of uncertification ratio, the threshold is set to 0.05.

Regarding inference latency, all methods show significant improvements over SISA, with DIMP achieving up to 1,290 times faster response than SISA. This is primarily due to the majority of inference requests in SISA being blocked by unlearning execution, while most of requests in \projectname ~ can be responded to immediately, as shown in Figure \ref{fig: cumulative_percentage}. In terms of computation overhead, all methods consume less than SISA except DIMP, with STTP and DTTP consuming up to 69\% less computing resources than SISA.
Different methods trade-off on different evaluation metrics, which is consistent with the analysis in Section \ref{subsec.variants}. 

\subsection{Parameters Analysis}
To answer {\bf RQ2}, we use the control variable method to study the influence of various parameters on the performance of \projectname. 
We only show results on Purchase and SVHN due to space limitation. 

\begin{figure}[!t]
    \centering
    \includegraphics[width=0.47\textwidth]{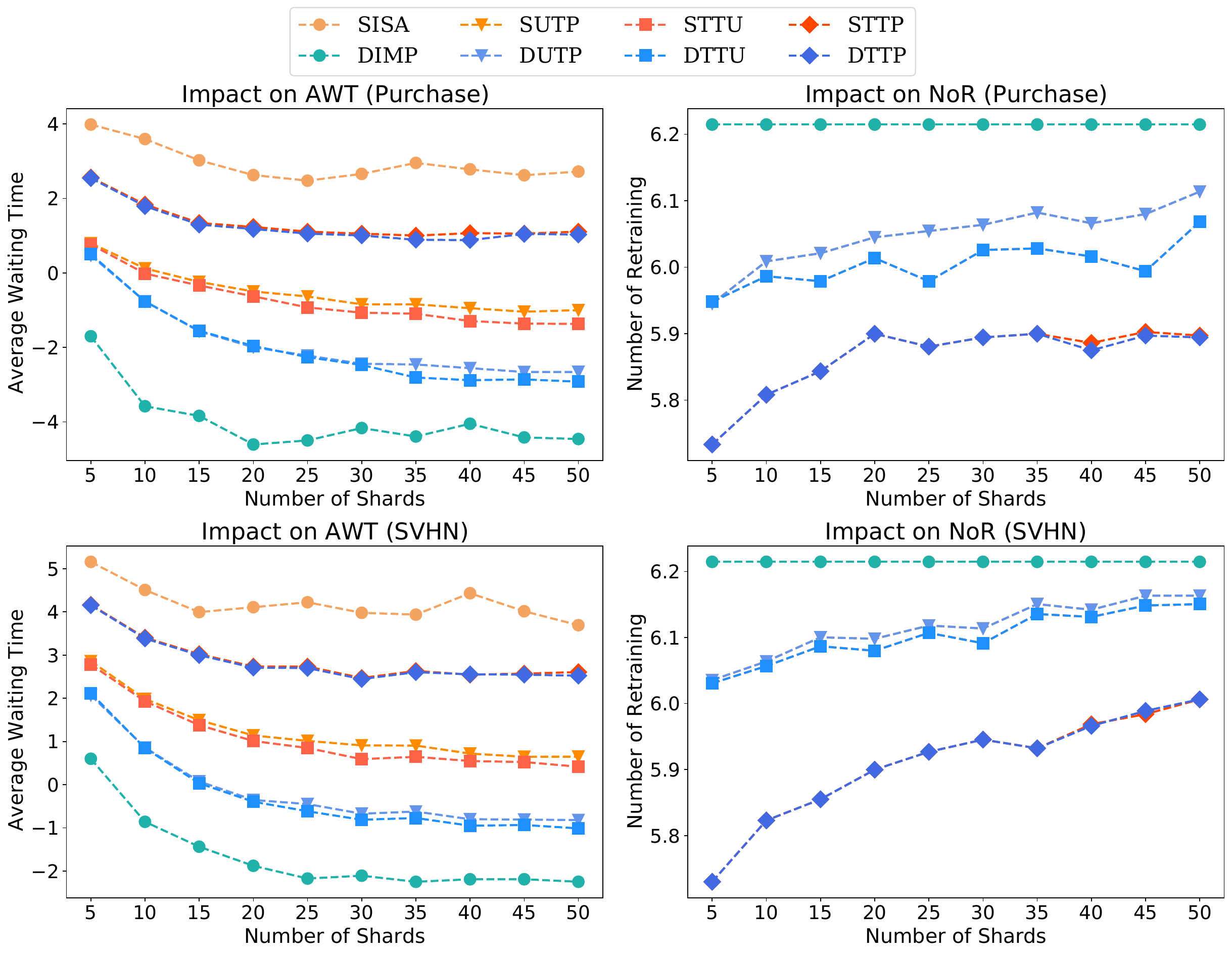}
    \caption{Evaluation on Number of Shards.}
    \label{fig:parameter_shard}
    \Description{This figure evaluates the impact of the number of shards on the average waiting time (AWT) and the number of retraining (NoR) for two datasets, Purchase and SVHN. The figure consists of four plots: Top Left (Impact on AWT for Purchase): This plot shows the average waiting time for different strategies as the number of shards increases from 5 to 50. The strategies include SISA, SUTP, STTU, STTP, DIMP, DUTP, DTTU, and DTTP. SISA generally shows higher waiting times, while strategies like DIMP and DUTP demonstrate significantly lower waiting times as the number of shards increases. Top Right (Impact on NoR for Purchase): This plot displays the number of retraining events required for different strategies as the number of shards increases. SISA consistently shows the highest number of retraining events, while other strategies like DTTP and DTTU show relatively fewer retraining events. Bottom Left (Impact on AWT for SVHN): This plot illustrates the average waiting time for different strategies with the SVHN dataset as the number of shards increases. Similar to the Purchase dataset, SISA has higher waiting times, while DIMP and DUTP show much lower waiting times. Bottom Right (Impact on NoR for SVHN): This plot shows the number of retraining events for different strategies with the SVHN dataset. SISA remains the highest, whereas DTTP and DTTU demonstrate fewer retraining events. Overall, the figure highlights the efficiency of different strategies in terms of reducing average waiting time and the number of retraining events, with DIMP and DUTP performing well compared to SISA.}
\end{figure}

\partitle{Number of Shards}
We demonstrate the impact of the number of shards on AWT and NoR in Figure \ref{fig:parameter_shard}. 

In general, as the number of shards increases, inference latency decreases. 
This is because given the same number of unlearning requests, having more constituent models reduces the proportion of models needing updates, thus increasing the probability of meeting the certified consistency condition.
However, when there are too many shards (e.g. more than 30), the decrease in AWT becomes less significant. This is attributed to the fact that each constituent model would have lower accuracy, resulting in more uncertified prediction results.

The NoR of SISA and DIMP is fixed to the number of unlearning requests, as each unlearning request triggers an unlearning update. 
The NoR of other methods is positively correlated with the number of shards, because when there are numerous shards, the distribution of unlearning requests among constituent models becomes more dispersed, reducing the average number of unlearning requests executed during a single unlearning update.
Note that although there are eight methods in total, only four lines are visible in the figure, as lines of methods that differ only in Option I overlap.

\begin{figure}[!t]
    \centering
    \includegraphics[width=0.47\textwidth]{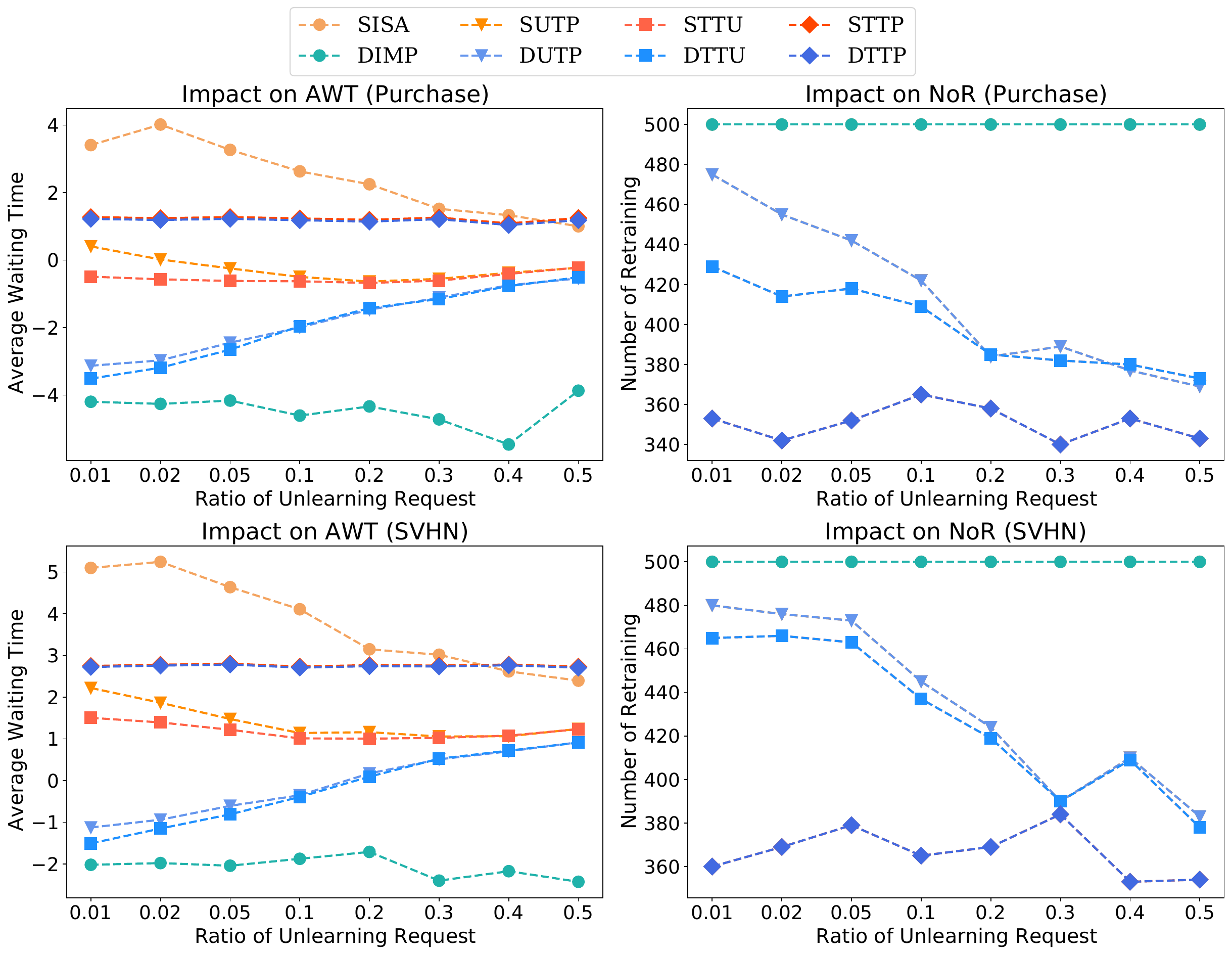}
    \caption{Evaluation on Ratio of Unlearning Requests.}
    \Description{This figure evaluates the impact of the ratio of unlearning requests on the average waiting time (AWT) and the number of retraining (NoR) for two datasets, Purchase and SVHN. The figure consists of four plots: Top Left (Impact on AWT for Purchase): This plot shows the average waiting time for different strategies as the ratio of unlearning requests increases from 0.01 to 0.5. The strategies include SISA, SUTP, STTU, STTP, DIMP, DUTP, DTTU, and DTTP. SISA generally shows higher waiting times, while strategies like DIMP and DUTP demonstrate significantly lower waiting times as the ratio of unlearning requests increases. Top Right (Impact on NoR for Purchase): This plot displays the number of retraining events required for different strategies as the ratio of unlearning requests increases. SISA consistently shows the highest number of retraining events, while other strategies like DTTP and DTTU show relatively fewer retraining events. Bottom Left (Impact on AWT for SVHN): This plot illustrates the average waiting time for different strategies with the SVHN dataset as the ratio of unlearning requests increases. Similar to the Purchase dataset, SISA has higher waiting times, while DIMP and DUTP show much lower waiting times. Bottom Right (Impact on NoR for SVHN): This plot shows the number of retraining events for different strategies with the SVHN dataset. SISA remains the highest, whereas DTTP and DTTU demonstrate fewer retraining events. Overall, the figure highlights the efficiency of different strategies in terms of reducing average waiting time and the number of retraining events, with DIMP and DUTP performing well compared to SISA.}
  \label{fig:unlearning_evaluation}
    \label{fig:parameter_ratio}
\end{figure}

\partitle{Ratio of Unlearning Requests}
To investigate the impact of the ratio of unlearning requests on \projectname's performance, we adjust the ratio between two requests with the number of unlearning requests fixed at 500 and the number of inference requests determined by the ratio.
The results are shown in Figure \ref{fig:parameter_ratio}.

The NoR of SUTP and DUTP decreases as the ratio increases because a higher number of inference requests increases the likelihood of yielding at least one uncertified prediction result, leading to more frequent unlearning updates.
However, the computation overhead of STTP and DTTP is almost unaffected by the ratio, as their unlearning update triggering condition depends on the uncertification proportion of all the inference requests, rather than the absolute number of uncertified inference requests.

The inference latency of SISA decreases with the decrease of inference requests, as too many inference requests prevent the server from fully utilizing its parallel capacity. 
For instance, if two unlearning requests arrive consecutively, the server can retrain their corresponding constituent models in parallel; however, if an inference request is inserted between two unlearning requests, the inference request must wait for the first unlearning request to be executed, and the second unlearning request has to wait for the processing of the inference request, making the execution of these two unlearning requests sequential. 
For methods such as SUTP and STTU that maintain a single context, more frequent unlearning updates lead to longer AWT. 
For DUTP and DTTU, which maintain both contexts, although the number of uncertified inferences increases with the total number of inference requests, the uncertification proportion decreases. 
In other words, the proportion of inference requests that need to wait decreases, resulting in a reduction in AWT. The inference latency of the remaining three methods is almost unaffected by the ratio of unlearning requests.

\begin{figure}[!t]
    \centering
    \includegraphics[width=0.47\textwidth]{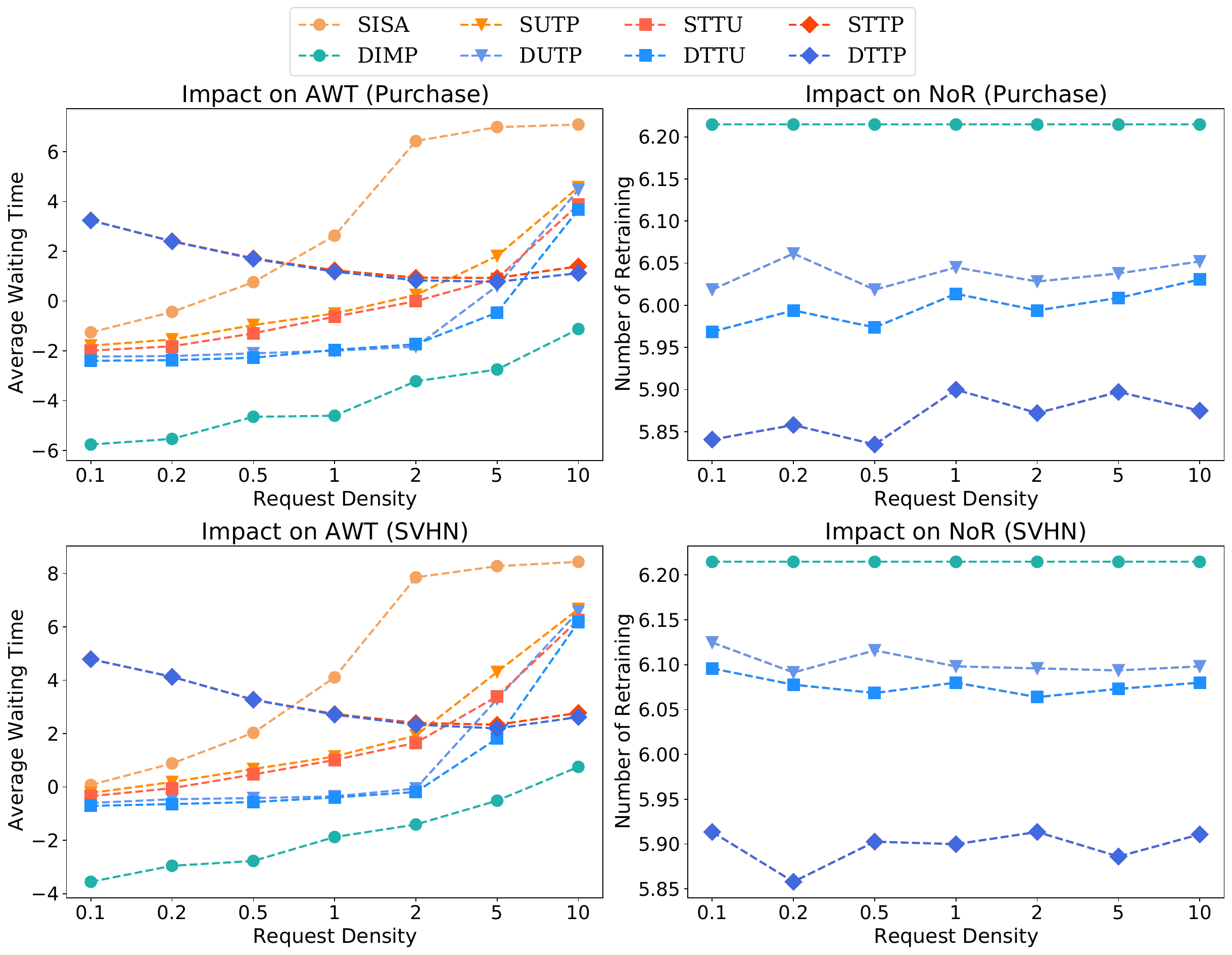}
    \caption{Evaluation on Request Density.}
    \label{fig:parameter_density}
      \Description{This figure evaluates the impact of request density on the average waiting time (AWT) and the number of retraining (NoR) for two datasets, Purchase and SVHN. The figure consists of four plots: Top Left (Impact on AWT for Purchase): This plot shows the average waiting time for different strategies as the request density increases from 0.1 to 10. The strategies include SISA, SUTP, STTU, STTP, DIMP, DUTP, DTTU, and DTTP. SISA generally shows higher waiting times, while strategies like DIMP and DUTP demonstrate significantly lower waiting times as the request density increases. Top Right (Impact on NoR for Purchase): This plot displays the number of retraining events required for different strategies as the request density increases. SISA consistently shows the highest number of retraining events, while other strategies like DTTP and DTTU show relatively fewer retraining events. Bottom Left (Impact on AWT for SVHN): This plot illustrates the average waiting time for different strategies with the SVHN dataset as the request density increases. Similar to the Purchase dataset, SISA has higher waiting times, while DIMP and DUTP show much lower waiting times. Bottom Right (Impact on NoR for SVHN): This plot shows the number of retraining events for different strategies with the SVHN dataset. SISA remains the highest, whereas DTTP and DTTU demonstrate fewer retraining events. Overall, the figure highlights the efficiency of different strategies in terms of reducing average waiting time and the number of retraining events, with DIMP and DUTP performing well compared to SISA.}
\end{figure}

\begin{table*}[t]
  \caption{\small Impact of Request Distribution on Imagenette.}
  \scriptsize
  \begin{center}
   \setlength{\tabcolsep}{1.7mm}{
    \begin{tabular*}{\textwidth}{ @{\extracolsep{\fill}} cccccccccc}
     \toprule
     \multirow{1.1}{*}{\textbf{Metric}} &\multirow{1.1}{*}{\textbf{Distribution}}  
      & SISA & DIMP & SUTP & DUTP & STTU & DTTU & STTP & DTTP   \\
     \midrule
     \multirow{12}{*}{AWT}
     &unimodal & 7526.3 & 16.1($\times$467.6) & 207.1($\times$36.3)& 49.5($\times$152.0) & 175.7($\times$42.8) & 43.4($\times$173.3) & 557.8($\times$13.5) & 542.2($\times$13.9)  \\
     &unimodal($\mu=\frac{T}{4}$) & 21373.9 & 16.6($\times$1291.1) & 241.5($\times$88.5) & 54.7($\times$391.0) & 212.4($\times$100.7) & 50.0($\times$427.5) &632.4($\times$33.8) & 609.4($\times$35.1) \\
     &unimodal($\mu=\frac{3T}{4}$) & 16157.9 & 16.88($\times$957.3) & 256.6($\times$63.0) & 54.7($\times$295.2) & 226.9($\times$71.2) & 50.4($\times$320.9) & 443.4($\times$36.4) & 431.0($\times$37.5) \\
     &unimodal($\sigma=\frac{T}{2}$) & 1715.2 & 14.73($\times$116.4) & 166.5($\times$10.3) & 44.5($\times$38.6) & 157.5($\times$10.9) & 42.5($\times$40.4) & 531.5($\times$3.2) & 514.7($\times$3.3)  \\
     &unimodal($\sigma=\frac{T}{4}$) & 22274.0 & 17.2($\times$1298.9) & 252.4($\times$88.3) & 56.6($\times$393.6) & 2312.0($\times$96.0) & 54.6($\times$408.0) & 506.4($\times$44.0) & 489.4($\times$45.5) \\
     \cmidrule(lr){2-10}
     &bimodal & 5522.1 & 14.2($\times$388.4) & 190.0($\times$29.01 &46.8($\times$118.0) &166.7($\times$33.12) &42.8($\times$129.1) & 508.8($\times$10.9) & 492.2($\times$11.2)  \\
     &trimodal & 3649.7 & 15.8($\times$231.5) & 185.6($\times$19.7) & 47.0($\times$77.7) &175.3($\times$20.8)&44.4($\times$82.1)& 548.4($\times$6.7)& 530.7($\times$6.9)\\
     &quadmodal & 1907.7 & 15.7($\times$121.3) & 180.0($\times$10.60) & 45.8($\times$41.6) & 168.7($\times$11.3) & 43.4($\times$44.1)& 559.4($\times$3.4) & 543.8($\times$3.5)\\
     \cmidrule(lr){2-10}
     &diff-$\mu$($\mu_u=\frac{T}{4}$, $\mu_i=\frac{3T}{4}$) &
     6754.8 & 10.9($\times$618.1) &144.6($\times$46.7)& 36.87($\times$183.2) & 134.3($\times$50.3) & 35.9($\times$188.2) & 406.2($\times$16.6)& 387.6($\times$17.4)\\
     &diff-$\mu$($\mu_u=\frac{3T}{4}$, $\mu_i=\frac{T}{4}$) &
     3144.5	&12.0($\times$262.4) &133.9($\times$23.5)& 36.8($\times$85.5) & 126.1($\times$24.9) & 34.9($\times$90.1) & 672.2($\times$4.7) &  660.0($\times$4.8)\\
     &diff-$\sigma$($\sigma_u=\frac{T}{2}$, $\sigma_i=\frac{T}{4}$) & 2079.9 &13.3($\times$156.7) &164.0($\times$12.7)&41.9($\times$49.6)&147.7($\times$14.1)&39.3($\times$52.9)&553.9($\times$3.8)&538.2($\times$3.9)  \\
     &diff-$\sigma$($\sigma_u=\frac{T}{4}$, $\sigma_i=\frac{T}{2}$) & 14436.8 &15.0($\times$961.9) &213.6($\times$67.6)&55.5($\times$260.3)&188.5($\times$76.6)&50.7($\times$284.7)&517.2($\times$27.9)&506.2($\times$28.5) \\
     \midrule
     \multirow{12}{*}{NoR}
    &unimodal & 500	&500($\times$1.000) &473($\times$0.946)&473	($\times$0.946) &470($\times$0.940)&470($\times$0.940) &401($\times$0.802) &401($\times$0.802)  \\
     &unimodal($\mu=\frac{T}{4}$) & 500	&500($\times$1.000) &480($\times$0.960) & 480($\times$0.960) &477($\times$0.954) & 477($\times$0.954)& 406($\times$0.812) & 406($\times$0.812)\\
     &unimodal($\mu=\frac{3T}{4}$) & 500	& 500($\times$1.000) & 475($\times$0.950) & 475($\times$0.950) & 473($\times$0.946)& 473($\times$0.946) & 401($\times$0.802) & 401($\times$0.802)
  \\
     &unimodal($\sigma=\frac{T}{2}$) & 500 &500($\times$1.000)& 478($\times$0.956) &478($\times$0.956) &478($\times$0.956) &478($\times$0.956) &416($\times$0.832) &416($\times$0.832)
 \\
     &unimodal($\sigma=\frac{T}{4}$) & 500 & 500($\times$1.000)& 465($\times$0.930 & 465($\times$0.930) & 463($\times$0.926)& 463($\times$0.926)& 393($\times$0.786) & 393($\times$0.786)
\\
     \cmidrule(lr){2-10}
     &bimodal & 500 & 500($\times$1.000) & 475($\times$0.950)& 475($\times$0.950)& 472($\times$0.944) & 472($\times$0.944) & 402($\times$0.804) &402($\times$0.804)
   \\
     &trimodal & 500 & 500($\times$1.000) & 479($\times$0.958) & 479($\times$0.958) & 476($\times$0.952) & 476($\times$0.952) & 412($\times$0.824) & 412($\times$0.824)
  \\
     &quadmodal & 500 & 500($\times$1.000) & 482($\times$0.964) & 482($\times$0.964) & 480($\times$0.960) & 480($\times$0.960) & 413($\times$0.826) & 413($\times$0.826)
 \\
     \cmidrule(lr){2-10}
     &diff-$\mu$($\mu_u=\frac{T}{4}$, $\mu_i=\frac{3T}{4}$) & 500	& 500($\times$1.000) &457($\times$0.914) & 457($\times$0.914) &454($\times$0.908) & 454($\times$0.908) &414($\times$0.828) &414($\times$0.828)
 \\
     &diff-$\mu$($\mu_u=\frac{3T}{4}$, $\mu_i=\frac{T}{4}$) & 500	& 500($\times$1.000) & 464($\times$0.928) & 464($\times$0.928) & 461($\times$0.922) & 461($\times$0.922) & 319($\times$0.638) & 319($\times$0.638)
  \\
     &diff-$\sigma$($\sigma_u=\frac{T}{2}$, $\sigma_i=\frac{T}{4}$) & 500	& 500($\times$1.000) &478($\times$0.956) & 478($\times$0.956) &475($\times$0.950) &475($\times$0.950) & 412($\times$0.824) &412($\times$0.824)
  \\
     &diff-$\sigma$($\sigma_u=\frac{T}{4}$, $\sigma_i=\frac{T}{2}$) & 500	& 500($\times$1.000) & 474($\times$0.948) &474($\times$0.948) &467($\times$0.934) &467($\times$0.934) & 401($\times$0.802) &401($\times$0.802)
  \\
     \bottomrule
   \end{tabular*}}
   \end{center}
   \label{tab: distribution}
\end{table*}

\partitle{Request Density}
In the experiments in Sect.\ref{sec:rq1}, we set the basic length of the time interval to the number of unlearning requests multiplied by the time required to retrain a single constituent model. 
To explore the impact of request density on \projectname's performance, we divide the basic time interval length by a coefficient; the larger the coefficient, the shorter the time interval, and the denser the arrival of requests. The effect of Request Density ~is shown in Figure \ref{fig:parameter_density}. The NoR of all methods is not significantly related to density.

The AWT of all methods except STTP and DTTP escalates as the density increases because the overall time interval becomes shorter. Although the total duration of unlearning execution remains nearly constant, its proportionate share increases, leading to a higher likelihood that inference requests will have to wait.
However, STTP and DTTP show the opposite trend, which is due to the composition of their waiting time. 
The waiting time for inference requests in these two methods primarily depends on waiting for the uncertification ratio to reach the threshold.
Therefore, the higher the request density, the faster the uncertification ratio reaches the threshold, and thus the waiting time decreases. 


\begin{figure}[!t]
    \centering
    \includegraphics[width=0.47\textwidth]{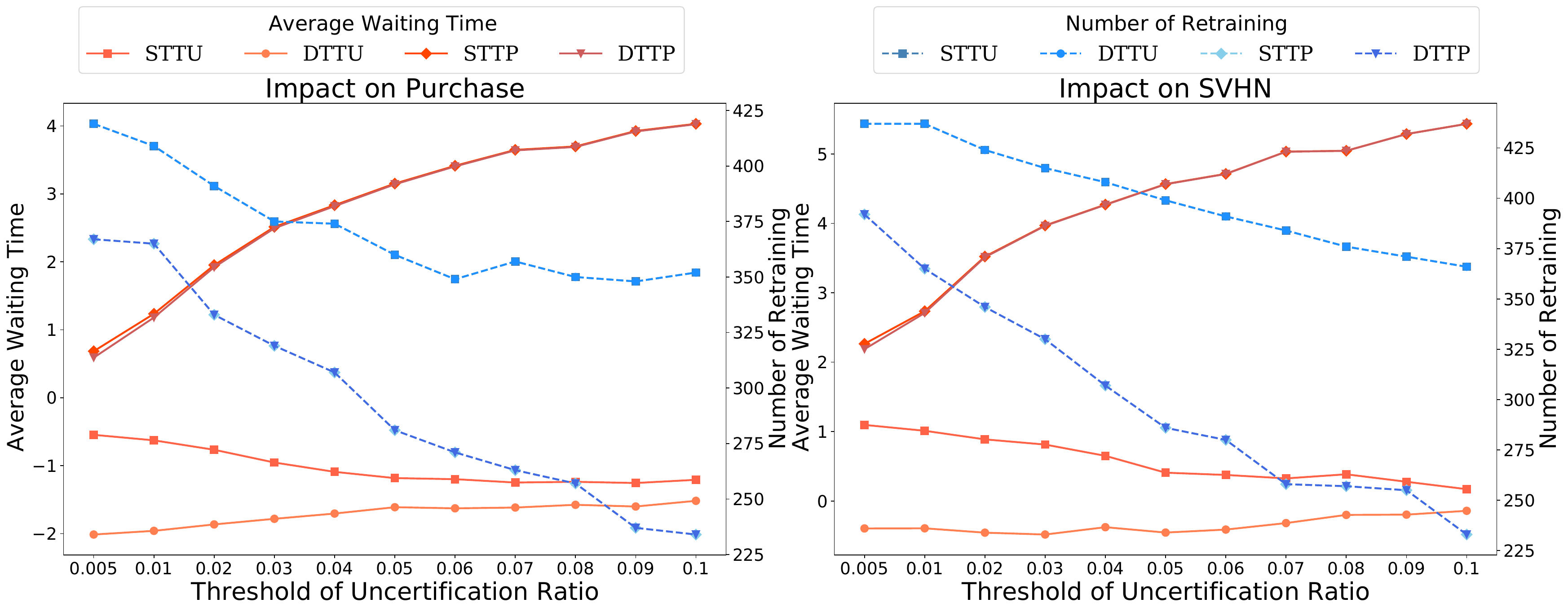}
    \caption{Evaluation on Threshold of Uncertification Ratio.}
    \label{fig:parameter_uncertified}
    \Description{This figure evaluates the impact of the threshold of uncertification ratio on the average waiting time (AWT) and the number of retraining (NoR) for two datasets, Purchase and SVHN. The figure consists of two plots: Left (Impact on Purchase): This plot shows the average waiting time and the number of retraining events for different strategies as the threshold of uncertification ratio increases from 0.005 to 0.1. The strategies include STTU, DTTU, STTP, and DTTP. The average waiting time is indicated by lines with circular markers, while the number of retraining events is indicated by lines with square markers. STTU generally shows a higher number of retraining events and a decreasing trend in waiting time, while DTTP shows the lowest number of retraining events with a relatively stable waiting time. Right (Impact on SVHN): This plot illustrates the same metrics for the SVHN dataset. Similar patterns are observed, with STTU showing a higher number of retraining events and decreasing waiting time, whereas DTTP maintains a low number of retraining events and stable waiting times. Overall, the figure highlights how varying the threshold of uncertification ratio affects the performance of different strategies in terms of average waiting time and the number of retraining events, with DTTP performing consistently well compared to others.}
\end{figure}

\partitle{Threshold of Uncertification Ratio}
The threshold-triggered methods initiate an unlearning update once the uncertification threshold is exceeded.
We show the impact of the threshold on the performance of these four methods in Figure \ref{fig:parameter_uncertified}. 
To save space, we plot AWT and NoR in the same figure.

The NoR and threshold of these four methods are negatively correlated because a higher threshold requires more uncertified predictions to trigger retraining.
The inference latency of STTP and DTTP increases with an increasing threshold, as inference requests in the uncertification list must first wait for the ratio to reach the threshold, and a larger threshold results in a longer waiting time.
The AWT of STTU decreases with an increasing threshold, because a higher threshold reduces NoR, and the number of inference requests that need to wait for the completion of unlearning updates also decreases.
The AWT of DTTU remains almost unchanged with the increasing threshold, possibly because the server can still respond to inference requests during unlearning updates. 

\subsection{Temporal Distribution of Requests}
In previous experiments, we assume that both types of requests are uniformly distributed over time. However, this may not be the case in real-world scenarios as request distribution may be uneven, and different types of requests might follow distinct distributions.
We use the Gaussian distribution as an example to study the effects of the temporal distribution of requests on the performance of \projectname ~to answer {\bf RQ3}. 
We focus on three distribution characteristics:
1. The mean and variance; 
2. Unimodal or multimodal;
3. Differences between unlearning and inference request distributions.
Assuming the time interval is $[0, T]$, if the sampled arrival time lies outside the interval, we will re-sample. 
Experimental results on the Imagenette dataset are shown in Table \ref{tab: distribution}.

Initially, we assume that both unlearning and inference requests follow a Gaussian distribution with a mean of $\mu=\frac{T}{2}$ and a standard deviation of $\sigma=\frac{T}{3}$. The first row of the table displays the results. The inference latency increases compared to the uniform distribution in Table \ref{tab: overview}, while the NoR slightly decreases. We then alter the mean to $\mu=\frac{T}{4}$ and $\mu=\frac{3T}{4}$, corresponding to rows 2-3. The inference latency increases further, potentially due to the resampling strategy. When the distribution's center is close to the interval's edge, resampling points outside the interval cause the peak to be higher. Next, we change the standard deviation to $\sigma=\frac{T}{2}$ and $\sigma=\frac{T}{4}$, corresponding to rows 4-5. A larger standard deviation flattens the distribution.

Considering that the temporal distribution of requests may have multiple peaks, we assume several multimodal Gaussian distributions, corresponding to rows 6-8. 
The results show that as the number of peaks increases, the inference latency decreases and the computation overhead increases, because a distribution with more peaks resembles a uniform distribution more closely.

Lastly, we examine cases where unlearning and inference requests follow different distributions. We first consider different mean values (rows 9-10). Regardless of whether the unlearning or inference request distribution peak came first, both inference latency and computation overhead decrease, sometimes even surpassing uniform distribution performance. This is because inference requests are more likely to wait while numerous unlearning requests are submitted concurrently. Staggering the peaks of the two distributions results in fewer waiting inference requests and reduces unlearning update frequency, enabling single unlearning updates to process more requests. However, when the inference request distribution peak comes first, i.e., $\mu_u=\frac{3T}{4}$ and $\mu_i=\frac{T}{4}$, the inference latency for STTP and DTTP becomes significantly high. This is because some inference requests' prediction results may be uncertified at the peak, but the ratio might not reach the threshold, forcing them to wait until the unlearning request peak for reprocessing. We also consider different standard deviations (rows 11-12). A Higher concentration of unlearning requests leads to increased inference latency, while the impact of a higher concentration of inference requests is relatively small.

Please refer to the appendices in arXiv version for additional experimental results, including the scenarios where requesters might have malicious intentions. 

\section{Conclusion and Future Work}
\label{sec.conclusion}
In this paper, we reveal that handling unlearning requests in an inference-oblivious manner under MLaaS setting can introduce two new security and privacy vulnerabilities. We proposed a pioneering framework, \projectname, for inference serving-aware machine unlearning that simultaneously serves both inference requests from queriers and unlearning requests from data owners. \projectname ~devised a novel certified inference consistency approach to reduce inference latency without introducing extra privacy risks for unlearning requested data. It also offers three groups of design options, providing seven specific variants tailored to different MLaaS environments and design objectives. Our extensive empirical results demonstrate the effectiveness of \projectname ~across various settings.


Future work includes 1) Studying more optimal design choices (e.g., better unlearning timing) that can adapt to the workloads of inference and unlearning requests; 2) Considering other critical factors (e.g., resources provision and cost) in terms of their security and privacy implications, as well as how to achieve better Pareto frontiers among all factors; 3) For certain specific ML problems, employing corresponding machine unlearning algorithms instead of retraining to update the constituent models.

\begin{acks}
This work is supported by the National Key Research and Development Program of China (No. 2022YFE0113200), the National Natural Science Foundation of China (No. 62206207 \& No. 62072395), and the Joint Fund of the National Natural Science Foundation of China (No. U20A20178).
\end{acks}



\bibliographystyle{ACM-Reference-Format}
\balance
\bibliography{reference}


\begin{thebibliography}{71}


\ifx \showCODEN    \undefined \def \showCODEN     #1{\unskip}     \fi
\ifx \showDOI      \undefined \def \showDOI       #1{#1}\fi
\ifx \showISBNx    \undefined \def \showISBNx     #1{\unskip}     \fi
\ifx \showISBNxiii \undefined \def \showISBNxiii  #1{\unskip}     \fi
\ifx \showISSN     \undefined \def \showISSN      #1{\unskip}     \fi
\ifx \showLCCN     \undefined \def \showLCCN      #1{\unskip}     \fi
\ifx \shownote     \undefined \def \shownote      #1{#1}          \fi
\ifx \showarticletitle \undefined \def \showarticletitle #1{#1}   \fi
\ifx \showURL      \undefined \def \showURL       {\relax}        \fi
\providecommand\bibfield[2]{#2}
\providecommand\bibinfo[2]{#2}
\providecommand\natexlab[1]{#1}
\providecommand\showeprint[2][]{arXiv:#2}

\bibitem[{Amazon Web Services}(2018)]%
        {aws2018}
\bibfield{author}{\bibinfo{person}{{Amazon Web Services}}.} \bibinfo{year}{2018}\natexlab{}.
\newblock \bibinfo{title}{Amazon Web Services}.
\newblock \bibinfo{howpublished}{\url{https://aws.amazon.com/ }}.
\newblock


\bibitem[Bartlett et~al\mbox{.}(2005)]%
        {bartlett2005recognizing}
\bibfield{author}{\bibinfo{person}{Marian~Stewart Bartlett}, \bibinfo{person}{Gwen Littlewort}, \bibinfo{person}{Mark Frank}, \bibinfo{person}{Claudia Lainscsek}, \bibinfo{person}{Ian Fasel}, {and} \bibinfo{person}{Javier Movellan}.} \bibinfo{year}{2005}\natexlab{}.
\newblock \showarticletitle{Recognizing facial expression: machine learning and application to spontaneous behavior}. In \bibinfo{booktitle}{\emph{2005 IEEE Computer Society Conference on Computer Vision and Pattern Recognition (CVPR'05)}}, Vol.~\bibinfo{volume}{2}. IEEE, \bibinfo{pages}{568--573}.
\newblock


\bibitem[Bourtoule et~al\mbox{.}(2021)]%
        {bourtoule2021machine}
\bibfield{author}{\bibinfo{person}{Lucas Bourtoule}, \bibinfo{person}{Varun Chandrasekaran}, \bibinfo{person}{Christopher~A Choquette-Choo}, \bibinfo{person}{Hengrui Jia}, \bibinfo{person}{Adelin Travers}, \bibinfo{person}{Baiwu Zhang}, \bibinfo{person}{David Lie}, {and} \bibinfo{person}{Nicolas Papernot}.} \bibinfo{year}{2021}\natexlab{}.
\newblock \showarticletitle{Machine unlearning}. In \bibinfo{booktitle}{\emph{2021 IEEE Symposium on Security and Privacy (SP)}}. IEEE, \bibinfo{pages}{141--159}.
\newblock


\bibitem[Brophy and Lowd(2021)]%
        {brophy2021machine}
\bibfield{author}{\bibinfo{person}{Jonathan Brophy} {and} \bibinfo{person}{Daniel Lowd}.} \bibinfo{year}{2021}\natexlab{}.
\newblock \showarticletitle{Machine unlearning for random forests}. In \bibinfo{booktitle}{\emph{International Conference on Machine Learning}}. PMLR, \bibinfo{pages}{1092--1104}.
\newblock


\bibitem[Cao and Yang(2015)]%
        {cao2015towards}
\bibfield{author}{\bibinfo{person}{Yinzhi Cao} {and} \bibinfo{person}{Junfeng Yang}.} \bibinfo{year}{2015}\natexlab{}.
\newblock \showarticletitle{Towards making systems forget with machine unlearning}. In \bibinfo{booktitle}{\emph{2015 IEEE symposium on security and privacy}}. IEEE, \bibinfo{pages}{463--480}.
\newblock


\bibitem[Chen et~al\mbox{.}(2022b)]%
        {chen2022recommendation}
\bibfield{author}{\bibinfo{person}{Chong Chen}, \bibinfo{person}{Fei Sun}, \bibinfo{person}{Min Zhang}, {and} \bibinfo{person}{Bolin Ding}.} \bibinfo{year}{2022}\natexlab{b}.
\newblock \showarticletitle{Recommendation unlearning}. In \bibinfo{booktitle}{\emph{Proceedings of the ACM Web Conference 2022}}. \bibinfo{pages}{2768--2777}.
\newblock


\bibitem[Chen et~al\mbox{.}(2021)]%
        {chen2021machine}
\bibfield{author}{\bibinfo{person}{Min Chen}, \bibinfo{person}{Zhikun Zhang}, \bibinfo{person}{Tianhao Wang}, \bibinfo{person}{Michael Backes}, \bibinfo{person}{Mathias Humbert}, {and} \bibinfo{person}{Yang Zhang}.} \bibinfo{year}{2021}\natexlab{}.
\newblock \showarticletitle{When machine unlearning jeopardizes privacy}. In \bibinfo{booktitle}{\emph{Proceedings of the 2021 ACM SIGSAC Conference on Computer and Communications Security}}. \bibinfo{pages}{896--911}.
\newblock


\bibitem[Chen et~al\mbox{.}(2022c)]%
        {chen2022graph}
\bibfield{author}{\bibinfo{person}{Min Chen}, \bibinfo{person}{Zhikun Zhang}, \bibinfo{person}{Tianhao Wang}, \bibinfo{person}{Michael Backes}, \bibinfo{person}{Mathias Humbert}, {and} \bibinfo{person}{Yang Zhang}.} \bibinfo{year}{2022}\natexlab{c}.
\newblock \showarticletitle{Graph unlearning}. In \bibinfo{booktitle}{\emph{Proceedings of the 2022 ACM SIGSAC Conference on Computer and Communications Security}}. \bibinfo{pages}{499--513}.
\newblock


\bibitem[Chen et~al\mbox{.}(2022a)]%
        {chen2022amplifying}
\bibfield{author}{\bibinfo{person}{Yufei Chen}, \bibinfo{person}{Chao Shen}, \bibinfo{person}{Yun Shen}, \bibinfo{person}{Cong Wang}, {and} \bibinfo{person}{Yang Zhang}.} \bibinfo{year}{2022}\natexlab{a}.
\newblock \showarticletitle{Amplifying membership exposure via data poisoning}.
\newblock \bibinfo{journal}{\emph{Advances in Neural Information Processing Systems}}  \bibinfo{volume}{35} (\bibinfo{year}{2022}), \bibinfo{pages}{29830--29844}.
\newblock


\bibitem[CNPD(2018)]%
        {CNPDDroitOubli}
\bibfield{author}{\bibinfo{person}{CNPD}.} \bibinfo{year}{2018}\natexlab{}.
\newblock \bibinfo{title}{The Right to Erasure ("Right To be Forgotten"), by Commission Nationale pour la Protection des Données}.
\newblock
\newblock
\urldef\tempurl%
\url{https://cnpd.public.lu/en/particuliers/vos-droits/droit-oubli.html}
\showURL{%
\tempurl}


\bibitem[Crankshaw et~al\mbox{.}(2017)]%
        {DBLP:conf/nsdi/CrankshawWZFGS17}
\bibfield{author}{\bibinfo{person}{Daniel Crankshaw}, \bibinfo{person}{Xin Wang}, \bibinfo{person}{Giulio Zhou}, \bibinfo{person}{Michael~J. Franklin}, \bibinfo{person}{Joseph~E. Gonzalez}, {and} \bibinfo{person}{Ion Stoica}.} \bibinfo{year}{2017}\natexlab{}.
\newblock \showarticletitle{Clipper: {A} Low-Latency Online Prediction Serving System}. In \bibinfo{booktitle}{\emph{14th {USENIX} Symposium on Networked Systems Design and Implementation, {NSDI} 2017, Boston, MA, USA, March 27-29, 2017}}, \bibfield{editor}{\bibinfo{person}{Aditya Akella} {and} \bibinfo{person}{Jon Howell}} (Eds.). \bibinfo{publisher}{{USENIX} Association}, \bibinfo{pages}{613--627}.
\newblock


\bibitem[Cui et~al\mbox{.}(2022)]%
        {DBLP:conf/usenix/Cui00WLZ0G22}
\bibfield{author}{\bibinfo{person}{Weihao Cui}, \bibinfo{person}{Han Zhao}, \bibinfo{person}{Quan Chen}, \bibinfo{person}{Hao Wei}, \bibinfo{person}{Zirui Li}, \bibinfo{person}{Deze Zeng}, \bibinfo{person}{Chao Li}, {and} \bibinfo{person}{Minyi Guo}.} \bibinfo{year}{2022}\natexlab{}.
\newblock \showarticletitle{DVABatch: Diversity-aware Multi-Entry Multi-Exit Batching for Efficient Processing of {DNN} Services on GPUs}. In \bibinfo{booktitle}{\emph{2022 {USENIX} Annual Technical Conference, {USENIX} {ATC} 2022, Carlsbad, CA, USA, July 11-13, 2022}}, \bibfield{editor}{\bibinfo{person}{Jiri Schindler} {and} \bibinfo{person}{Noa Zilberman}} (Eds.). \bibinfo{publisher}{{USENIX} Association}, \bibinfo{pages}{183--198}.
\newblock


\bibitem[Cui et~al\mbox{.}(2021)]%
        {DBLP:conf/sc/Cui0CZLZSMYLG21}
\bibfield{author}{\bibinfo{person}{Weihao Cui}, \bibinfo{person}{Han Zhao}, \bibinfo{person}{Quan Chen}, \bibinfo{person}{Ningxin Zheng}, \bibinfo{person}{Jingwen Leng}, \bibinfo{person}{Jieru Zhao}, \bibinfo{person}{Zhuo Song}, \bibinfo{person}{Tao Ma}, \bibinfo{person}{Yong Yang}, \bibinfo{person}{Chao Li}, {and} \bibinfo{person}{Minyi Guo}.} \bibinfo{year}{2021}\natexlab{}.
\newblock \showarticletitle{Enable simultaneous {DNN} services based on deterministic operator overlap and precise latency prediction}. In \bibinfo{booktitle}{\emph{International Conference for High Performance Computing, Networking, Storage and Analysis, {SC} 2021, St. Louis, Missouri, USA, November 14-19, 2021}}, \bibfield{editor}{\bibinfo{person}{Bronis~R. de~Supinski}, \bibinfo{person}{Mary~W. Hall}, {and} \bibinfo{person}{Todd Gamblin}} (Eds.). \bibinfo{publisher}{{ACM}}, \bibinfo{pages}{15}.
\newblock


\bibitem[Dwork et~al\mbox{.}(2006)]%
        {dwork2006calibrating}
\bibfield{author}{\bibinfo{person}{Cynthia Dwork}, \bibinfo{person}{Frank McSherry}, \bibinfo{person}{Kobbi Nissim}, {and} \bibinfo{person}{Adam Smith}.} \bibinfo{year}{2006}\natexlab{}.
\newblock \showarticletitle{Calibrating noise to sensitivity in private data analysis}. In \bibinfo{booktitle}{\emph{Theory of Cryptography: Third Theory of Cryptography Conference, TCC 2006, New York, NY, USA, March 4-7, 2006. Proceedings 3}}. Springer, \bibinfo{pages}{265--284}.
\newblock


\bibitem[for Data Protection Grand-Duchy~of Luxembourg(2018)]%
        {CNPD}
\bibfield{author}{\bibinfo{person}{National~Commission for Data Protection Grand-Duchy~of Luxembourg}.} \bibinfo{year}{2018}\natexlab{}.
\newblock \bibinfo{title}{{Commission Nationale pour la Protection des Données (CNPD)}}.
\newblock
\newblock
\urldef\tempurl%
\url{https://cnpd.public.lu/en/commission-nationale.html}
\showURL{%
\tempurl}


\bibitem[{GDPR}(2016)]%
        {GDPRRightToBeForgotten}
\bibfield{author}{\bibinfo{person}{{GDPR}}.} \bibinfo{year}{2016}\natexlab{}.
\newblock \bibinfo{title}{Right to be Forgotten in GDPR}.
\newblock
\newblock
\urldef\tempurl%
\url{https://gdpr-info.eu/issues/right-to-be-forgotten/}
\showURL{%
\tempurl}


\bibitem[Ginart et~al\mbox{.}(2019)]%
        {ginart2019making}
\bibfield{author}{\bibinfo{person}{Antonio Ginart}, \bibinfo{person}{Melody Guan}, \bibinfo{person}{Gregory Valiant}, {and} \bibinfo{person}{James~Y Zou}.} \bibinfo{year}{2019}\natexlab{}.
\newblock \showarticletitle{Making ai forget you: Data deletion in machine learning}.
\newblock \bibinfo{journal}{\emph{Advances in neural information processing systems}}  \bibinfo{volume}{32} (\bibinfo{year}{2019}).
\newblock


\bibitem[Golatkar et~al\mbox{.}(2021)]%
        {golatkar2021mixed}
\bibfield{author}{\bibinfo{person}{Aditya Golatkar}, \bibinfo{person}{Alessandro Achille}, \bibinfo{person}{Avinash Ravichandran}, \bibinfo{person}{Marzia Polito}, {and} \bibinfo{person}{Stefano Soatto}.} \bibinfo{year}{2021}\natexlab{}.
\newblock \showarticletitle{Mixed-privacy forgetting in deep networks}. In \bibinfo{booktitle}{\emph{Proceedings of the IEEE/CVF Conference on Computer Vision and Pattern Recognition}}. \bibinfo{pages}{792--801}.
\newblock


\bibitem[Golatkar et~al\mbox{.}(2020a)]%
        {golatkar2020eternal}
\bibfield{author}{\bibinfo{person}{Aditya Golatkar}, \bibinfo{person}{Alessandro Achille}, {and} \bibinfo{person}{Stefano Soatto}.} \bibinfo{year}{2020}\natexlab{a}.
\newblock \showarticletitle{Eternal sunshine of the spotless net: Selective forgetting in deep networks}. In \bibinfo{booktitle}{\emph{Proceedings of the IEEE/CVF Conference on Computer Vision and Pattern Recognition}}. \bibinfo{pages}{9304--9312}.
\newblock


\bibitem[Golatkar et~al\mbox{.}(2020b)]%
        {golatkar2020forgetting}
\bibfield{author}{\bibinfo{person}{Aditya Golatkar}, \bibinfo{person}{Alessandro Achille}, {and} \bibinfo{person}{Stefano Soatto}.} \bibinfo{year}{2020}\natexlab{b}.
\newblock \showarticletitle{Forgetting outside the box: Scrubbing deep networks of information accessible from input-output observations}. In \bibinfo{booktitle}{\emph{Computer Vision--ECCV 2020: 16th European Conference, Glasgow, UK, August 23--28, 2020, Proceedings, Part XXIX 16}}. Springer, \bibinfo{pages}{383--398}.
\newblock


\bibitem[{Google Brain Team}(2018)]%
        {Tensorflow2018}
\bibfield{author}{\bibinfo{person}{{Google Brain Team}}.} \bibinfo{year}{2018}\natexlab{}.
\newblock \bibinfo{title}{TensorFlow Serving for model deployment in production}.
\newblock \bibinfo{howpublished}{\url{https://www.tensorflow.org/serving/ }}.
\newblock


\bibitem[{Google LLC}(2018)]%
        {cloud2018}
\bibfield{author}{\bibinfo{person}{{Google LLC}}.} \bibinfo{year}{2018}\natexlab{}.
\newblock \bibinfo{title}{Google Cloud}.
\newblock \bibinfo{howpublished}{\url{https://cloud.google.com/ }}.
\newblock


\bibitem[Graves et~al\mbox{.}(2021)]%
        {graves2021amnesiac}
\bibfield{author}{\bibinfo{person}{Laura Graves}, \bibinfo{person}{Vineel Nagisetty}, {and} \bibinfo{person}{Vijay Ganesh}.} \bibinfo{year}{2021}\natexlab{}.
\newblock \showarticletitle{Amnesiac machine learning}. In \bibinfo{booktitle}{\emph{Proceedings of the AAAI Conference on Artificial Intelligence}}, Vol.~\bibinfo{volume}{35}. \bibinfo{pages}{11516--11524}.
\newblock


\bibitem[Gujarati et~al\mbox{.}(2020)]%
        {DBLP:conf/osdi/GujaratiKAHKVM20}
\bibfield{author}{\bibinfo{person}{Arpan Gujarati}, \bibinfo{person}{Reza Karimi}, \bibinfo{person}{Safya Alzayat}, \bibinfo{person}{Wei Hao}, \bibinfo{person}{Antoine Kaufmann}, \bibinfo{person}{Ymir Vigfusson}, {and} \bibinfo{person}{Jonathan Mace}.} \bibinfo{year}{2020}\natexlab{}.
\newblock \showarticletitle{Serving DNNs like Clockwork: Performance Predictability from the Bottom Up}. In \bibinfo{booktitle}{\emph{14th {USENIX} Symposium on Operating Systems Design and Implementation, {OSDI} 2020, Virtual Event, November 4-6, 2020}}. \bibinfo{publisher}{{USENIX} Association}, \bibinfo{pages}{443--462}.
\newblock


\bibitem[Gunasekaran et~al\mbox{.}(2022)]%
        {gunasekaran2022cocktail}
\bibfield{author}{\bibinfo{person}{Jashwant~Raj Gunasekaran}, \bibinfo{person}{Cyan~Subhra Mishra}, \bibinfo{person}{Prashanth Thinakaran}, \bibinfo{person}{Bikash Sharma}, \bibinfo{person}{Mahmut~Taylan Kandemir}, {and} \bibinfo{person}{Chita~R. Das}.} \bibinfo{year}{2022}\natexlab{}.
\newblock \showarticletitle{Cocktail: A Multidimensional Optimization for Model Serving in Cloud}. In \bibinfo{booktitle}{\emph{19th USENIX Symposium on Networked Systems Design and Implementation (NSDI 22)}}. \bibinfo{address}{Renton, WA}, \bibinfo{pages}{1041--1057}.
\newblock
\showISBNx{978-1-939133-27-4}


\bibitem[Guo et~al\mbox{.}(2019)]%
        {guo2019certified}
\bibfield{author}{\bibinfo{person}{Chuan Guo}, \bibinfo{person}{Tom Goldstein}, \bibinfo{person}{Awni Hannun}, {and} \bibinfo{person}{Laurens Van Der~Maaten}.} \bibinfo{year}{2019}\natexlab{}.
\newblock \showarticletitle{Certified data removal from machine learning models}.
\newblock \bibinfo{journal}{\emph{arXiv preprint arXiv:1911.03030}} (\bibinfo{year}{2019}).
\newblock


\bibitem[Gupta et~al\mbox{.}(2020)]%
        {gupta2020deeprecsys}
\bibfield{author}{\bibinfo{person}{Udit Gupta}, \bibinfo{person}{Samuel Hsia}, \bibinfo{person}{Vikram Saraph}, \bibinfo{person}{Xiaodong Wang}, \bibinfo{person}{Brandon Reagen}, \bibinfo{person}{Gu-Yeon Wei}, \bibinfo{person}{Hsien-Hsin~S Lee}, \bibinfo{person}{David Brooks}, {and} \bibinfo{person}{Carole-Jean Wu}.} \bibinfo{year}{2020}\natexlab{}.
\newblock \showarticletitle{Deeprecsys: A system for optimizing end-to-end at-scale neural recommendation inference}. In \bibinfo{booktitle}{\emph{2020 ACM/IEEE 47th Annual International Symposium on Computer Architecture (ISCA)}}. IEEE, \bibinfo{pages}{982--995}.
\newblock


\bibitem[Hauswald et~al\mbox{.}(2015)]%
        {hauswald2015sirius}
\bibfield{author}{\bibinfo{person}{Johann Hauswald}, \bibinfo{person}{Michael~A Laurenzano}, \bibinfo{person}{Yunqi Zhang}, \bibinfo{person}{Cheng Li}, \bibinfo{person}{Austin Rovinski}, \bibinfo{person}{Arjun Khurana}, \bibinfo{person}{Ronald~G Dreslinski}, \bibinfo{person}{Trevor Mudge}, \bibinfo{person}{Vinicius Petrucci}, \bibinfo{person}{Lingjia Tang}, {et~al\mbox{.}}} \bibinfo{year}{2015}\natexlab{}.
\newblock \showarticletitle{Sirius: An open end-to-end voice and vision personal assistant and its implications for future warehouse scale computers}. In \bibinfo{booktitle}{\emph{Proceedings of the Twentieth International Conference on Architectural Support for Programming Languages and Operating Systems}}. \bibinfo{pages}{223--238}.
\newblock


\bibitem[Hazelwood et~al\mbox{.}(2018)]%
        {hazelwood2018applied}
\bibfield{author}{\bibinfo{person}{Kim Hazelwood}, \bibinfo{person}{Sarah Bird}, \bibinfo{person}{David Brooks}, \bibinfo{person}{Soumith Chintala}, \bibinfo{person}{Utku Diril}, \bibinfo{person}{Dmytro Dzhulgakov}, \bibinfo{person}{Mohamed Fawzy}, \bibinfo{person}{Bill Jia}, \bibinfo{person}{Yangqing Jia}, \bibinfo{person}{Aditya Kalro}, {et~al\mbox{.}}} \bibinfo{year}{2018}\natexlab{}.
\newblock \showarticletitle{Applied machine learning at facebook: A datacenter infrastructure perspective}. In \bibinfo{booktitle}{\emph{2018 IEEE International Symposium on High Performance Computer Architecture (HPCA)}}. IEEE, \bibinfo{pages}{620--629}.
\newblock


\bibitem[Howard(2019)]%
        {fastai_imagenette}
\bibfield{author}{\bibinfo{person}{Jeremy Howard}.} \bibinfo{year}{2019}\natexlab{}.
\newblock \bibinfo{title}{Imagenette Dataset}.
\newblock \bibinfo{howpublished}{\url{https://github.com/fastai/imagenette}}.
\newblock


\bibitem[Huang et~al\mbox{.}(2022)]%
        {huang2022privacy}
\bibfield{author}{\bibinfo{person}{Hai Huang}, \bibinfo{person}{Zhikun Zhang}, \bibinfo{person}{Yun Shen}, \bibinfo{person}{Michael Backes}, \bibinfo{person}{Qi Li}, {and} \bibinfo{person}{Yang Zhang}.} \bibinfo{year}{2022}\natexlab{}.
\newblock \showarticletitle{On the Privacy Risks of Cell-Based NAS Architectures}. In \bibinfo{booktitle}{\emph{Proceedings of the 2022 ACM SIGSAC Conference on Computer and Communications Security}}. \bibinfo{pages}{1427--1441}.
\newblock


\bibitem[Izzo et~al\mbox{.}(2021)]%
        {izzo2021approximate}
\bibfield{author}{\bibinfo{person}{Zachary Izzo}, \bibinfo{person}{Mary~Anne Smart}, \bibinfo{person}{Kamalika Chaudhuri}, {and} \bibinfo{person}{James Zou}.} \bibinfo{year}{2021}\natexlab{}.
\newblock \showarticletitle{Approximate data deletion from machine learning models}. In \bibinfo{booktitle}{\emph{International Conference on Artificial Intelligence and Statistics}}. PMLR, \bibinfo{pages}{2008--2016}.
\newblock


\bibitem[Jia et~al\mbox{.}(2021)]%
        {jia2021intrinsic}
\bibfield{author}{\bibinfo{person}{Jinyuan Jia}, \bibinfo{person}{Xiaoyu Cao}, {and} \bibinfo{person}{Neil~Zhenqiang Gong}.} \bibinfo{year}{2021}\natexlab{}.
\newblock \showarticletitle{Intrinsic certified robustness of bagging against data poisoning attacks}. In \bibinfo{booktitle}{\emph{Proceedings of the AAAI Conference on Artificial Intelligence}}, Vol.~\bibinfo{volume}{35}. \bibinfo{pages}{7961--7969}.
\newblock


\bibitem[Koch and Soll(2023)]%
        {koch2023no}
\bibfield{author}{\bibinfo{person}{Korbinian Koch} {and} \bibinfo{person}{Marcus Soll}.} \bibinfo{year}{2023}\natexlab{}.
\newblock \showarticletitle{No matter how you slice it: Machine unlearning with SISA comes at the expense of minority classes}. In \bibinfo{booktitle}{\emph{2023 IEEE Conference on Secure and Trustworthy Machine Learning (SaTML)}}. IEEE, \bibinfo{pages}{622--637}.
\newblock


\bibitem[LeCun et~al\mbox{.}(1998)]%
        {DBLP:journals/pieee/LeCunBBH98}
\bibfield{author}{\bibinfo{person}{Yann LeCun}, \bibinfo{person}{L{\'{e}}on Bottou}, \bibinfo{person}{Yoshua Bengio}, {and} \bibinfo{person}{Patrick Haffner}.} \bibinfo{year}{1998}\natexlab{}.
\newblock \showarticletitle{Gradient-based learning applied to document recognition}.
\newblock \bibinfo{journal}{\emph{Proc. {IEEE}}} \bibinfo{volume}{86}, \bibinfo{number}{11} (\bibinfo{year}{1998}), \bibinfo{pages}{2278--2324}.
\newblock


\bibitem[Levine and Feizi(2021)]%
        {DBLP:conf/iclr/0001F21}
\bibfield{author}{\bibinfo{person}{Alexander Levine} {and} \bibinfo{person}{Soheil Feizi}.} \bibinfo{year}{2021}\natexlab{}.
\newblock \showarticletitle{Deep Partition Aggregation: Provable Defenses against General Poisoning Attacks}. In \bibinfo{booktitle}{\emph{9th International Conference on Learning Representations, {ICLR} 2021, Virtual Event, Austria, May 3-7, 2021}}. \bibinfo{publisher}{OpenReview.net}.
\newblock


\bibitem[Li and Zhang(2021)]%
        {li2021membership}
\bibfield{author}{\bibinfo{person}{Zheng Li} {and} \bibinfo{person}{Yang Zhang}.} \bibinfo{year}{2021}\natexlab{}.
\newblock \showarticletitle{Membership leakage in label-only exposures}. In \bibinfo{booktitle}{\emph{Proceedings of the 2021 ACM SIGSAC Conference on Computer and Communications Security}}. \bibinfo{pages}{880--895}.
\newblock


\bibitem[Lin et~al\mbox{.}(2023)]%
        {lin2023erm}
\bibfield{author}{\bibinfo{person}{Shen Lin}, \bibinfo{person}{Xiaoyu Zhang}, \bibinfo{person}{Chenyang Chen}, \bibinfo{person}{Xiaofeng Chen}, {and} \bibinfo{person}{Willy Susilo}.} \bibinfo{year}{2023}\natexlab{}.
\newblock \showarticletitle{Erm-ktp: Knowledge-level machine unlearning via knowledge transfer}. In \bibinfo{booktitle}{\emph{Proceedings of the IEEE/CVF Conference on Computer Vision and Pattern Recognition}}. \bibinfo{pages}{20147--20155}.
\newblock


\bibitem[Liu et~al\mbox{.}(2024)]%
        {liu2024certified}
\bibfield{author}{\bibinfo{person}{Jiaqi Liu}, \bibinfo{person}{Jian Lou}, \bibinfo{person}{Zhan Qin}, {and} \bibinfo{person}{Kui Ren}.} \bibinfo{year}{2024}\natexlab{}.
\newblock \showarticletitle{Certified minimax unlearning with generalization rates and deletion capacity}.
\newblock \bibinfo{journal}{\emph{Advances in Neural Information Processing Systems}}  \bibinfo{volume}{36} (\bibinfo{year}{2024}).
\newblock


\bibitem[Liu et~al\mbox{.}(2023)]%
        {liu2023muter}
\bibfield{author}{\bibinfo{person}{Junxu Liu}, \bibinfo{person}{Mingsheng Xue}, \bibinfo{person}{Jian Lou}, \bibinfo{person}{Xiaoyu Zhang}, \bibinfo{person}{Li Xiong}, {and} \bibinfo{person}{Zhan Qin}.} \bibinfo{year}{2023}\natexlab{}.
\newblock \showarticletitle{Muter: Machine unlearning on adversarially trained models}. In \bibinfo{booktitle}{\emph{Proceedings of the IEEE/CVF International Conference on Computer Vision}}. \bibinfo{pages}{4892--4902}.
\newblock


\bibitem[Liu et~al\mbox{.}(2022a)]%
        {liu2022ml}
\bibfield{author}{\bibinfo{person}{Yugeng Liu}, \bibinfo{person}{Rui Wen}, \bibinfo{person}{Xinlei He}, \bibinfo{person}{Ahmed Salem}, \bibinfo{person}{Zhikun Zhang}, \bibinfo{person}{Michael Backes}, \bibinfo{person}{Emiliano De~Cristofaro}, \bibinfo{person}{Mario Fritz}, {and} \bibinfo{person}{Yang Zhang}.} \bibinfo{year}{2022}\natexlab{a}.
\newblock \showarticletitle{$\{$ML-Doctor$\}$: Holistic Risk Assessment of Inference Attacks Against Machine Learning Models}. In \bibinfo{booktitle}{\emph{31st USENIX Security Symposium (USENIX Security 22)}}. \bibinfo{pages}{4525--4542}.
\newblock


\bibitem[Liu et~al\mbox{.}(2022b)]%
        {liu2022membership}
\bibfield{author}{\bibinfo{person}{Yiyong Liu}, \bibinfo{person}{Zhengyu Zhao}, \bibinfo{person}{Michael Backes}, {and} \bibinfo{person}{Yang Zhang}.} \bibinfo{year}{2022}\natexlab{b}.
\newblock \showarticletitle{Membership inference attacks by exploiting loss trajectory}. In \bibinfo{booktitle}{\emph{Proceedings of the 2022 ACM SIGSAC Conference on Computer and Communications Security}}. \bibinfo{pages}{2085--2098}.
\newblock


\bibitem[Long et~al\mbox{.}(2018)]%
        {long2018understanding}
\bibfield{author}{\bibinfo{person}{Yunhui Long}, \bibinfo{person}{Vincent Bindschaedler}, \bibinfo{person}{Lei Wang}, \bibinfo{person}{Diyue Bu}, \bibinfo{person}{Xiaofeng Wang}, \bibinfo{person}{Haixu Tang}, \bibinfo{person}{Carl~A Gunter}, {and} \bibinfo{person}{Kai Chen}.} \bibinfo{year}{2018}\natexlab{}.
\newblock \showarticletitle{Understanding membership inferences on well-generalized learning models}.
\newblock \bibinfo{journal}{\emph{arXiv preprint arXiv:1802.04889}} (\bibinfo{year}{2018}).
\newblock


\bibitem[Long et~al\mbox{.}(2020)]%
        {long2020pragmatic}
\bibfield{author}{\bibinfo{person}{Yunhui Long}, \bibinfo{person}{Lei Wang}, \bibinfo{person}{Diyue Bu}, \bibinfo{person}{Vincent Bindschaedler}, \bibinfo{person}{Xiaofeng Wang}, \bibinfo{person}{Haixu Tang}, \bibinfo{person}{Carl~A Gunter}, {and} \bibinfo{person}{Kai Chen}.} \bibinfo{year}{2020}\natexlab{}.
\newblock \showarticletitle{A pragmatic approach to membership inferences on machine learning models}. In \bibinfo{booktitle}{\emph{2020 IEEE European Symposium on Security and Privacy (EuroS\&P)}}. IEEE, \bibinfo{pages}{521--534}.
\newblock


\bibitem[Mahadevan and Mathioudakis(2021)]%
        {mahadevan2021certifiable}
\bibfield{author}{\bibinfo{person}{Ananth Mahadevan} {and} \bibinfo{person}{Michael Mathioudakis}.} \bibinfo{year}{2021}\natexlab{}.
\newblock \showarticletitle{Certifiable machine unlearning for linear models}.
\newblock \bibinfo{journal}{\emph{arXiv preprint arXiv:2106.15093}} (\bibinfo{year}{2021}).
\newblock


\bibitem[Mantelero(2013)]%
        {mantelero2013eu}
\bibfield{author}{\bibinfo{person}{Alessandro Mantelero}.} \bibinfo{year}{2013}\natexlab{}.
\newblock \showarticletitle{The EU Proposal for a General Data Protection Regulation and the roots of the ‘right to be forgotten’}.
\newblock \bibinfo{journal}{\emph{Computer Law \& Security Review}} \bibinfo{volume}{29}, \bibinfo{number}{3} (\bibinfo{year}{2013}), \bibinfo{pages}{229--235}.
\newblock


\bibitem[{Microsoft Corporation}(2018)]%
        {azure2018}
\bibfield{author}{\bibinfo{person}{{Microsoft Corporation}}.} \bibinfo{year}{2018}\natexlab{}.
\newblock \bibinfo{title}{Microsoft Azure Cloud Computing Platform and Services}.
\newblock \bibinfo{howpublished}{\url{ttps://azure.microsoft.com/en-us/ }}.
\newblock


\bibitem[Neel et~al\mbox{.}(2021)]%
        {neel2021descent}
\bibfield{author}{\bibinfo{person}{Seth Neel}, \bibinfo{person}{Aaron Roth}, {and} \bibinfo{person}{Saeed Sharifi-Malvajerdi}.} \bibinfo{year}{2021}\natexlab{}.
\newblock \showarticletitle{Descent-to-delete: Gradient-based methods for machine unlearning}. In \bibinfo{booktitle}{\emph{Algorithmic Learning Theory}}. PMLR, \bibinfo{pages}{931--962}.
\newblock


\bibitem[Netzer et~al\mbox{.}(2011)]%
        {Netzer11Reading}
\bibfield{author}{\bibinfo{person}{Yuval Netzer}, \bibinfo{person}{Tao Wang}, \bibinfo{person}{Adam Coates}, \bibinfo{person}{Alessandro Bissacco}, \bibinfo{person}{Bo Wu}, {and} \bibinfo{person}{Andrew~Y. Ng}.} \bibinfo{year}{2011}\natexlab{}.
\newblock \showarticletitle{Reading Digits in Natural Images with Unsupervised Feature Learning}. In \bibinfo{booktitle}{\emph{NIPS Workshop on Deep Learning and Unsupervised Feature Learning 2011}}.
\newblock


\bibitem[Nguyen et~al\mbox{.}(2020)]%
        {nguyen2020variational}
\bibfield{author}{\bibinfo{person}{Quoc~Phong Nguyen}, \bibinfo{person}{Bryan Kian~Hsiang Low}, {and} \bibinfo{person}{Patrick Jaillet}.} \bibinfo{year}{2020}\natexlab{}.
\newblock \showarticletitle{Variational bayesian unlearning}.
\newblock \bibinfo{journal}{\emph{Advances in Neural Information Processing Systems}}  \bibinfo{volume}{33} (\bibinfo{year}{2020}), \bibinfo{pages}{16025--16036}.
\newblock


\bibitem[{NVIDIA Corporation}(2020)]%
        {Nvidia2020}
\bibfield{author}{\bibinfo{person}{{NVIDIA Corporation}}.} \bibinfo{year}{2020}\natexlab{}.
\newblock \bibinfo{title}{NVIDIA Triton Inference Server}.
\newblock \bibinfo{howpublished}{\url{https://github.com/triton-inference-server/server }}.
\newblock


\bibitem[Office of~the Attorney~General(2023)]%
        {California2023}
\bibfield{author}{\bibinfo{person}{State of~California Office of~the Attorney~General}.} \bibinfo{year}{2023}\natexlab{}.
\newblock \bibinfo{title}{California Consumer Privacy Act (CCPA)}.
\newblock \bibinfo{howpublished}{\url{https://oag.ca.gov/privacy/ccpa}}.
\newblock


\bibitem[Peste et~al\mbox{.}(2021)]%
        {peste2021ssse}
\bibfield{author}{\bibinfo{person}{Alexandra Peste}, \bibinfo{person}{Dan Alistarh}, {and} \bibinfo{person}{Christoph~H Lampert}.} \bibinfo{year}{2021}\natexlab{}.
\newblock \showarticletitle{SSSE: Efficiently Erasing Samples from Trained Machine Learning Models}.
\newblock \bibinfo{journal}{\emph{arXiv preprint arXiv:2107.03860}} (\bibinfo{year}{2021}).
\newblock


\bibitem[Romero et~al\mbox{.}(2021)]%
        {DBLP:conf/usenix/Romero0YK21}
\bibfield{author}{\bibinfo{person}{Francisco Romero}, \bibinfo{person}{Qian Li}, \bibinfo{person}{Neeraja~J. Yadwadkar}, {and} \bibinfo{person}{Christos Kozyrakis}.} \bibinfo{year}{2021}\natexlab{}.
\newblock \showarticletitle{INFaaS: Automated Model-less Inference Serving}. In \bibinfo{booktitle}{\emph{2021 {USENIX} Annual Technical Conference, {USENIX} {ATC} 2021, July 14-16, 2021}}, \bibfield{editor}{\bibinfo{person}{Irina Calciu} {and} \bibinfo{person}{Geoff Kuenning}} (Eds.). \bibinfo{publisher}{{USENIX} Association}, \bibinfo{pages}{397--411}.
\newblock


\bibitem[{SageMaker-Machine Learning at Scale}(2018)]%
        {SageMaker2018}
\bibfield{author}{\bibinfo{person}{{SageMaker-Machine Learning at Scale}}.} \bibinfo{year}{2018}\natexlab{}.
\newblock \bibinfo{title}{Amazon SageMaker}.
\newblock \bibinfo{howpublished}{\url{https://aws.amazon.com/sagemaker}}.
\newblock


\bibitem[Sakar et~al\mbox{.}(2019)]%
        {DBLP:journals/nca/SakarPKK19}
\bibfield{author}{\bibinfo{person}{Cemal~Okan Sakar}, \bibinfo{person}{Suleyman~Olcay Polat}, \bibinfo{person}{Mete Katircioglu}, {and} \bibinfo{person}{Yomi Kastro}.} \bibinfo{year}{2019}\natexlab{}.
\newblock \showarticletitle{Real-time prediction of online shoppers' purchasing intention using multilayer perceptron and {LSTM} recurrent neural networks}.
\newblock \bibinfo{journal}{\emph{Neural Comput. Appl.}} \bibinfo{volume}{31}, \bibinfo{number}{10} (\bibinfo{year}{2019}), \bibinfo{pages}{6893--6908}.
\newblock


\bibitem[Salem et~al\mbox{.}(2019)]%
        {DBLP:conf/ndss/Salem0HBF019}
\bibfield{author}{\bibinfo{person}{Ahmed Salem}, \bibinfo{person}{Yang Zhang}, \bibinfo{person}{Mathias Humbert}, \bibinfo{person}{Pascal Berrang}, \bibinfo{person}{Mario Fritz}, {and} \bibinfo{person}{Michael Backes}.} \bibinfo{year}{2019}\natexlab{}.
\newblock \showarticletitle{ML-Leaks: Model and Data Independent Membership Inference Attacks and Defenses on Machine Learning Models}. In \bibinfo{booktitle}{\emph{26th Annual Network and Distributed System Security Symposium, {NDSS} 2019, San Diego, California, USA, February 24-27, 2019}}. \bibinfo{publisher}{The Internet Society}.
\newblock


\bibitem[Schelter et~al\mbox{.}(2021)]%
        {schelter2021hedgecut}
\bibfield{author}{\bibinfo{person}{Sebastian Schelter}, \bibinfo{person}{Stefan Grafberger}, {and} \bibinfo{person}{Ted Dunning}.} \bibinfo{year}{2021}\natexlab{}.
\newblock \showarticletitle{Hedgecut: Maintaining randomised trees for low-latency machine unlearning}. In \bibinfo{booktitle}{\emph{Proceedings of the 2021 International Conference on Management of Data}}. \bibinfo{pages}{1545--1557}.
\newblock


\bibitem[Sekhari et~al\mbox{.}(2021)]%
        {sekhari2021remember}
\bibfield{author}{\bibinfo{person}{Ayush Sekhari}, \bibinfo{person}{Jayadev Acharya}, \bibinfo{person}{Gautam Kamath}, {and} \bibinfo{person}{Ananda~Theertha Suresh}.} \bibinfo{year}{2021}\natexlab{}.
\newblock \showarticletitle{Remember what you want to forget: Algorithms for machine unlearning}.
\newblock \bibinfo{journal}{\emph{Advances in Neural Information Processing Systems}}  \bibinfo{volume}{34} (\bibinfo{year}{2021}), \bibinfo{pages}{18075--18086}.
\newblock


\bibitem[Shaya et~al\mbox{.}(2010)]%
        {shaya2010intelligent}
\bibfield{author}{\bibinfo{person}{Steven~A Shaya}, \bibinfo{person}{Neal Matheson}, \bibinfo{person}{John~Anthony Singarayar}, \bibinfo{person}{Nikiforos Kollias}, {and} \bibinfo{person}{Jeffrey~Adam Bloom}.} \bibinfo{year}{2010}\natexlab{}.
\newblock \bibinfo{title}{Intelligent performance-based product recommendation system}.
\newblock
\newblock
\newblock
\shownote{US Patent 7,809,601}.


\bibitem[Shokri et~al\mbox{.}(2017)]%
        {shokri2017membership}
\bibfield{author}{\bibinfo{person}{Reza Shokri}, \bibinfo{person}{Marco Stronati}, \bibinfo{person}{Congzheng Song}, {and} \bibinfo{person}{Vitaly Shmatikov}.} \bibinfo{year}{2017}\natexlab{}.
\newblock \showarticletitle{Membership inference attacks against machine learning models}. In \bibinfo{booktitle}{\emph{2017 IEEE symposium on security and privacy (SP)}}. IEEE, \bibinfo{pages}{3--18}.
\newblock


\bibitem[Tao et~al\mbox{.}(2024)]%
        {tao2024communication}
\bibfield{author}{\bibinfo{person}{Youming Tao}, \bibinfo{person}{Cheng-Long Wang}, \bibinfo{person}{Miao Pan}, \bibinfo{person}{Dongxiao Yu}, \bibinfo{person}{Xiuzhen Cheng}, {and} \bibinfo{person}{Di Wang}.} \bibinfo{year}{2024}\natexlab{}.
\newblock \showarticletitle{Communication Efficient and Provable Federated Unlearning}.
\newblock \bibinfo{journal}{\emph{arXiv preprint arXiv:2401.11018}} (\bibinfo{year}{2024}).
\newblock


\bibitem[Ullah et~al\mbox{.}(2021)]%
        {ullah2021machine}
\bibfield{author}{\bibinfo{person}{Enayat Ullah}, \bibinfo{person}{Tung Mai}, \bibinfo{person}{Anup Rao}, \bibinfo{person}{Ryan~A Rossi}, {and} \bibinfo{person}{Raman Arora}.} \bibinfo{year}{2021}\natexlab{}.
\newblock \showarticletitle{Machine unlearning via algorithmic stability}. In \bibinfo{booktitle}{\emph{Conference on Learning Theory}}. PMLR, \bibinfo{pages}{4126--4142}.
\newblock


\bibitem[Wang et~al\mbox{.}(2023)]%
        {wang2023inductive}
\bibfield{author}{\bibinfo{person}{Cheng-Long Wang}, \bibinfo{person}{Mengdi Huai}, {and} \bibinfo{person}{Di Wang}.} \bibinfo{year}{2023}\natexlab{}.
\newblock \showarticletitle{Inductive graph unlearning}. In \bibinfo{booktitle}{\emph{32nd USENIX Security Symposium (USENIX Security 23)}}. \bibinfo{pages}{3205--3222}.
\newblock


\bibitem[Wang et~al\mbox{.}(2022)]%
        {wang2022improved}
\bibfield{author}{\bibinfo{person}{Wenxiao Wang}, \bibinfo{person}{Alexander~J Levine}, {and} \bibinfo{person}{Soheil Feizi}.} \bibinfo{year}{2022}\natexlab{}.
\newblock \showarticletitle{Improved certified defenses against data poisoning with (deterministic) finite aggregation}. In \bibinfo{booktitle}{\emph{International Conference on Machine Learning}}. PMLR, \bibinfo{pages}{22769--22783}.
\newblock


\bibitem[Warnecke et~al\mbox{.}(2021)]%
        {warnecke2021machine}
\bibfield{author}{\bibinfo{person}{Alexander Warnecke}, \bibinfo{person}{Lukas Pirch}, \bibinfo{person}{Christian Wressnegger}, {and} \bibinfo{person}{Konrad Rieck}.} \bibinfo{year}{2021}\natexlab{}.
\newblock \showarticletitle{Machine unlearning of features and labels}.
\newblock \bibinfo{journal}{\emph{arXiv preprint arXiv:2108.11577}} (\bibinfo{year}{2021}).
\newblock


\bibitem[Warnecke et~al\mbox{.}(2023)]%
        {DBLP:conf/ndss/WarneckePWR23}
\bibfield{author}{\bibinfo{person}{Alexander Warnecke}, \bibinfo{person}{Lukas Pirch}, \bibinfo{person}{Christian Wressnegger}, {and} \bibinfo{person}{Konrad Rieck}.} \bibinfo{year}{2023}\natexlab{}.
\newblock \showarticletitle{Machine Unlearning of Features and Labels}. In \bibinfo{booktitle}{\emph{30th Annual Network and Distributed System Security Symposium, {NDSS} 2023, San Diego, California, USA, February 27 - March 3, 2023}}. \bibinfo{publisher}{The Internet Society}.
\newblock


\bibitem[Watson et~al\mbox{.}(2022)]%
        {DBLP:conf/iclr/WatsonGCS22}
\bibfield{author}{\bibinfo{person}{Lauren Watson}, \bibinfo{person}{Chuan Guo}, \bibinfo{person}{Graham Cormode}, {and} \bibinfo{person}{Alexandre Sablayrolles}.} \bibinfo{year}{2022}\natexlab{}.
\newblock \showarticletitle{On the Importance of Difficulty Calibration in Membership Inference Attacks}. In \bibinfo{booktitle}{\emph{The Tenth International Conference on Learning Representations, {ICLR} 2022, Virtual Event, April 25-29, 2022}}. \bibinfo{publisher}{OpenReview.net}.
\newblock
\urldef\tempurl%
\url{https://openreview.net/forum?id=3eIrli0TwQ}
\showURL{%
\tempurl}


\bibitem[Wu et~al\mbox{.}(2020)]%
        {wu2020deltagrad}
\bibfield{author}{\bibinfo{person}{Yinjun Wu}, \bibinfo{person}{Edgar Dobriban}, {and} \bibinfo{person}{Susan Davidson}.} \bibinfo{year}{2020}\natexlab{}.
\newblock \showarticletitle{Deltagrad: Rapid retraining of machine learning models}. In \bibinfo{booktitle}{\emph{International Conference on Machine Learning}}. PMLR, \bibinfo{pages}{10355--10366}.
\newblock


\bibitem[Zhang et~al\mbox{.}(2019)]%
        {DBLP:conf/usenix/ZhangYWY19}
\bibfield{author}{\bibinfo{person}{Chengliang Zhang}, \bibinfo{person}{Minchen Yu}, \bibinfo{person}{Wei Wang}, {and} \bibinfo{person}{Feng Yan}.} \bibinfo{year}{2019}\natexlab{}.
\newblock \showarticletitle{MArk: Exploiting Cloud Services for Cost-Effective, SLO-Aware Machine Learning Inference Serving}. In \bibinfo{booktitle}{\emph{2019 {USENIX} Annual Technical Conference, {USENIX} {ATC} 2019, Renton, WA, USA, July 10-12, 2019}}, \bibfield{editor}{\bibinfo{person}{Dahlia Malkhi} {and} \bibinfo{person}{Dan Tsafrir}} (Eds.). \bibinfo{publisher}{{USENIX} Association}, \bibinfo{pages}{1049--1062}.
\newblock


\bibitem[Zhang et~al\mbox{.}(2024)]%
        {Zhang_2024_WACV}
\bibfield{author}{\bibinfo{person}{Minxing Zhang}, \bibinfo{person}{Ning Yu}, \bibinfo{person}{Rui Wen}, \bibinfo{person}{Michael Backes}, {and} \bibinfo{person}{Yang Zhang}.} \bibinfo{year}{2024}\natexlab{}.
\newblock \showarticletitle{Generated Distributions Are All You Need for Membership Inference Attacks Against Generative Models}. In \bibinfo{booktitle}{\emph{Proceedings of the IEEE/CVF Winter Conference on Applications of Computer Vision (WACV)}}. \bibinfo{pages}{4839--4849}.
\newblock


\end{thebibliography}
\newpage

\appendix
\section{Deferred proof of Theorem \ref{thm.certified.inference.consistency}}
\label{appendix.proof.eraser,consistency}
\begin{proof} 
    According to the pending unlearning requests, We have the following two cases: \\
    \emph{Case I.} $\tS_k^t = \tS_k^{\tt O}$: There is no pending unlearning requests to Shard $k$, which implies $f_k^{t}(\z) = f_k^{\tt O}(\z)$ upon step $t$; \\
    \emph{Case II.} $\tS_k^t \neq \tS_k^{\tt O}$: there is one or multiple pending unlearning requests to Shard $k$ upon timestamp $t$, which has the potential of $f_k^{t}(\z) \neq f_k^{\tt O}(\z)$. 
    
    According to the prediction results, Case II can be further divided into three subcases. The aim is to estimate whether the largest possible ${\tt Count}_{y_b}^t(\z)$ will be greater than the smallest possible ${\tt Count}_{y_a}^t(\z)$, which is the most likely situation (i.e., the easiest case for inference inconsistency) to flip the final prediction result from $y_a$ to $y_b$ and the inference consistency no longer holds. For any $\{k\in[K] \big{|} \tS_k^t \neq \tS_k^{\tt O}\}$, we have three possible prediction results for $f_k^{\tt O}(\z)$ and the potential $f_k^t(\z)$:
    \begin{enumerate}[leftmargin=*]
        \item if $f_k^{\tt O}(\z) = y_a$, and the potential $f_k^t(\z) = y_b$;
        \item if $f_k^{\tt O}(\z) = y_b$, and the potential $f_k^t(\z) = y_b$;
        \item $f_k^{\tt O}(\z) \neq y_a$ \& $f_k^{\tt O}(\z) \neq y_b$, and the potential $f_k^t(\z) = y_b$.
    \end{enumerate}
    The counts of the above three subcases correspond exactly to $\gamma_1, \gamma_{2}, \gamma_{3}$, which can be collected based solely on the prediction results from $f_{1}^{\tt O}(\z),\dots,f_{K}^{\tt O}(\z)$. Then, we have the lower bound of ${\tt Count}_{y_a}^t(\z)$ to be
    \begin{equation}
        {\tt Count}_{y_a}^t(\z) \leq {\tt Count}_{y_a}^{\tt O}(\z) - \gamma_1,
    \end{equation}
    and the upper bound of ${\tt Count}_{y_b}^t(\z)$ to be
        \begin{equation}
        {\tt Count}_{y_b}^t(\z) \geq {\tt Count}_{y_b}^{\tt O}(\z) + \gamma_1 +\gamma _3.
    \end{equation}
    As a result, as long as for all $y_b \neq y_a$, ${\tt Count}_{y_a}^{\tt O}(\z) - \gamma_1 > {\tt Count}_{y_b}^{\tt O}(\z) + \gamma_1 +\gamma _3$, which can be relaxed to $2\gamma _1 + \gamma _3 \leq  {\tt Count}_{y_a}^{\tt O}(\z) - {\tt Count}_{y_b}^{\tt O}(\z) - \mathbb{I}(y<y_a) = \Gamma$, we have $F^t(\z) = F^{\tt O}(\z)$.
\end{proof}

\section{Deferred proof of Theorem \ref{theorem:sisa}}
\label{appendix:proof.sisa}

\begin{proof} 
    For a certain inference request, let its arrival time at the server be $t_i$. Assume it is between the arrival time of the $i$-th unlearning request and the $(i+1)$-th unlearning request, i.e., $t_i \in [(i-1)\cdot\frac{T}{n_u}, i\cdot\frac{T}{n_u})$. The retraining time for the $i$-th unlearning request is $[(i-1)\cdot\frac{T}{n_u},(i-1)\cdot\frac{T}{n_u}+r]$.
    
    Based on whether the time required to train a constituent model, $r$, is greater than the arrival interval of unlearning requests, $\frac{T}{n_u}$, we have two cases:
    
    \textbf{1. $r \le \frac{T}{n_u}$.}In this case, the waiting time for the prediction, $w_{sisa}$, can be expressed as:
    \begin{equation}
        w_{sisa}=
        \begin{cases}
            \begin{aligned}
                &(i-1)\cdot\frac{T}{n_u}+r-t_i, \\
                &\quad\quad\quad\quad t_i\in [(i-1)\cdot\frac{T}{n_u},(i-1)\cdot\frac{T}{n_u}+r]
            \end{aligned}\\ 
            \begin{aligned}
                &0, \\
                &\quad\quad\quad\quad t_i\in[(i-1)\cdot\frac{T}{n_u}+r, i\cdot\frac{T}{n_u}].
            \end{aligned}
        \end{cases}
    \end{equation}
    Since $t_i$ follows a uniform distribution, i.e.,
    \begin{equation}
        P(t_i)=\frac{n_u}{T}, \quad t_i\in [(i-1)\cdot\frac{T}{n_u}, i\cdot\frac{T}{n_u})
    \end{equation}
    Therefore, the expected value of $w_{sisa}$ is:
    \begin{align}
        \expect(w_{sisa})&=
        \int_{(i-1)\cdot\frac{T}{n_u}}^{(i-1)\cdot\frac{T}{n_u}+r}
        \frac{n_u}{T}((i-1)\cdot\frac{T}{n_u}+r-t_i) dt_i\\
        &=\frac{n_ur^2}{2T}.
    \end{align}
    
    \textbf{2. $r > \frac{T}{n_u}$.}In this case, the waiting time for the prediction, $w_{sisa}$, can be expressed as:
    \begin{equation}
        w_{sisa}=(i-1)\cdot\frac{T}{n_u}+r-t_i.
    \end{equation}
    The expected value of $w_{sisa}$ is
    \begin{align}
        \expect(w_{sisa})&=
        \int_{(i-1)\cdot\frac{T}{n_u}}^{i\cdot\frac{T}{n_u}}
        \frac{n_u}{T}((i-1)\cdot\frac{T}{n_u}+r-t_i) dt_i\\
        &=r-\frac{T}{2n_u}.
    \end{align}
    To conclude, we have proved that 
    \begin{equation}
        \expect(w_{sisa})=
        \begin{cases}
            \frac{n_ur^2}{2T}, &\quad r\le \frac{T}{n_u}\\
            r-\frac{T}{2n_u}, &\quad r>\frac{T}{n_u}.
        \end{cases}
    \end{equation}
\end{proof}

\section{Deferred proof of Theorem \ref{theorem:DIMP}}
\label{appendix.proof.time.eraser}

\begin{proof}
    The number of constituent models being retrained, $k_r$, when an inference request arrives can be represented as:
    \begin{equation}
        k_r=
        \begin{cases}
            \lfloor \frac{rn_u}{T}\rfloor, &\quad (t_i \mod \frac{T}{n_u})> r \mod (\frac{T}{n_u})\\
            \lfloor \frac{rn_u}{T}\rfloor+1, &\quad (t_i \mod \frac{T}{n_u})\le (r \mod \frac{T}{n_u}).
        \end{cases}
    \end{equation}
    The time difference, $t_d$, between the arrival of the inference request and the nearest completion of retraining can be represented as:
    \begin{equation}
    t_d =
    \begin{cases}
        \begin{aligned}
            &\frac{T}{n_u} + (r \mod \frac{T}{n_u}) - (t_i \mod \frac{T}{n_u}), \\
            &\quad\quad\quad\quad\quad\quad (t_i \mod \frac{T}{n_u}) > (r \mod \frac{T}{n_u})
        \end{aligned}\\
        \begin{aligned}
            &(r \mod \frac{T}{n_u}) - (t_i \mod \frac{T}{n_u}), \\
            &\quad\quad\quad\quad\quad\quad (t_i \mod \frac{T}{n_u}) \le (r \mod \frac{T}{n_u}).
        \end{aligned}
    \end{cases}
\end{equation}
    For each inference request, a maximum of $k_r$ certification judgments will occur, which are when the inference request arrives and when each constituent model finishes retraining (excluding the last finished constituent model, because when all retraining constituent models are completed, the prediction result must be certified).
    
    Therefore, the expected waiting time, $w_{dimp}$, for the prediction can be represented as:
    \begin{align}
        \expect(w_{dimp})=
        &\sum_{i=1}^{k-1}((1-p_{uc})p_{uc}^i\cdot(t_d+\frac{(i-1)T}{n_u}))\\
        &+p_{uc}^k(t_d+\frac{(k-1)T}{n_u}) + (1-p_{uc})\cdot 0\\
        =&p_{uc}\cdot t_d+\sum_{i=2}^{k-1}((1-p_{uc})p_{uc}^i\cdot \frac{(i-1)T}{n_u})\\
        &+p_{uc}^k\frac{(k-1)T}{n_u}.
    \end{align}
    This expression can hardly be simplified into a more intuitive form, but we can find a relatively concise upper bound for the expression: if the prediction result is uncertified when the inference request arrives, we skip all subsequent certification judgments and wait for all the retraining constituent models to complete their retraining before returning the prediction result. Obviously, this would result in a strictly longer waiting time, which is the same as the waiting time of SISA. Therefore, we can derive that:
    \begin{align}
        \expect(w_{dimp})&\le
        (1-p_{uc})\cdot 0+p_{uc}\cdot \expect(w_{sisa})\\&=
        \begin{cases}
            p_{uc}\cdot \frac{n_ur^2}{2T}, &\quad r\le \frac{T}{n_u}\\
            p_{uc}\cdot (r-\frac{T}{2n_u}), &\quad r>\frac{T}{n_u}.
        \end{cases}
    \end{align}
\end{proof}

\section{Straightforward adaptation from \texorpdfstring{\cite{DBLP:conf/iclr/0001F21}}{[DBLP:conf/iclr/0001F21]}}
\label{appendix.straitforward}

\begin{theorem} \label{thm.straitforward.certification} \textbf{(Straightforward adaptation from \cite{DBLP:conf/iclr/0001F21})}
Under the same setting with Theorem \ref{thm.certified.inference.consistency}, let the number of impacted shards be $\gamma$ as defined below,
\begin{equation}
    \gamma := |\{k\in[K] \Big{|} \tS_k^t \neq \tS_k^{\tt O}\}|.
\end{equation}
The model can respond to the inference request of $\z$ without retraining, while ensuring the prediction is exactly the same as the otherwise retrained prediction, i.e., $F^t(\z) = F^{\tt O}(\z) = y_a$, as long as the following condition holds: $\forall y\neq y_a$, 
\begin{equation}
    2\gamma \leq \Gamma,
\end{equation}
where $\Gamma  = \max_{\forall y\neq y_a}\{ {\tt Count}_{y_a}^{\tt O}(\z) - {\tt Count}_{y}^{\tt O}(\z) - \mathbb{I}(y<y_a) \}$.
\end{theorem}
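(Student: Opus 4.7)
The plan is to adapt the certified-robustness argument of \cite{DBLP:conf/iclr/0001F21} in the most direct way: since each of the $\gamma$ impacted shards has been retrained (after applying the pending unlearning requests), its new constituent model $f_k^t$ may output any label in $\mathbb{Y}$, whereas the $K-\gamma$ unaffected shards satisfy $f_k^t(\z) = f_k^{\tt O}(\z)$ trivially. The certification therefore reduces to a worst-case vote-shift analysis in which every impacted shard is treated as a black box whose label can change arbitrarily.

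First I would partition $[K]$ into the stable set $\mathcal{U} := \{k : \tS_k^t = \tS_k^{\tt O}\}$ and the impacted set $\mathcal{I} := \{k : \tS_k^t \neq \tS_k^{\tt O}\}$ with $|\mathcal{I}| = \gamma$. For any label $y$, the count ${\tt Count}_{y}^t(\z)$ decomposes as a contribution from $\mathcal{U}$ (which is identical to its counterpart under $F^{\tt O}$ restricted to $\mathcal{U}$) plus a contribution from $\mathcal{I}$ lying in $[0,\gamma]$. From this decomposition I would derive two uniform bounds: a lower bound ${\tt Count}_{y_a}^t(\z) \geq {\tt Count}_{y_a}^{\tt O}(\z) - \gamma$ (every impacted shard could have voted for $y_a$ originally and now abandoned it) and an upper bound ${\tt Count}_{y_b}^t(\z) \leq {\tt Count}_{y_b}^{\tt O}(\z) + \gamma$ for any competitor $y_b \neq y_a$ (every impacted shard could now vote for $y_b$).

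Next I would invoke the tie-breaking rule (smaller index wins) to state the precise condition under which $F^t(\z) = y_a$ is preserved, namely $\mathrm{Count}_{y_a}^t(\z) \geq \mathrm{Count}_{y_b}^t(\z) + \mathds{1}(y_b < y_a)$ for every $y_b \neq y_a$. Substituting the two bounds gives
\begin{equation*}
{\tt Count}_{y_a}^{\tt O}(\z) - \gamma \;\geq\; {\tt Count}_{y_b}^{\tt O}(\z) + \gamma + \mathds{1}(y_b < y_a),
\end{equation*}
which rearranges to $2\gamma \leq {\tt Count}_{y_a}^{\tt O}(\z) - {\tt Count}_{y_b}^{\tt O}(\z) - \mathds{1}(y_b < y_a)$, i.e.\ $2\gamma \leq \Gamma$ when this holds simultaneously for every competitor (so $\Gamma$ is taken as the worst-case, tightest margin across $y_b$).

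The analysis is routine rather than hard; the only subtlety is treating the tie-breaking indicator $\mathds{1}(y_b < y_a)$ correctly so that the certified condition is tight. The main conceptual point of the proof is precisely what makes this bound loose compared to Theorem~\ref{thm.certified.inference.consistency}: the argument discards all information about what each impacted shard previously predicted, lumping together (i) shards that previously voted $y_a$, (ii) shards that previously voted $y_b$, and (iii) shards that voted for something else, and pessimistically assuming all $\gamma$ of them swing toward $y_b$. This is exactly what the refined splitting into $\gamma_1, \gamma_2, \gamma_3$ avoids, and it is what motivates the comparison $2\gamma_1 + 2\gamma_2 + 2\gamma_3 \leq \Gamma$ versus $2\gamma_1 + \gamma_3 \leq \Gamma$ discussed immediately after Theorem~\ref{thm.certified.inference.consistency}.
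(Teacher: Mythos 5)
Your proposal is correct and follows essentially the same route as the paper's own proof: the stable/impacted partition, the worst-case bounds ${\tt Count}_{y_a}^t(\z) \geq {\tt Count}_{y_a}^{\tt O}(\z) - \gamma$ and ${\tt Count}_{y_b}^t(\z) \leq {\tt Count}_{y_b}^{\tt O}(\z) + \gamma$, the tie-breaking indicator, and the rearrangement to $2\gamma \leq \Gamma$ are all the same steps the paper takes, and your closing comparison to the fine-grained condition mirrors the paper's remark as well. The only minor point is that for the condition to be sufficient against every competitor the margin should be taken as a minimum over $y_b$ rather than the maximum written in the statement; your reading (``worst-case, tightest margin'') matches the intent, and the paper's own statement shares the same quantifier quirk.
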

\begin{proof} \textbf{(Straightforward adaptation from \cite{DBLP:conf/iclr/0001F21})}
Our aim is to prove that, as long as $\gamma \leq \Gamma$, we certify that $F^t(\z) = F^{\tt O}(\z)$, i.e., the original model will provide exactly the same prediction label with the otherwise retrained model. Consequently, it allows for serving the inference request on $\z$ without executing the actual data forgotten requests of $\UR^t$ by retraining the model. 
There are two cases: 1) $\tS_k^t = \tS_k^{\tt O}$, there is no data deletion in Shard $k$, which implies $f_k^{t}(\z) = f_k^{\tt O}(\z)$ upon step $t$; 2) $\tS_k^t \neq \tS_k^{\tt O}$, there is one or multiple data deletions in Shard $k$ upon step $t$, which can lead to $f_k^{t}(\z) \neq f_k^{\tt O}(\z)$. Thus, by Case 2, there are at most $\tau$ constituent classifiers that can have $f_k^{t}(\z) \neq f_k^{\tt O}(\z)$. For any $y\in\mathbb{Y}$, we have
\begin{equation}
    |{\tt Count}_{y}^{\tt O}(\z) - {\tt Count}_{y}^t(\z) | \leq \gamma.
\end{equation}
Let $y_a = F^{\tt O}(\z)$. As long as the following condition holds
\begin{align}
    {\tt Count}_{y_a}^t(\z) > {\tt Count}_{y}^t(\z), & \text{if } y < y_a  \\
    {\tt Count}_{y_a}^t(\z) \geq {\tt Count}_{y}^t(\z), & \text{if } y > y_a, 
\end{align}
which can be equivalently described by 
\begin{equation}
\label{eq.unlearned.count.condition}
    \forall y\neq y_a,\ {\tt Count}_{y_a}^t(\z) \geq {\tt Count}_{y}^t(\z) + \mathbb{I}(y<y_a),
\end{equation}
we will have  $F^t(\z) = y_a$. Thus, we have $F^t(\z) = y_a$ if
\begin{equation}
    {\tt Count}_{y_a}^t(\z) \geq {\tt Count}_{y_a}^{\tt O}(\z) - \gamma 
\end{equation}
\begin{equation}
    {\tt Count}_{y}^{\tt O}(\z) + \gamma \geq {\tt Count}_{y}^t(\z)
\end{equation}
In order to have eq.(\ref{eq.unlearned.count.condition}), it suffices to ensure
\begin{equation}
    {\tt Count}_{y_a}^{\tt O}(\z) - \gamma  \geq {\tt Count}_{y}^{\tt O}(\z) + \gamma + \mathbb{I}(y<y_a).
\end{equation}
As a result, as long as $\gamma \leq \Gamma$ with $\Gamma = \frac{1}{2}({\tt Count}_{y_a}^{\tt O}(\z) - {\tt Count}_{y}^{\tt O}(\z) - \mathbb{I}(y<y_a))$, we have $F^t(\z) = F^{\tt O}(\z)$.
\end{proof}

Theorem \ref{thm.certified.inference.consistency} draws inspiration from the certified robustness defense against data poisoning attacks \cite{DBLP:conf/iclr/0001F21,wang2022improved,jia2021intrinsic}. However, straightforwardly applying the certified robustness from \cite{DBLP:conf/iclr/0001F21} results in a worse certification condition. In other words, \projectname ~would no longer certify on inference samples it should have.
In fact, Theorem \ref{thm.straitforward.certification} in Appendix \ref{appendix.straitforward} can be regarded as having the certification condition $2\gamma_1 + 2\gamma _2 + 2\gamma _3 \leq \Gamma$, which is apparently more strict than ours $2\gamma _1 + 0\gamma _2 + 1\gamma _3 \leq \Gamma$. As a result, our certification analysis is tailored to the machine unlearning under MLaaS and achieves a more fine-grained certification condition.

\section{Malicious Requests and Mitigation}
\label{sec.adversary.and.mitigation}
In addition to the toy attacks in Sec.\ref{subsec.toy.threats}, we present two new attacks that can be launched by queriers and data owners through malicious requests, aiming to consume more of the server's computational resources and slow down its response time. Experimental results are shown in Appendix \ref{appendix: malicious}.

\subsection{Hard-to-Classify Inference Requests}
\partitle{Attack}
In the first attack, we consider malicious inference requests from adversarial queriers, whose goal is to increase the overall inference latency by triggering the unlearning mechanism more frequently.
Adversarial queriers could submit numerous inference samples that are hard to classify, which are more likely to yield different prediction results across constituent models, increasing the likelihood of inconsistency, leading to additional unlearning execution. 
In particular, we consider malicious inference requests to contain random noise data.

%

\partitle{Mitigation}
There are two potential strategies for the server to mitigate the threat posed by hard-to-classify inference requests: detection before inference and discard after inference. For the former, the server screens all inference samples to detect whether they are malicious based on a predefined detection rule. For the latter, the server still runs the inference pipeline for the sample but discards it if the confidence score (e.g., softmax score) of the prediction result is exceptionally low. The discarded requests would be responded to with predefined unable-of-service message and won't trigger any unlearning execution.


\subsection{Scattered Unlearnings Requests}

\partitle{Attack}
In the second attack, we consider malicious unlearning requests from adversarial data owners, whose goal is to waste the server's computation on unlearning execution.
The server saves more computation on unlearning mechanism if more pending unlearning requests are processed all at once by a single unlearning execution. 
Adversarial data owners could collude and coordinate with each other to submit unlearning requests that are more scattered in timestamps and shards, which can hardly be executed together. 
This way, they increase the chance that a later unlearning request just arrived when the most recent unlearning execution finishes, which however could have been processed together with an earlier unlearning request. 
In particular, we consider malicious unlearning requests in a rolling manner: the adversarial data owners have to ensure that the $i$-th submitted unlearning request belongs to the $\{i\mod K\}$-th shard, where $K$ is the total number of shards. In other words, two unlearning request which belongs to the same shard that can be processed by a single update are placed as far as possible.

\partitle{Mitigation}
The server should prevent the attackers from establishing the correspondence between their training samples and the assigned shards
by shuffling and randomly dividing the training dataset into shards according to non-fixed hash functions. In this way, the adversary would have difficulty determining the location of their data.

\section{Additional Experiments} 
\label{appendix. experiments}

\begin{figure}[!t]
    \centering
    \includegraphics[width=0.45\textwidth]{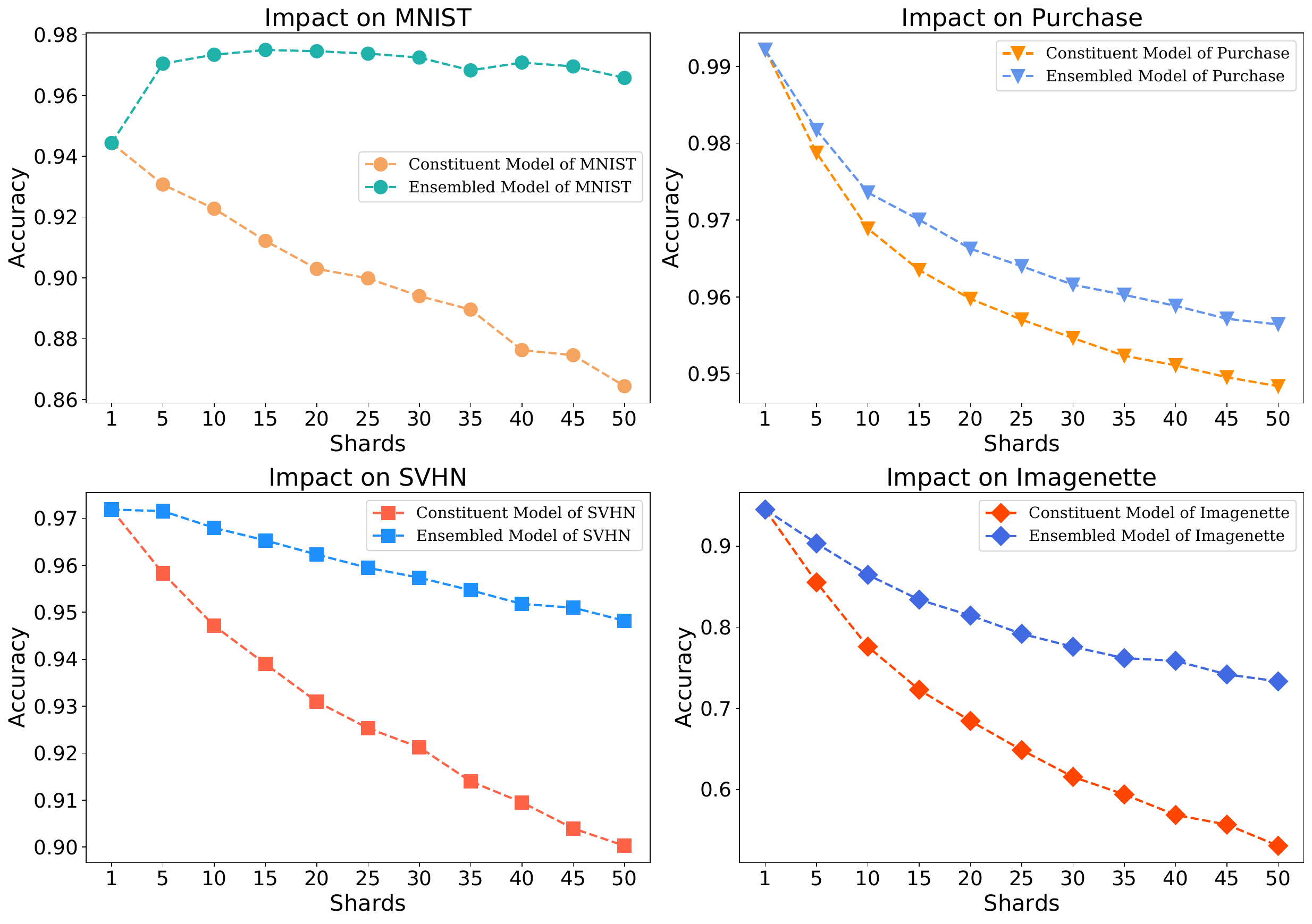}
    \caption{Evaluation on Model Accuracy with Different Number of Shards.}
    \label{fig:accuracy}
    \vspace{-1em}
\end{figure}

\subsection{Model Accuracy}
\label{appendix. accuracy}
We report the model accuracy with varying numbers of shards in Figure \ref{fig:accuracy}.
The accuracy of all models decreases as the number of shards increases.
Although each single constituent model exhibits a significant downward trend, the decline in the performance of the ensembled model is considerably more slight.
The performance of \projectname ~in Table \ref{tab: overview} varies across different datasets. Firstly, the larger the model, the longer the unlearning update takes, leading to longer AWT. 
Secondly, for easily classifiable datasets or models with higher accuracy, such as Purchase and Mnist, different constituent models are more likely to give consistent prediction results, resulting in a lower probability of uncertified prediction results. However, for Resnet-18 trained on ImageNette, which has lower accuracy, uncertified prediction results are more likely, triggering more frequent unlearning updates and longer waiting times.


\begin{figure}[!t]
    \centering
    \includegraphics[width=0.45\textwidth]{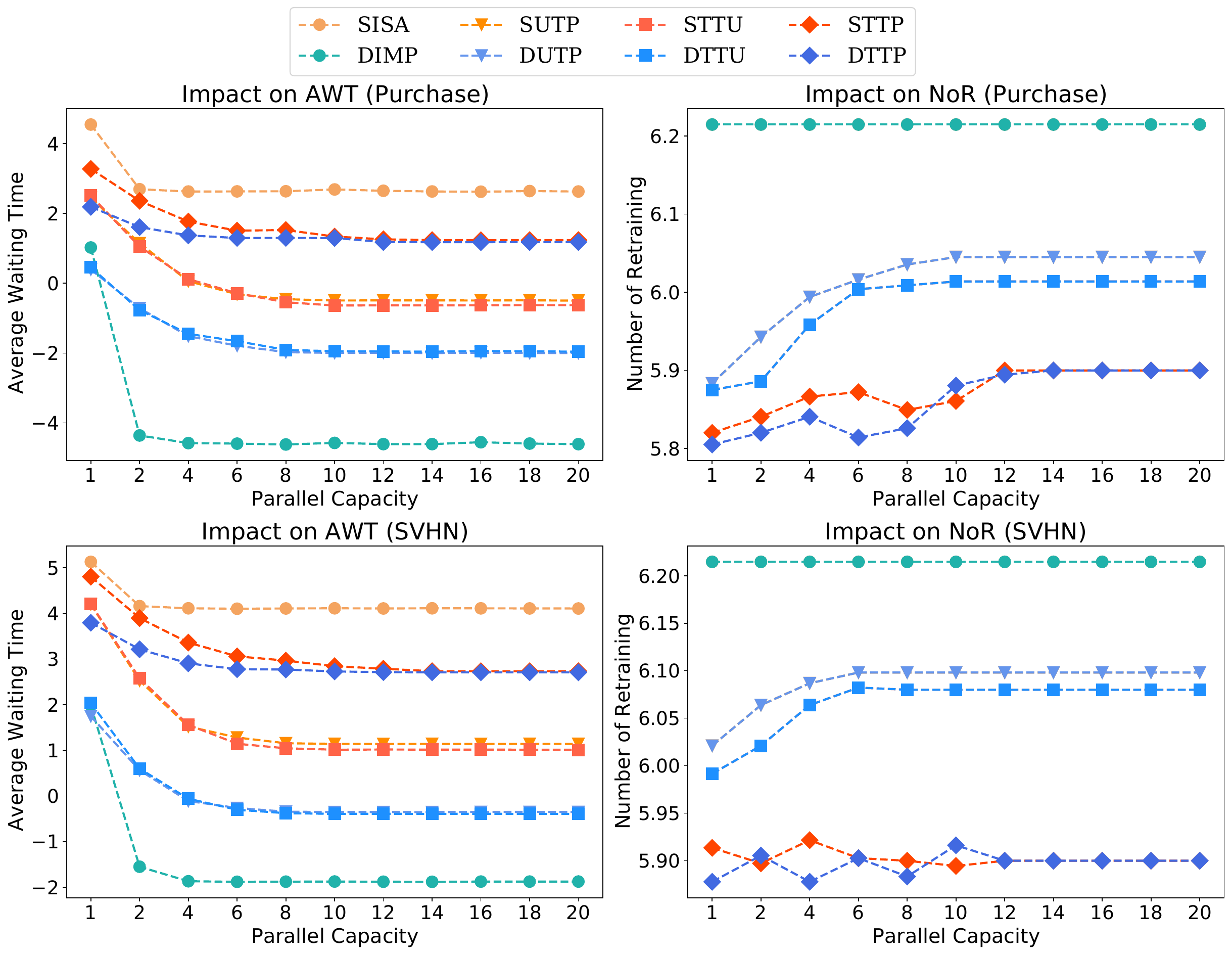}
    \caption{Evaluation on Parallel Capacity.}
    \label{fig:parameter_parallel}
    \vspace{-1em}
\end{figure}

\subsection{Evaluation of Parallel Capacity}
\label{appendix.parallel.capacity}
Considering the limited computing resources available for the server to retrain constituent models, we set a parallel capacity parameter representing the maximum number of constituent models that the server can retrain simultaneously.
As shown in Figure \ref{fig:parameter_parallel},
the inference latency for all methods is negatively correlated with parallel capacity.
However, once the parallel capacity exceeds half the number of constituent models, further increasing the parallel capacity has minimal effect.
This is because when the number of constituent models on the unlearning waiting list exceeds half of the total, all prediction results must be uncertified.
In this case, the unlearning update will start immediately, regardless of the method used. Therefore, it is rare for the server to have to retrain more than half of the constituent models simultaneously.
The NoR for most methods increases with the parallel capacity. 
This is because when the unlearning update is triggered, it is usually sufficient to retrain only part of the constituent models in the unlearning waiting list to make the prediction results certified. However, if the parallel capacity exceeds the number of models needing retraining, a few more models in the unlearning waiting list will be retrained in advance to fully utilize the server's computing resources. 

\begin{figure}[!t]
    \centering
    \includegraphics[width=0.45\textwidth]{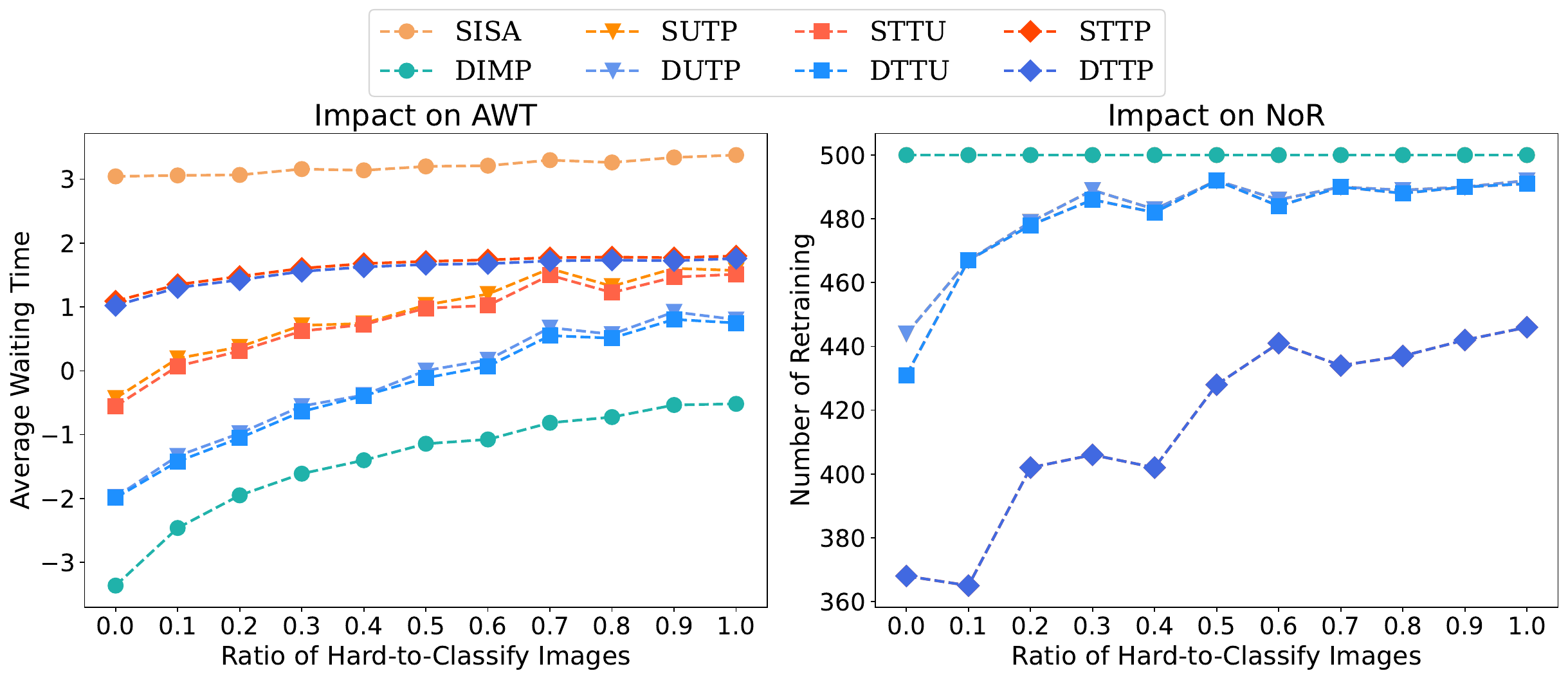}
    \caption{Evaluation on Hard-to-Classify Inference Requests.}
    \label{fig:counter_prediction}
    \vspace{-1em}
\end{figure}

\begin{figure}[!t]
    \centering
    \includegraphics[width=0.45\textwidth]{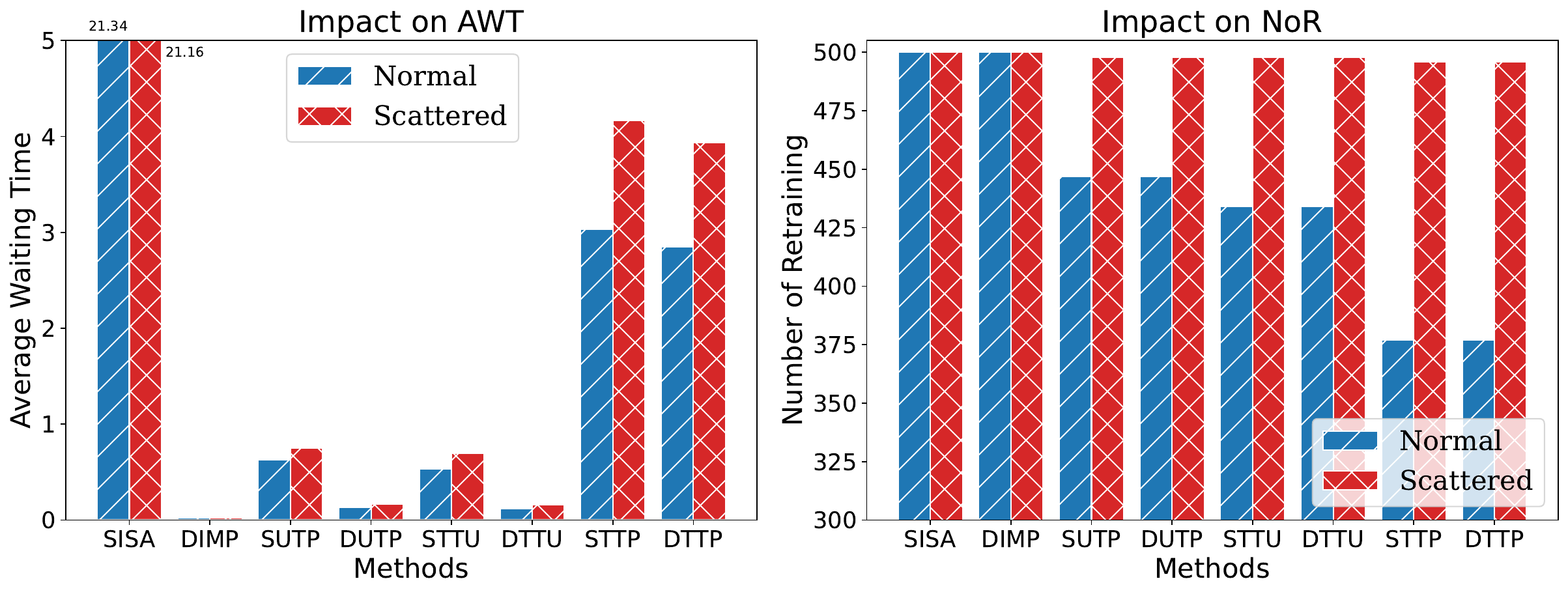}
    \caption{Evaluation of Scattered Unlearning Requests.}
    \label{fig:conuter_unlearn}
     \vspace{-1em}
\end{figure}

\subsection{Malicious Inference and Unlearning Requests}
\label{appendix: malicious}

\partitle{Hard-to-Classify Inference Requests}
We replace the test data of some inference requests with noisy images. 
The replacement ratio ranged from 10\% to 100\%. 
The experimental results in Figure \ref{fig:counter_prediction} demonstrate that submitting randomly generated inference requests is more likely to yield uncertified prediction results, which will trigger unlearning updates and increase inference latency. 
This attack can be mitigated by detecting and rejecting such requests.

\partitle{Scattered Unlearning Requests}
We compare the performance of \projectname ~with normal and scattered unlearning requests in Figure \ref{fig:conuter_unlearn}, which reveals that scattered unlearning requests can increase inference latency and cause computation overhead to approach the theoretical upper limit. 
Defending against such attacks can be achieved by shuffling the training dataset to prevent attackers from knowing the correspondence between data and shard indexes.
\balance
\end{document}
\endinput